\pgfplotsset{compat=1.18}  
\def\endfigure{\end@float}
\def\endtable{\end@float}   
\crefname{equation}{Equation}{Equations}
\crefname{theorem}{Theorem}{Theorems}
\crefname{corollary}{Corollary}{Corollaries}
\crefname{lemma}{Lemma}{Lemmas}
\crefname{section}{Section}{Sections}
\crefname{table}{Table}{Tables}
\crefname{figure}{Figure}{Figures}
\crefname{algorithm}{Algorithm}{Algorithms}
\Crefname{table}{Table}{Tables}
\Crefname{figure}{Figure}{Figures}
\Crefname{algorithm}{Algorithm}{Algorithms}
\crefname{ineq}{Inequality}{Inequalities}
\Crefname{ineq}{Inequality}{Inequalities}
\crefname{prob}{Problem}{Problems}
\Crefname{prob}{Problem}{Problems}
\crefname{ass}{Assumption}{Assumptions}
\Crefname{ass}{Assumption}{Assumptions}
\newtheorem{definition}{Definition}
\newtheorem{assumption}{Assumption}
\newtheorem{theorem}{Theorem}
\newtheorem{lemma}{Lemma}
\newtheorem{corollary}{Corollary}
\DeclarePairedDelimiter\set{\{}{\}}
\newcommand{\sthat}{\text{s.t.}}
\newcommand{\epi}{\text{epi}\,}
\newcommand{\col}{\text{col}}
\newcommand{\row}{\text{row}}
\newcommand{\diag}{\text{diag}}
\newcommand{\ve}{\text{vec}}
\newcommand{\ive}{\text{vec}^{-1}}
\newcommand{\abs}[1]{\left| #1 \right|}
\newcommand{\brackets}[1]{\left( #1 \right)}
\newcommand{\dbx}{\dot{\textbf{x}}}
\newcommand{\vertex}[1]{\mathcal{V}(#1)}
\newcommand{\edge}[1]{\mathcal{E}(#1)}
\newcommand{\arc}[1]{\mathcal{A}(#1)}
\newcommand{\outneigh}[2]{\mathcal{N}_{#1}^+(#2)}
\newcommand{\inneigh}[2]{\mathcal{N}_{#1}^-(#2)}
\newcommand{\neigh}[2]{\mathcal{N}_{#1}(#2)}
\newcommand{\cA}{\mathcal{A}}
\newcommand{\cC}{\mathcal{C}}
\newcommand{\cD}{\mathcal{D}}
\newcommand{\cE}{\mathcal{E}}
\newcommand{\cG}{\mathcal{G}}
\newcommand{\cH}{\mathcal{H}}
\newcommand{\cL}{\mathcal{L}}
\newcommand{\cN}{\mathcal{N}}
\newcommand{\cO}{\mathcal{O}}
\newcommand{\cQ}{\mathcal{Q}}
\newcommand{\cR}{\mathcal{R}}
\newcommand{\cV}{\mathcal{V}}
\newcommand{\cZ}{\mathcal{Z}}
\newcommand{\bbA}{\mathbb{A}}
\newcommand{\bbL}{\mathbb{L}}
\newcommand{\bbN}{\mathbb{N}}
\newcommand{\bbP}{\mathbb{P}}
\newcommand{\bbQ}{\mathbb{Q}}
\newcommand{\bbR}{\mathbb{R}}
\newcommand{\bbS}{\mathbb{S}}
\newcommand{\bbX}{\mathbb{X}}
\newcommand{\bbY}{\mathbb{Y}}
\newcommand{\bone}{\mathbf{1}}
\newcommand{\bzero}{\mathbf{0}}
\newcommand{\bA}{\mathbf{A}}
\newcommand{\bB}{\mathbf{B}}
\newcommand{\bC}{\mathbf{C}}
\newcommand{\bD}{\mathbf{D}}
\newcommand{\bE}{\mathbf{E}}
\newcommand{\bG}{\mathbf{G}}
\newcommand{\bH}{\mathbf{H}}
\newcommand{\bI}{\mathbf{I}}
\newcommand{\bJ}{\mathbf{J}}
\newcommand{\bK}{\mathbf{K}}
\newcommand{\bM}{\mathbf{M}}
\newcommand{\bP}{\mathbf{P}}
\newcommand{\bQ}{\mathbf{Q}}
\newcommand{\bR}{\mathbf{R}}
\newcommand{\bS}{\mathbf{S}}
\newcommand{\bT}{\mathbf{T}}
\newcommand{\bU}{\mathbf{U}}
\newcommand{\bV}{\mathbf{V}}
\newcommand{\bW}{\mathbf{W}}
\newcommand{\bX}{\mathbf{X}}
\newcommand{\bY}{\mathbf{Y}}
\newcommand{\bZ}{\mathbf{Z}}
\newcommand{\ba}{\mathbf{a}}
\newcommand{\be}{\mathbf{e}}
\newcommand{\bj}{\mathbf{j}}
\newcommand{\bq}{\mathbf{q}}
\newcommand{\br}{\mathbf{r}}
\newcommand{\bu}{\mathbf{u}}
\newcommand{\bv}{\mathbf{v}}
\newcommand{\bw}{\mathbf{w}}
\newcommand{\bx}{\mathbf{x}}
\newcommand{\by}{\mathbf{y}}
\newcommand{\barbH}{\overline{\bH}}
\newcommand{\barbJ}{\overline{\bJ}}
\newcommand{\barbQ}{\overline{\bQ}}
\newcommand{\barbR}{\overline{\bR}}
\newcommand{\barbX}{\overline{\bX}}
\newcommand{\barbY}{\overline{\bY}}
\newcommand{\barbbQ}{\overline{\bbQ}}
\newcommand{\barcQ}{\overline{\cQ}}
\newcommand{\hatbj}{\widehat{\bj}}
\newcommand{\hatbJ}{\widehat{\bJ}}
\def\hatbQ{\widehat{\bQ}}
\newcommand{\hatbR}{\widehat{\bR}}
\newcommand{\hatbX}{\widehat{\bX}}
\newcommand{\hatbY}{\widehat{\bY}}
\newcommand{\hatbbQ}{\widehat{\bbQ}}
\newcommand{\tilbJ}{\widetilde{\bJ}}
\newcommand{\tilbR}{\widetilde{\bR}}
\newcommand{\barN}{\overline{N}}
\newcommand{\barn}{\overline{n}}
\newcommand{\barm}{\overline{m}}
\newcommand{\thh}{\mathrm{th}}
\newacronym{vndt}{VNDT}{Vidyasagar's Network Dissipativity Theorem}
\newacronym{kyp}{KYP Lemma}{Kalman-Yakubovixh-Popov Lemma}
\newacronym{io}{IO}{input-output}
\newacronym{osp}{OSP}{output strictly passive}
\newacronym{lti}{LTI}{linear time-invariant}
\newacronym{lmi}{LMI}{linear matrix inequality}
\newacronym{iqc}{IQC}{integral quadratic constraint}
\newacronym{admm}{ADMM}{alternating direction method of multipliers}
\newacronym{sdp}{SDP}{semidefinite programming}
\newacronym{ssa}{SSA}{sequential stability analysis}
\newacronym{uav}{UAV}{unmanned aerial vehicles}
\def\BibTeX{{\rm B\kern-.05em{\sc i\kern-.025em b}\kern-.08em
    T\kern-.1667em\lower.7ex\hbox{E}\kern-.125emX}}
\begin{document}
\title{Dissipativity-Based Distributed Stability Analysis for Networks with Heterogeneous Nonlinear Agents}
\author{Ingyu Jang, \IEEEmembership{Graduate Student Member, IEEE}, Ethan J. LoCicero, and Leila Bridgeman, \IEEEmembership{Member, IEEE}
\thanks{This paragraph of the first footnote will contain the date on 
which you submitted your paper for review. It will also contain support 
information, including sponsor and financial support acknowledgment. For 
example, ``This work was supported in part by the U.S. Department of 
Commerce under Grant 123456.'' }
\thanks{*This work is supported by ONR Grant No. N00014-23-1-2043.}
\thanks{Ingyu Jang (PhD Student), and Leila Bridgeman (Assistant Professor) are with the Department of Mechanical Engineering and Material Science, Duke University, Durham, NC 27708 USA
        (email: {\tt\small ij40@duke.edu; ljb48@duke.edu}, phone: 919-225-4215). }
\thanks{Ethan J. LoCicero (Fellow) is with TerraPower, LLC, Bellevue, WA 98008 USA
        (email: {\tt\small ethanlocicero@gmail.com}).}}

\maketitle

\begin{abstract}
Stabilizing large networks of nonlinear agents is challenging; decomposition and distributed analysis of these networks are crucial for computational tractability and information security. Vidyasagar's Network Dissipativity Theorem enables both properties concurrently in distributed network analysis. This paper explored combining it with the alternating direction methods of multipliers to develop distributed stability analysis for networks of inhomogeneous, nonlinear agents. One algorithm enhances information security by requiring agents to share only a dissipativity characterization, not a dynamical model, for stability analysis. A second algorithm further restricts this information sharing to their clique, thereby enhancing security, and can also reduce the computational burden of stability analysis if the network allows chordal decomposition. The convergence of the proposed algorithms is demonstrated, and criteria are identified for decomposable networks facilitating chordal decomposition. The effectiveness of the proposed methods is demonstrated through numerical examples involving a swarm of linearized unmanned aerial vehicles and networks beyond linear time-invariant agents.
\end{abstract}

\begin{IEEEkeywords}
Large-scale systems, network analysis and control, robust control, distributed optimization, nonlinear systems.
\end{IEEEkeywords}

\section{INTRODUCTION}

\IEEEPARstart{I}{n} recent decades, the analysis and control of large-scale multi-agent networked systems have been recognized as key challenges in system design \cite{sztipanovits2011toward}.
However, centralized control of such systems is hindered by the heavy computational and communication burdens associated with large-scale systems. 
To mitigate these limitations, various controller synthesis methods have been developed, which promote sparse inter-agent communication during online operation \cite{jovanovic2016controller,lian2018sparsity,fardad2011sparsity,lin2013design}.
Although these sparse controllers are advantageous during online operation, they pose significant difficulties, as these methods overlook intellectual property concerns \cite{isik2023impact} and involve high computational costs in the offline design and analysis phase. Distributed stability analysis techniques are needed that accommodate heterogeneous, nonlinear agents.
This paper presents methods that use \gls{vndt} \cite{vidyasagar1981input} for distributed stability analysis of networked nonlinear systems using consensus-based algorithms, which avoid sharing agent information.

Among large-scale stability analysis theories, we adopt \gls{vndt} \cite{vidyasagar1981input}, as it does not rely on specific network structures or agent homogeneity. Various \gls{io} stability theorems, including the Passivity \cite{lozano2013dissipative}, Small Gain \cite{zames1966input1}, and Conic Sector Theorems \cite{zames1966input1}, are special cases of \gls{vndt}. 
Each one establishes closed-loop \gls{io} stability of a network from agents' \textit{open-loop} dissipativity properties. Dissipativity analysis of individual agents is generally easier than analyzing interconnected dynamics; for example, bounding the gain of $N$ interconnected \gls{lti} $n$-state systems via the Bounded Real Lemma scales as $\mathcal{O}((Nn)^6)$, while finding their individual gains scales as an $N\mathcal{O}(n^6)$ \cite{nesterov1994interior}.
Moreover, analyzing their closed-loop dynamics requires a unified model framework, but their open-loop dynamics can be analyzed using completely disparate methods. Hence, \gls{vndt} can incorporate agents with varied nonlinearities, such as time delays, parametric uncertainty, and stochastic behavior, which can be characterized using dissipativity \cite{li2002delay, mahmoud2009dissipativity, haddad2022dissipativity}.
For example, the dissipativity of an agent with a reliable \gls{lti} model can be analyzed using \cite{gupta1996robust}, while agents with sufficient data can employ \cite{romer2019one}.
Consequently, extensive research has employed \gls{vndt} to analyze and control large-scale systems \cite{rojas2009dynamic,hioe2014decentralized,ghanbari2016large,liu2023incremental,hatanaka2015passivity,chopra2006passivity,chopra2012output,miyano2020continuous,stan2007analysis,yao2009passive,hirche2008stabilizing,anderson2011dynamical,meissen2015compositional,agarwal2020distributed}.

Although \gls{vndt} suits large-scale system analysis, applying it directly still requires solving a matrix inequality compatibility involving all agents' dissipativity properties \cite{rojas2009dynamic,hioe2014decentralized,ghanbari2016large,liu2023incremental}. 
Several distributed methods to satisfy this compatibility exist \cite{hatanaka2015passivity,chopra2006passivity,chopra2012output,miyano2020continuous,stan2007analysis,yao2009passive,hirche2008stabilizing,anderson2011dynamical,meissen2015compositional,agarwal2020distributed}, but most approaches~\cite{hatanaka2015passivity,chopra2006passivity,chopra2012output,miyano2020continuous,stan2007analysis,yao2009passive} impose passivity on all agents, ensuring the compatibility a priori and creating a fully decentralized analysis. However, passivity is more conservative than general dissipativity. 
Some studies \cite{hirche2008stabilizing,anderson2011dynamical} reduce this conservatism by identifying fixed dissipativity structures of the agents a priori, but infinitely many valid dissipativity characterizations of an agent exist, making the optimal choice unavailable in advance. A poor selection can violate the compatibility, while the right selection would affirm it.
In \cite{agarwal2020distributed}, this was mitigated by characterizing agents and verifying compatibility sequentially, but this can still fix inopportune characterizations for agents analyzed earlier in the sequence, and the process applies only to \gls{lti} agents. Instead, \cite{meissen2015compositional} tackled the combined problem of performing local analysis and verifying network compatibility using distributed optimization. This is the strategy explored here too, and \cref{alg:01} discussed in \cref{Chap:ISNSA} specializes \cite{meissen2015compositional} from \glspl{iqc} to dissipativity, but leaves the network characterization as a design variable, where it was prescribed a priori in \cite{meissen2015compositional}. However, this doesn't totally remove the large-scale compatibility issues, so we further explore how and when network structure can enable the fully decomposed problem.

This work tackles a distributed stability analysis problem by integrating distributed optimization and \gls{vndt} to compile global dissipativity from local dissipativity. This approach is implemented using the \gls{admm}, which is well-suited for distributed algorithms with global constraints without requiring specific network topology \cite{boyd2011distributed}. Through this work, each agent independently analyzes its own dissipativity, shares its identified dissipativity parameters with neighbors, and iteratively optimizes its dissipativity parameters to ensure network-wide stability constraints. However, this integration alone, as in \cite{meissen2015compositional}, results in computational challenges and requires dissipativity parameter sharing due to the large-scale global constraint. To overcome this, we apply network decomposition, enabling a fully distributed implementation with significantly smaller, parallelizable subproblems.

The integrated approach comprises two algorithms. 
As in \cite{meissen2015compositional,ingyu2025consensus}, the first algorithm directly combines the methods, while preserving agent privacy. 
Despite enabling information secure stability analysis of large-scale systems, it leads to a network-scale matrix inequality problem to impose compatibility between agents.
The second algorithm extends \cite{meissen2015compositional,ingyu2025consensus} by using graph-theoretic concepts to lower the high computational cost, accelerate convergence speed, and reduce communication bandwidth.


Following this introduction, \cref{Chap:Preliminaries} outlines the necessary preliminaries. 
\cref{Chap:Problem Statement} introduces the main problem addressed in this paper.
\cref{Chap:ISNSA} discusses the direct distributed solution, and critically the bottlenecks in computation time.
\cref{Chap:Chordal_Decomposition} provides the methods for decomposing the high-dimensional \gls{lmi} and the second algorithm that enables a more scalable distributed implementation.
\cref{Chap:Extension to Nonlinear Systems} discusses applications to non \gls{lti} agents.
\cref{Chap:Numerical Example} demonstrates the feasibility of the proposed algorithm using a large-scale 2D swarm \glspl{uav} and nonlinear robot manipulator network.
The paper concludes with a summary and discussion of future work in \cref{Chap:Conclusion}.

\section{PRELIMINARIES} \label{Chap:Preliminaries}

\subsection{Notation}
The sets of real, natural numbers, and natural numbers up to $n$ are denoted by $\bbR$, $\bbN$, $\bbN_n$, respectively. The set of real $n{\times}m$ matrices is $\bbR^{n{\times} m}$. The set of $n{\times} n$ symmetric matrices is $\bbS^n$, with $\bbS^n_-$ denoting its negative semi-definite subset. The notation $\bA{\prec}0$ indicates that $\bA$ is negative-definite. 
The cardinality of a set $\bbA$ is denoted by $|\bbA|$. 
The $n{\times} n$ identity matrix, $n{\times} m$ matrix of ones, and $n\times m$ zero matrix are denoted $\bI_n$, $\bone_{n\times m}$, and $\bzero_{n\times m}$, respectively.

The set of square integrable functions is $\cL_{2}$. The Frobenius norm and $\cL_2$ norm are denoted by ${\|}{\cdot}{\|}_F$ and ${\|}{\cdot}{\|}_2$, respectively.  The truncation of a function $\textbf{y}(t)$ at $T$ is denoted by $\by_T(t)$, where $\by_T(t){=}\by(t)$ if $t{\leq}T$, and $\by_T(t){=}0$ otherwise. If $\|\by_T\|_2^2=\langle\by_T,\by_T\rangle{=}\int_0^{\infty}\by_T^T(t)\by_T(t)dt{<}\infty$ for all $T{\geq}0$, then $\by{\in}\cL_{2e}$, where $\cL_{2e}$ is the extended $\cL_2$ space.
The indicator function is denoted by $I_\bbA:\Omega\to\set{0,+\infty}$,where $I_\bbA(x)=0$ if $x\in\bbA$ and $I_\bbA(x)=+\infty$ otherwise for all $x\in\Omega$. 

\subsection{Vectorization}

Matrix equations can be reformulated as linear equations through vectorization. For $\bA{=}[\ba_1{\cdots} \ba_n]{\in}\bbR^{m{\times}n}$ with $\ba_i{\in}\bbR^{m\times1}$, the vectorization is given by $\ve(\bA){=}[\ba_1^T{\cdots}\ba_n^T]^T{\in}\bbR^{mn\times1}$. The inverse is ${\ive}({\ve}({\bA})){=}\bA$. 
When $\bA$ is a block diagonal matrix composed of $\bA_i$ for $i{\in}\bbN_n$, the block vectorization, $\ve_b(\bA)$, is $[\ve(\bA_1)^T{\cdots}\ve(\bA_n)^T]^T$, with inverse, $\ive_b({\cdot})$, following the same principle as the inverse of vectorization. Matrix multiplication of $\bA$, $\bB$, and $\bC$ satisfies $\ve(\textbf{ABC})=(\textbf{C}^T{\otimes}\bA)\ve(\textbf{B})$, where $\otimes$ is Kronecker product. Vectorization of the matrix transpose, $\ve(\bA^T)$, is equivalent to $\bP{\cdot}\ve(\bA)$, where $\bP$ is a permutation matrix reversing vector order.

\subsection{Graph Structure and Chordal Decomposition}

Graphs will be useful throughout this work to represent square matrix structures. 
Two types of graphs are considered, undirected graphs $\cG$ and directed graphs $\cD$.
An undirected graph, $\cG(\vertex{\cG}{,}\edge{\cG})$, is defined by its vertex set, $\vertex{\cG}{=}\bbN_N$, and edge set, $\edge{\cG}{\subseteq}\vertex{\cG}{\times}\vertex{\cG}$. 
In contrast, a directed graph, $\cD(\vertex{\cD}{,}\arc{\cD})$, is defined by its vertex set, $\vertex{\cD}{=}\bbN_N$, and arc set, $\arc{\cD}{\subseteq}\vertex{\cD}{\times}\vertex{\cD}$.

In a directed graph, each arc has a direction, meaning $(i,j){\in}\cA(\cD)$ does not imply $(j,i){\in}\arc{\cD}$, while edges $(i,j)\in\edge{\cG}$ have no directionality.
The vertices of an edge $(i,j)\in\edge{\cG}$ or an arc $(i,j)\in\arc{\cD}$ are referred to as the ends of the edge or the arc, respectively.
Vertices $i$ and $j$ of $(i,j){\in}\cE(\cG)$ are called neighbors. $\cN_\cG(i)$ denotes The set of neighbors of vertex $i$ in an undirected graph $\cG$.
If $(i,j){\in}\cA(\cD)$, vertex $i$ is said to dominate vertex $j$. In this case, vertex $i$ is the tail of the arc $(i,j)$ and vertex $j$ is its head.
The vertices that dominate a vertex $i$ are called its in-neighbors, while those dominated by the vertex are its out-neighbors.
These sets denoted by $\inneigh{\cD}{i}$ and $\outneigh{\cD}{i}$, respectively.

For any directed graph $\cD$, its underlying undirected graph $G(\cD)$ is defined on the same vertex set by replacing each arc with an edge having the same ends.
Conversely, any undirected graph $\cG$ can be converted into its associated directed graph $D(\cG)$ by replacing each edge with two oppositely oriented arcs connecting the same ends.
In this paper, the type of the graph will be specified when necessary.
Otherwise, a general graph $\cG$ is denoted as a graph with triple $(\cV(\cG),\cA(\cG),\cE(\cG))$, where $\cV(\cG)$ is the vertex set, $\cA(\cG)$ is the arc set, including both the arc from the directed and associated undirected parts, and $\cE(\cG)$ is the edge set of the undirected parts.



Set operations can be applied to graphs \cite{diestel2024graph}. For two graphs $\cG$ and $\cG'$, $\cG{\cap}\cG'{=}\big(\vertex{\cG}{\cap}\vertex{\cG'},\arc{\cG}{\cap}\arc{\cG'},\edge{\cG}\cap\edge{\cG'}\big)$. 
If $\cG{\cap}\cG'{=}\emptyset$, then $\cG$ and $\cG'$ are disjoint.
If $\vertex{\cG'}{\subseteq}\vertex{\cG}$, $\arc{\cG'}{\subseteq}\arc{\cG}$, and $\edge{\cG'}{\subseteq}\edge{\cG}$, then $\cG'$ is a subgraph of $\cG$.
$\cG$ is connected if any two vertices are linked by a sequence of arcs or edges from $\cG$, otherwise, disconnected.
A maximal connected subgraph of $\cG$ is a component of $\cG$. A disconnected graph can have multiple components.

For an undirected graph $\cG$, a clique, $\cC{\subseteq}\cG$, is a set of $i,j{\in}\cV(\cG)$ satisfying $i{\neq} j$, $(i,j){\in}\cE(\cG)$. 
It is represented as a maximal clique, $\cC_p$, if it is not a subgraph of another clique.
A cycle of length $\alpha$ is a set of pairwise distinct vertices $\set{v_1,{\dots},v_\alpha}{\subseteq}\cV(\cG)$ such that $(v_\alpha,1){\in}\cE(\cG)$ and $(v_i,v_{i+1})\in\cE(\cG)$ for $i\in\bbN_{\alpha-1}$. 
A chord is an edge connecting non-consecutive vertices within a cycle. 
An undirected graph $\cG$ is chordal if every cycle of length greater than three contains a chord.
For a directed graph $\cD$, concepts of cliques or chords are not defined.
Instead, a directed cycle of length $\alpha$ is a set of pairwise distinct vertices $\set{v_1,{\dots},v_\alpha}{\subseteq}\cV(\cG)$ such that $(v_\alpha,v_1){\in}\arc{\cD}$ and $(v_i,v_{i+1})\in\cA(\cG)$ for $i\in\bbN_{\alpha-1}$.

Breaking matrices into blocks is a key tool to link local and network-wide dynamics. The $(i,j)^\thh$ block of a matrix $\bA$ is $(\bA)_{i,j}$. If $(\bA)_{i,j}{\in}\bbR^{n_i{\times} m_j}$ and $\bA{\in}\bbR^{\sum_{i=1}^N n_i {\times} \sum_{j=1}^M m_j}$, then $\bA$ is said to be in $\bbR^{N{\times} M}$ block-wise. 
The block diagonal matrix formed by $\bA_i$ for all $i{\in}\bbX$ is $\diag(\bA_i)_{i\in{\bbX}}$. 
The row operator is defined as $\row(\bA_i)_{i{\in}\bbX}{=}[\bA_i\,{\cdots}]$, and the column operator as $\col(\bA_i)_{i{\in}\bbX}{=}[\bA_i^T\,{\cdots}]^T$, for $\bA_i$ of compatible dimensions.

A graph, $\cG$, is denoted with a script letter to indicate the (block-wise) structure of matrix $\bG$, represented in boldface using the same letter. 
Specifically, the block $(\bG)_{i,j} \neq \bzero$ if and only if $(i,j)\in\arc{\cG}$.
In this paper, this relationship is referred to as the graph $\cG$ of the matrix $\bG$.
Symmetric matrices are represented by undirected graphs.
Let $\bbS_-^n(\edge{\cG},0){=}\set*{\bG{\in}\bbS_-^n|(\bG)_{i,j}{=}\bzero\text{ if }(i,j){\notin}\edge{\cG}}$ be the set of negative semi-definite matrices structured according to an undirected graph $\cG$. The following theorem provides a useful tool for decomposing negative semi-definite block matrices.
\begin{theorem} [Chordal Block-Decomposition \cite{zheng2021chordal}] \label{thm:Chordal Decomposition}
    Let $\cZ$ be a chordal graph with maximal cliques $\set{\cC_p}_{i=1}^M$. Then, $\bZ\in\bbS_{-}^N(\edge{\cZ},0)$ (block-wise) if and only if there exist $\bZ_p\in\bbS_-^{|\vertex{\cC_p}|}$ (block-wise) for $p\in\bbN_M$ such that
    \begin{align} \label{eq:Chordal Decomposition}
        \bZ = \sum_{p=1}^M\bE_{\cC_p}^T\bZ_p\bE_{\cC_p},
    \end{align}
    where $\bE_{\cC_p}{\in}\bbR^{|\vertex{\cC_p}|\times N}$ (block-wise) is defined as $(\bE_{\cC_p})_{i,j}{=}\bI$ if $\cC_p(i){=}j$ and $(\bE_{\cC_p})_{i,j}{=}\bzero$ otherwise,
    and $\cC_p(i)$ is the $i^\thh$ vertex of $\cC_p$, sorted in natural ordering.
\end{theorem}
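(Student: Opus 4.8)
The plan is to prove the two implications separately. The ``if'' direction is essentially a one-line verification, while for the ``only if'' direction I would reduce to the classical scalar chordal decomposition theorem for semidefinite matrices (Grone--Agler--Johnson--McCullough), via a ``vertex blow-up'' of the block sparsity pattern.

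\emph{Sufficiency.} Suppose $\bZ = \sum_{p=1}^M \bE_{\cC_p}^T \bZ_p \bE_{\cC_p}$ with each $\bZ_p \preceq 0$. Each summand is negative semi-definite because congruence preserves semidefiniteness, so $\bZ \preceq 0$. Moreover, from the definition of $\bE_{\cC_p}$, the $(i,j)$ block of $\bE_{\cC_p}^T \bZ_p \bE_{\cC_p}$ can be nonzero only when both $i$ and $j$ belong to $\cV(\cC_p)$; since a clique is complete, any such pair with $i \ne j$ satisfies $(i,j) \in \cE(\cZ)$. Hence $(\bZ)_{i,j} = \bzero$ whenever $(i,j) \notin \cE(\cZ)$, i.e.\ $\bZ \in \bbS_-^N(\cE(\cZ),0)$ block-wise.

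\emph{Necessity.} Here I would replace each block vertex $i \in \cV(\cZ)$ by $n_i$ scalar vertices and build the undirected graph $\hat{\cZ}$ on these scalar vertices, joining two of them iff they come from the same block (a complete subgraph on $n_i$ vertices) or from block vertices $i,j$ with $(i,j) \in \cE(\cZ)$. The first step is to check that $\hat{\cZ}$ is chordal and that its maximal cliques are exactly the blow-ups $\hat{\cC}_p$ of the $\cC_p$: this is the standard fact that substituting a complete graph for a vertex preserves chordality, which one proves by projecting a hypothetical long chordless cycle of $\hat{\cZ}$ down to a closed walk in $\cZ$ and lifting a chord back up. The second step notes that $\bZ \in \bbS_-^N(\cE(\cZ),0)$ block-wise is precisely the statement that $\bZ$, read as an ordinary symmetric matrix, is negative semi-definite with scalar sparsity pattern $\hat{\cZ}$; applying the scalar theorem then gives $\bZ = \sum_p \hat{\bE}_{\hat{\cC}_p}^T \hat{\bZ}_p \hat{\bE}_{\hat{\cC}_p}$ with $\hat{\bZ}_p \preceq 0$. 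The third step re-blocks: since $\hat{\cC}_p$ is complete, $\hat{\bZ}_p$ is an unconstrained dense matrix on the scalar vertices of $\hat{\cC}_p$, and provided the scalar vertices are ordered so as to keep those of each block together, $\hat{\bZ}_p$ is exactly a full $|\cV(\cC_p)| \times |\cV(\cC_p)|$ block matrix $\bZ_p$ and $\hat{\bE}_{\hat{\cC}_p}$ is the block selection matrix $\bE_{\cC_p}$, which yields \eqref{eq:Chordal Decomposition}.

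The step I expect to be the main obstacle is the first one in the necessity argument: confirming that the blow-up is chordal and, more importantly, that it creates no maximal clique other than the blow-ups of the $\cC_p$, so that the cliques produced by the scalar decomposition can be re-grouped into precisely the given block cliques. An alternative route that avoids invoking the scalar theorem is a direct induction on $M$ using a clique-tree (running-intersection) ordering: choose a simplicial block vertex $v$ lying in only one maximal clique $\cC_M$, split off a term $\bE_{\cC_M}^T \bZ_M \bE_{\cC_M}$ built from a generalized (Moore--Penrose) Schur complement on the $v$-block of $\bZ$, verify that the remainder is negative semi-definite and supported on the chordal graph $\cZ$ with $v$ deleted, and recurse. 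In that route the delicate point is handling the merely semidefinite (not definite) case through the generalized Schur complement and checking the support of the remainder.
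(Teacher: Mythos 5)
This theorem is imported by the paper as a cited preliminary (from \cite{zheng2021chordal}); the paper itself gives no proof, so there is nothing internal to compare against. Your argument is correct and is essentially the standard derivation of the block version found in the cited literature: the sufficiency direction is the immediate verification you give, and the necessity direction reduces to the scalar Grone--Agler-type decomposition by blowing each block vertex up into a complete set of scalar vertices, checking that the blow-up is chordal with maximal cliques exactly the blow-ups of the $\cC_p$, and re-blocking the resulting dense clique matrices into the $\bZ_p$ and selection matrices $\bE_{\cC_p}$. The step you flag as the obstacle does go through: in a chordless cycle of length at least four in the blow-up, any two scalar vertices from the same block are adjacent and hence must be consecutive, and if two consecutive cycle vertices shared a block the next vertex would be adjacent to both, creating a chord; so all cycle vertices lie in distinct blocks, the cycle projects to a chordless cycle in $\cZ$, contradicting chordality, and maximality of the blown-up cliques follows because any scalar vertex adjacent to all of $\widehat{\cC}_p$ would extend $\cC_p$ itself. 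Your alternative clique-tree/generalized-Schur-complement route is also viable and is essentially the proof of the scalar theorem itself, so it avoids quoting that theorem at the cost of redoing its semidefinite bookkeeping.
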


\subsection{\texorpdfstring{\gls{admm}}{ADMM}} \label{Chap:ADMM}

\gls{admm} can be used for distributed optimization\cite{boyd2011distributed}. Consider the constrained optimization problem,
\begin{align} \label{eq:Constrained Optimization Problem}
\begin{split}
    \min_{\bX}\quad&f(\bX) \qquad
    \text{s.t.}\quad\bX\in\Omega,
\end{split}
\end{align}
where $\bX\in\bbR^{n\times m}$ is the ``primal" variable, $f:\bbR^{n\times m}\to \bbR$ is the objective function, and $\Omega\subseteq\bbR^{n\times m}$ is the constraint set. 
\cref{eq:Constrained Optimization Problem} can be reformulated as
\begin{align} \label{eq:Equivalent Constrained Optimization Problem}
    \begin{split}
        \min_{\bX,\bZ}\quad&f(\bX)+I_\Omega(\bZ) \qquad
        \text{s.t.}\quad\bX-\bZ=0,
    \end{split}
\end{align}
with ``clone" variable, $\bZ\in\bbR^{n\times m}$. \gls{admm} solves \cref{eq:Equivalent Constrained Optimization Problem} by iteratively solving
\begin{subequations} \label{eq:General ADMM}
    \begin{align}
        \bX^{k+1}
            &{=}\arg\min_\bX\Big(f(\bX)
                +\frac{\rho}{2}\|\bX
                    -\bZ^k
                    +\bT^k\|_F^2\Big), 
                    \label{eq:General ADMM x Update} \\
        \bZ^{k+1}
            &{=}\arg\min_\bZ\Big(I_\Omega(\bZ)
                +\frac{\rho}{2}\|\bX^{k+1}
                    -\bZ
                    +\bT^k\|_F^2\Big) \nonumber \\ 
            &{=} \Pi_\Omega(\bX^{k+1}+\bT^k), 
                    \label{eq:General ADMM z Update} \\
        \bT^{k+1}
            &{=}\bT^k+(\bX^{k+1}
                -\bZ^{k+1})
                {=}\bT^k+\bR^{k+1}
                \label{eq:General ADMM u Update}
    \end{align}
\end{subequations}
where $\bT\in\bbR^{n\times m}$ is the ``dual" variable, $k\in\bbN$ is the iteration number, $\rho>0$ is the augmented Lagrangian parameter, $\Pi_\Omega:\bbR^{n\times m}\rightarrow \Omega$ is the projection operator, and $\bR^{k+1}=\bX^{k+1}-\bZ^{k+1}$ is the residual at iteration $k+1$ \cite{boyd2011distributed}.


The following theorem establishes the convergence of \gls{admm}.
\begin{theorem}[\cite{boyd2011distributed}]\label{thm:ADMM}
    Suppose that the following hold. 
    \begin{assumption} \label[ass]{ass:As 01}
        The function $f:\bbR^{n\times m}\to\bbR\cup\set*{+\infty}$ is closed, proper, and convex, and the constraint $\Omega$ is convex.
    \end{assumption}
    \begin{assumption} \label[ass]{ass:As 02}
        The Lagrangian $\cL$ of \cref{eq:Equivalent Constrained Optimization Problem} has a saddle point. Explicitly, there exist $(\bX^\star,\bZ^\star,\bT^\star)$, for which 
        \begin{align} \label{eq:saddle_point}
            \cL(\bX^\star,\bZ^\star,\bT)\leq\cL(\bX^\star,\bZ^\star,\bT^\star)\leq\cL(\bX,\bZ,\bT^\star)
        \end{align}
    holds for all $\bX$, $\bZ$, and $\bT$.
    \end{assumption}
    If \cref{eq:Equivalent Constrained Optimization Problem} is feasible, the \gls{admm} iterates satisfy the following:
    \begin{itemize}
        \item $\bR^k\to0$ as $k\to\infty$, i.e., the iterates approach feasibility.
        \item $f(\bX^k)+I_\Omega(\bZ^k)\to f(\bX^\star)$ as $k\to\infty$, where $\bX^\star$ is a primal optimal point, i.e., the objective function of the iterates approaches the optimal value.
        \item $\bT^k\to\bT^\star$ as $k\to\infty$, where $\bT^\star$ is a dual optimal point.
    \end{itemize}
\end{theorem}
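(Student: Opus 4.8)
The plan is to run the standard potential-function (Lyapunov) argument for operator-splitting schemes. Abbreviate $g=I_\Omega$, let $p^\star=f(\bX^\star)+g(\bZ^\star)$ be the optimal value of \cref{eq:Equivalent Constrained Optimization Problem}, and introduce the nonnegative potential
\[
  V^k=\rho\frobnormsq{\bT^k-\bT^\star}+\rho\frobnormsq{\bZ^k-\bZ^\star},
\]
which gauges the distance of the dual and clone iterates from the saddle point supplied by \cref{ass:As 02}. The objective is to show $V^k$ is monotonically nonincreasing with a per-step decrease of at least $\rho\frobnormsq{\bR^{k+1}}+\rho\frobnormsq{\bZ^{k+1}-\bZ^k}$; the three claimed limits then follow from telescoping and boundedness.

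First I would sandwich the suboptimality $p^{k+1}-p^\star$, where $p^{k+1}=f(\bX^{k+1})+g(\bZ^{k+1})$. The lower bound is the right inequality of \cref{eq:saddle_point} at $(\bX^{k+1},\bZ^{k+1})$, which with $\bX^\star-\bZ^\star=\bzero$ and $\bX^{k+1}-\bZ^{k+1}=\bR^{k+1}$ reads $p^\star-p^{k+1}\le\rho\frobinnerprod{\bT^\star}{\bR^{k+1}}$. The upper bound comes from the first-order optimality conditions of the two subproblems: after folding in the dual update \cref{eq:General ADMM u Update}, \cref{eq:General ADMM x Update} says $\bX^{k+1}$ minimizes $f(\bX)+\rho\frobinnerprod{\bT^{k+1}+(\bZ^{k+1}-\bZ^k)}{\bX}$ and \cref{eq:General ADMM z Update} says $\bZ^{k+1}$ minimizes $g(\bZ)-\rho\frobinnerprod{\bT^{k+1}}{\bZ}$; by convexity (\cref{ass:As 01}) each minimizer beats $\bX^\star$, resp.\ $\bZ^\star$, and summing the two inequalities gives
\[
  p^{k+1}-p^\star\le-\rho\frobinnerprod{\bT^{k+1}}{\bR^{k+1}}-\rho\frobinnerprod{\bZ^{k+1}-\bZ^k}{\bX^{k+1}-\bX^\star}.
\]
I would also record a third ingredient: applying the $\bZ$-update optimality at iterations $k$ and $k+1$ and using monotonicity of the subdifferential of the convex $g$ yields $\frobinnerprod{\bR^{k+1}}{\bZ^{k+1}-\bZ^k}\ge0$.

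Next I would assemble the descent inequality. Adding the two sandwiching bounds cancels $p^{k+1}-p^\star$ and leaves $\frobinnerprod{\bT^{k+1}-\bT^\star}{\bR^{k+1}}+\frobinnerprod{\bZ^{k+1}-\bZ^k}{\bX^{k+1}-\bX^\star}\le0$; substituting $\bT^{k+1}=\bT^k+\bR^{k+1}$ and $\bX^{k+1}-\bX^\star=\bR^{k+1}+(\bZ^{k+1}-\bZ^k)+(\bZ^k-\bZ^\star)$, expanding every inner product, and completing squares telescopes the cross terms into $V^k-V^{k+1}$ and produces $V^{k+1}\le V^k-\rho\frobnormsq{\bR^{k+1}+(\bZ^{k+1}-\bZ^k)}$; the third ingredient then shows the cross term inside that square is nonnegative, so
\[
  V^{k+1}\le V^k-\rho\frobnormsq{\bR^{k+1}}-\rho\frobnormsq{\bZ^{k+1}-\bZ^k}.
\]
This algebraic bookkeeping is the step I expect to be the main obstacle: the inner products must expand so that every cross term cancels exactly against a piece of $V^k-V^{k+1}$, the survivors must form a manifest sum of squares, and the consecutive-iteration monotonicity inequality must enter at precisely the right point; no single estimate is deep, but the accounting is delicate.

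Finally I would harvest the conclusions. Since $V^k\ge0$ is nonincreasing it converges, and telescoping the descent inequality gives $\sum_{k\ge0}\big(\rho\frobnormsq{\bR^{k+1}}+\rho\frobnormsq{\bZ^{k+1}-\bZ^k}\big)\le V^0<\infty$, whence $\bR^k\to\bzero$ — the iterates approach feasibility — and $\bZ^{k+1}-\bZ^k\to\bzero$. Boundedness of $V^k$ bounds $\bT^k$ and $\bZ^k$, hence also $\bX^k=\bR^k+\bZ^k$; feeding $\bR^k\to\bzero$ and $\bZ^{k+1}-\bZ^k\to\bzero$ into the upper bound gives $\limsup_k(p^k-p^\star)\le0$, while the lower bound with $\bR^k\to\bzero$ gives $\liminf_k(p^k-p^\star)\ge0$, so $f(\bX^k)+I_\Omega(\bZ^k)\to p^\star$. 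For dual convergence I would pass to a convergent subsequence of the bounded sequence $\bT^k$, use $\bR^k\to\bzero$ and the subproblem optimality conditions to identify its limit as a dual optimal point, and then rerun the potential-function argument with that limit in place of $\bT^\star$, forcing the entire sequence $\bT^k$ to converge to it.
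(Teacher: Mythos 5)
This statement is quoted from Boyd et al.\ \cite{boyd2011distributed} and the paper supplies no proof of its own, so there is no internal argument to compare against; your proposal is essentially a reconstruction of the proof in the cited reference, specialized to the constraint $\bX-\bZ=\bzero$ (i.e.\ $A=I$, $B=-I$, $c=\bzero$) and the scaled dual variable $\bT$. The three ingredients you list are exactly the standard ones: the saddle-point inequality \cref{eq:saddle_point} giving $p^\star-p^{k+1}\leq\rho\frobinnerprod{\bT^\star}{\bR^{k+1}}$, the two subproblem optimality conditions (with the dual update folded in) giving the matching upper bound, and monotonicity of $\partial I_\Omega$ giving $\frobinnerprod{\bR^{k+1}}{\bZ^{k+1}-\bZ^k}\geq0$; the resulting descent inequality for $V^k$ and the telescoping conclusions for the first two bullets are correct as sketched.

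The one step that does not close as written is the dual-convergence bullet. You propose to take a convergent subsequence $\bT^{k_j}\to\widehat{\bT}$, identify $\widehat{\bT}$ as dual optimal, and then ``rerun the potential-function argument with that limit in place of $\bT^\star$.'' But your potential also contains $\rho\frobnormsq{\bZ^k-\bZ^\star}$; along the subsequence this term converges to $\rho\frobnormsq{\widehat{\bZ}-\bZ^\star}$, which need not be zero, so monotonicity of the re-centered potential only shows $\frobnormsq{\bT^k-\widehat{\bT}}$ stays below a possibly nonzero constant rather than tending to zero. The repair is standard: pass to a further subsequence along which $\bZ^{k_j}$ (hence $\bX^{k_j}=\bZ^{k_j}+\bR^{k_j}$) also converges to some $\widehat{\bZ}$, use closedness of $f$ and of $\Omega$ together with $f(\bX^k)\to p^\star$ and $\bR^k\to\bzero$ to conclude $(\widehat{\bZ},\widehat{\bZ})$ is primal optimal and $(\widehat{\bZ},\widehat{\bZ},\widehat{\bT})$ is a saddle point, and then run the descent argument with the potential centered at $(\widehat{\bT},\widehat{\bZ})$: it is nonincreasing and tends to zero along the subsequence, forcing $\bT^k\to\widehat{\bT}$ (and $\bZ^k\to\widehat{\bZ}$). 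Note this step also uses closedness of $\Omega$ (so that $I_\Omega$ is closed), which \cref{ass:As 01} as stated does not include, although the paper assumes it where the theorem is applied (e.g.\ \cref{thm:isnsa_admm}).
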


\cref{ass:As 01} can be verified using the following lemma.
\begin{lemma} [\cite{boyd2011distributed}] \label{lem:closed}
    The function $f{:}\Omega{\to}\bbR$ satisfies \cref{ass:As 01} if and only if its epigraph, $\text{epi}f{=}\set{(x,t){\in}\Omega{\times}\bbR{|}f(x)\leq t}$,
    is a closed, nonempty, convex set.
\end{lemma}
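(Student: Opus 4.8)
The plan is to unpack \cref{ass:As 01} into its three constituent conditions on a closed, proper, convex function and match each to a property of the epigraph via the standard dictionary between extended-real-valued functions and their epigraphs. Since the lemma presents $f$ as a map $\Omega\to\bbR$, hence finite on $\Omega$, I would first pass to the extended function $\bar f:\bbR^{n\times m}\to\bbR\cup\set{+\infty}$ given by $\bar f{=}f{+}I_\Omega$, i.e., $\bar f(x){=}f(x)$ for $x\in\Omega$ and $\bar f(x){=}{+}\infty$ otherwise. This is the object to which \cref{ass:As 01} actually refers, and one checks immediately that $\epi \bar f{=}\set{(x,t)\in\Omega\times\bbR|f(x)\leq t}{=}\epi f$. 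Thus it suffices to show that ``$\bar f$ closed, proper, convex and $\Omega$ convex'' is equivalent to ``$\epi f$ closed, nonempty, convex.''

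I would then prove the three equivalences in turn. For convexity: $\bar f$ is convex precisely when $\epi \bar f$ is convex, by the usual characterization of convex functions; moreover $\mathrm{dom}\,\bar f{=}\Omega$ is the image of $\epi f$ under the linear projection $(x,t)\mapsto x$, so convexity of $\epi f$ forces convexity of $\Omega$, while conversely convexity of $\bar f$ already forces its effective domain $\Omega$ to be convex. Hence ``$\bar f$ convex and $\Omega$ convex'' $\Leftrightarrow$ ``$\epi f$ convex.'' For properness: since $f$ is real-valued on $\Omega$, $\bar f$ never attains $-\infty$, so $\bar f$ is proper exactly when $\mathrm{dom}\,\bar f{=}\Omega\neq\emptyset$; and $\Omega\neq\emptyset$ iff $\epi f\neq\emptyset$, because $x\in\Omega$ yields $(x,f(x))\in\epi f$ and the first coordinate of any point of $\epi f$ lies in $\Omega$. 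For closedness: by definition (equivalently, by lower semicontinuity of $\bar f$), $\bar f$ is a closed function if and only if $\epi \bar f{=}\epi f$ is a closed set. Concatenating these three equivalences gives the lemma.

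I do not expect a genuine obstacle here; the proof is essentially bookkeeping in convex analysis. The only points that warrant care are: making explicit that $f:\Omega\to\bbR$ is finite-valued, so that the ``proper'' condition reduces cleanly to $\Omega\neq\emptyset$ and the $-\infty$ pathology cannot arise; adopting the convention that ``closed'' for $\bar f$ means epigraph-closed / lower semicontinuous, and invoking the standard fact that these notions agree for convex functions; and observing that $\mathrm{dom}\,\bar f{=}\Omega$ exactly, so projecting the epigraph recovers $\Omega$ and no hidden assumption on $\Omega$ beyond convexity and nonemptiness is needed. With these in hand, the ``if'' and ``only if'' directions both follow by reading the three equivalences in the two directions.
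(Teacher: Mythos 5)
Your proposal is correct: the equivalences you spell out (convexity of $\bar f=f+I_\Omega$ and of $\Omega$ $\Leftrightarrow$ convexity of $\epi f$, properness $\Leftrightarrow$ $\epi f\neq\emptyset$ since $f$ is finite-valued on $\Omega$, and closedness of $\bar f$ $\Leftrightarrow$ closedness of $\epi f$) are exactly the standard epigraph dictionary that this lemma rests on. The paper itself gives no proof, citing \cite{boyd2011distributed} instead, and your argument is precisely the routine convex-analysis bookkeeping behind that citation, including the correct care taken in passing from $f:\Omega\to\bbR$ to the extended function so that \cref{ass:As 01} applies.
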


\subsection{QSR-Dissipativity of Large-Scale, Multi-Agent Systems}

QSR-dissipativity, defined below, quantifies a relationship between system inputs and outputs.
\begin{definition} [QSR-Dissipativity \cite{vidyasagar1981input}] \label{def:QSR}
    Let $\bQ\in\bbS^{l}$, $\bR\in\bbS^{m}$, $\bS\in\bbR^{l\times m}$. The system $\mathscr{G}:\cL_{2e}^m\to\cL_{2e}^l$ is \textit{QSR-dissipative} if there exists $\beta\in\bbR$ such that for all $\bu\in\cL_{2e}^m$ and $T>0$,
    \begin{align} \label{eq:QSR Dissipativity}
        \int_0^T\begin{bmatrix}
            \mathscr{G}^T(\bu(t)) & \bu^T(t)
        \end{bmatrix}
        \begin{bmatrix}
            \bQ & \bS \\ \bS^T & \bR
        \end{bmatrix}
        \begin{bmatrix}
            \mathscr{G}(\bu(t)) \\ \bu(t)
        \end{bmatrix}dt\geq\beta.
    \end{align}
\end{definition}

For \gls{lti} systems, \cref{lem:KYP Lemma} can be used to prove the QSR-dissipativity of the system.
\begin{lemma}[Dissipativity Lemma \cite{gupta1996robust}] \label{lem:KYP Lemma}
    Consider an \gls{lti} system with minimal state-space realization $\Sigma{:}\dbx{=}\bA\bx{+}\bB\bu$, $\by{=}\bC\bx{+}\bD\bu$. The system is QSR-dissipative if there exists a matrix $\bP{\succ}0$ and matrices $\bQ$, $\bS$, $\bR$ such that
    \setlength{\arraycolsep}{2pt}
    \begin{align} \label{eqn:KYP Lemma}
        \begin{bmatrix}
            \bA^T\bP{+}\bP\bA{-}\bC^T\bQ\bC & \bP\bB{-}\bC^T\bS{-}\bC^T\bQ\bD \\
            \bB^T\bP{-}\bS^T\bC{-}\bD^T\bQ\bC & {-}\bR{-}\bS^T\bD{-}\bD^T\bS{-}\bD^T\bQ\bD
        \end{bmatrix}\preceq0.
    \end{align}
\end{lemma}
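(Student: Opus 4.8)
The plan is to prove the lemma by the classical storage-function argument: exhibit an explicit nonnegative storage function, verify that it obeys the differential dissipation inequality for the QSR supply rate, and integrate. First I would take the candidate storage function $V(\bx){=}\bx^T\bP\bx$, which is nonnegative since $\bP{\succ}0$. For any input $\bu{\in}\cL_{2e}^m$, minimality of the realization $\Sigma$ produces a well-defined, absolutely continuous state trajectory $\bx({\cdot})$ on $[0,T]$ started from the zero initial state implicit in the input--output operator $\mathscr{G}$, so $t{\mapsto}V(\bx(t))$ is absolutely continuous and the fundamental theorem of calculus applies.

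Next I would differentiate $V$ along the dynamics and compare it with the supply rate. Writing $\by{=}\bC\bx{+}\bD\bu{=}\mathscr{G}(\bu)$ and expanding the integrand of \cref{eq:QSR Dissipativity} as $s(\bu,\by){:=}\by^T\bQ\by{+}2\by^T\bS\bu{+}\bu^T\bR\bu$, a direct substitution gives
\[
\dV-s(\bu,\by)=\begin{bmatrix}\bx\\\bu\end{bmatrix}^{\!T}\bM\begin{bmatrix}\bx\\\bu\end{bmatrix},
\]
where $\bM$ is exactly the left-hand-side matrix of \cref{eqn:KYP Lemma}: the $\bx^T({\cdot})\bx$ terms collect to $\bA^T\bP{+}\bP\bA{-}\bC^T\bQ\bC$, the $\bx^T({\cdot})\bu$ terms to $\bP\bB{-}\bC^T\bS{-}\bC^T\bQ\bD$, and the $\bu^T({\cdot})\bu$ terms to ${-}\bR{-}\bS^T\bD{-}\bD^T\bS{-}\bD^T\bQ\bD$. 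The hypothesis $\bM{\preceq}0$ then yields $\dV(\bx(t)){\leq}s(\bu(t),\by(t))$ for almost every $t{\in}[0,T]$.

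Finally I would integrate this pointwise inequality over $[0,T]$. Since $\bx(0){=}0$ and $V{\geq}0$,
\[
\int_0^T s(\bu(t),\by(t))\,dt\geq V(\bx(T))-V(\bx(0))=V(\bx(T))\geq0,
\]
which is precisely \cref{eq:QSR Dissipativity} with the constant $\beta{=}0$ --- crucially independent of $\bu$, as \cref{def:QSR} requires. Re-expressing $s$ in the block-matrix form of \cref{def:QSR} then matches the statement verbatim.

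The only genuine obstacle is the algebraic bookkeeping in the second step: keeping the signs and the cross terms $\bC^T\bQ\bD$ and $\bC^T\bS$ straight so that the collected quadratic form agrees term-for-term with the matrix in \cref{eqn:KYP Lemma}; everything else is routine. One technical point I would flag explicitly is the initial-condition convention --- because $\mathscr{G}$ is treated as an input--output map one uses $\bx(0){=}0$, which makes the bias $\beta$ vanish; if nonzero initial states were admitted, one would instead take $\beta{=}{-}\bx(0)^T\bP\bx(0)$, still a legitimate real constant in \cref{def:QSR}.
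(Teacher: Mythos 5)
Your proof is correct, and it is the standard storage-function argument behind this lemma; the paper itself does not prove it but simply cites \cite{gupta1996robust}, so there is no in-paper proof to diverge from. Your algebra checks out: with $V(\bx)=\bx^T\bP\bx$ the quadratic form collected from $\dV - s(\bu,\by)$ is exactly the block matrix in \cref{eqn:KYP Lemma}, and integrating with $\bx(0)=\bzero$ gives \cref{eq:QSR Dissipativity} with $\beta=0$ (or $\beta=-\bx(0)^T\bP\bx(0)$ otherwise). One small remark: minimality is not what guarantees a well-defined absolutely continuous trajectory --- that follows from the variation-of-constants formula for any \gls{lti} realization; minimality is only needed for the converse direction (dissipativity of the \gls{io} map implying existence of $\bP$), which this lemma does not assert, so your invocation of it is superfluous but harmless.
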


QSR-dissiaptivity is useful for ensuring $\cL_2$-stability, defined below, 
\begin{definition}[$\cL_2$-stability \cite{vidyasagar1981input}] \label{def:L2 Stable}
    An operator $\mathscr{H}{:}\cL_{2e}{\mapsto}\cL_{2e}$ is $\cL_2$-stable if there exists a constant $\gamma{>}0$ and $\beta {\in} \bbR$ such that for all $\bu{\in}\cL_2$ and $T{>}0$,
    \begin{align}
        \|(\mathscr{H}\bu)_T\|_2\leq\gamma\|\bu_T\|_2+\beta.
    \end{align}
\end{definition}

For multi-agent systems, \gls{vndt} relates the dissipativity of each agent to the $\cL_2$ stability of the entire system.

\begin{theorem}[\gls{vndt} \cite{vidyasagar1981input}] \label{thm:VNDT}
    Consider $N$ agents, $\mathscr{G}_i{:}\cL^{m_i}_{2e}{\mapsto}\cL^{l_i}_{2e}$, where $\by_i{=}\mathscr{G}_i\be_i$ and $\be_i{=}\bu_i{+}\sum_{j=1}^N(\bH)_{i,j}\by_j$ for $i{\in}\bbN_N$, where $\bu_i{\in}\cL_{2e}^{m_i}$ and $\bH{\in}\bbR^{m_i{\times} l_i}$ satisfying $(\bH)_{i,i}{=}\bzero$ for all $i{\in}\bbN_N$. Further, suppose each agent $\mathscr{G}_i$ is $\bQ_i\bS_i\bR_i$-dissipative, let $\bu{=}\col(\bu_i)_{i\in\bbN_N}$, and $\by{=}\col(\by_i)_{i\in\bbN_N}$. Then the multiagent system $\mathscr{G}{:}\bu{\mapsto}\by$ is $\cL_2$ stable if $\barbQ(\bX){\prec} 0$, where
    \begin{align}\label{eq:barbQ}
        \barbQ(\bX)&=\bQ+\bS\bH+\bH^T\bS^T+\bH^T\bR\bH,
    \end{align}
    with $\bX{=}\diag(\bX)_{i\in\bbN_N}$, $\bX_i{=}\diag(\bQ_i,\bS_i,\bR_i)$, $\bQ{=}\diag(\bQ_i)_{i\in\bbN_N}$, $\bR{=}\diag(\bR_i)_{i\in\bbN_N}$, and $\bS{=}\diag(\bS_i)_{i\in\bbN_N}$.
\end{theorem}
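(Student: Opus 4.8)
The plan is to stack the $N$ agent dissipativity inequalities into a single matrix inequality in the stacked signals $(\by,\be)$, eliminate $\be$ using the interconnection, and then extract $\cL_2$-stability from the hypothesis $\barbQ(\bX)\prec 0$ by an absorption argument. First, \cref{def:QSR} applied to each $\mathscr{G}_i$ supplies a constant $\beta_i\in\bbR$ with
\begin{align*}
\int_0^T\begin{bmatrix}\by_i^T(t)&\be_i^T(t)\end{bmatrix}\begin{bmatrix}\bQ_i&\bS_i\\\bS_i^T&\bR_i\end{bmatrix}\begin{bmatrix}\by_i(t)\\\be_i(t)\end{bmatrix}dt\ge\beta_i,\qquad\forall\,T>0.
\end{align*}
Because $\bQ,\bS,\bR$ are the block-diagonal stacks of the $\bQ_i,\bS_i,\bR_i$, summing over $i\in\bbN_N$ and setting $\by=\col(\by_i)_{i\in\bbN_N}$, $\be=\col(\be_i)_{i\in\bbN_N}$, $\beta=\sum_{i=1}^N\beta_i$ yields the same inequality with $\by,\be,\beta$ in place of $\by_i,\be_i,\beta_i$ and the block-diagonal $\bQ,\bS,\bR$.

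Next I would eliminate $\be$. The interconnection $\be_i=\bu_i+\sum_j(\bH)_{i,j}\by_j$ stacks to the memoryless relation $\be=\bu+\bH\by$, which also holds for truncated signals, $\be_T=\bu_T+\bH\by_T$. Substituting and collecting terms (each cross term is a scalar, hence symmetric) produces exactly
\begin{align*}
\begin{bmatrix}\by^T&\be^T\end{bmatrix}\begin{bmatrix}\bQ&\bS\\\bS^T&\bR\end{bmatrix}\begin{bmatrix}\by\\\be\end{bmatrix}=\by^T\barbQ(\bX)\by+2\by^T\bM\bu+\bu^T\bR\bu,
\end{align*}
with $\barbQ(\bX)$ as in \cref{eq:barbQ} and $\bM:=\bS+\bH^T\bR$. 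Now invoke $\barbQ(\bX)\prec 0$: pick $\epsilon>0$ with $\barbQ(\bX)\preceq-\epsilon\bI$, so $\by^T\barbQ(\bX)\by\le-\epsilon\norm{\by}^2$ pointwise; bound the cross term by Young's inequality, $2\by^T\bM\bu\le\tfrac{\epsilon}{2}\norm{\by}^2+\tfrac{2}{\epsilon}\norm{\bM}^2\norm{\bu}^2$, and $\bu^T\bR\bu\le\norm{\bR}\norm{\bu}^2$. Integrating over $[0,T]$ and rearranging,
\begin{align*}
\tfrac{\epsilon}{2}\|\by_T\|_2^2\le\Bigl(\tfrac{2}{\epsilon}\norm{\bM}^2+\norm{\bR}\Bigr)\|\bu_T\|_2^2-\beta,
\end{align*}
hence $\|\by_T\|_2^2\le c\|\bu_T\|_2^2+d$ with $c=\tfrac{2}{\epsilon}\bigl(\tfrac{2}{\epsilon}\norm{\bM}^2+\norm{\bR}\bigr)$ and $d=\max\{0,-2\beta/\epsilon\}$. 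Applying $\sqrt{a+b}\le\sqrt a+\sqrt b$ for $a,b\ge0$ gives $\|\by_T\|_2\le\sqrt c\,\|\bu_T\|_2+\sqrt d$ for all $T>0$, which is precisely \cref{def:L2 Stable} with gain $\gamma=\sqrt c$ and bias $\sqrt d$.

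The quadratic-form expansion and the Young's-inequality bookkeeping are routine; the one point requiring care is the role of the \emph{strictness} of $\barbQ(\bX)\prec 0$: the spectral margin $\epsilon$ is exactly what lets the indefinite cross term $2\by^T\bM\bu$ be absorbed into $-\epsilon\norm{\by}^2$ while still leaving a strictly negative coefficient on $\|\by_T\|_2^2$ — with only $\barbQ(\bX)\preceq0$ the final rearrangement breaks down. A secondary point worth stating explicitly is well-posedness of the feedback loop: one assumes that $\bu\in\cL_{2e}$ produces $\by\in\cL_{2e}$, so every truncated $\cL_2$ norm above is finite and the manipulations are legitimate; the bound just derived then upgrades this to the stated finite-gain estimate (and, since it is uniform in $T$, to $\by\in\cL_2$ whenever $\bu\in\cL_2$).
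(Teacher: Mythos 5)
The paper does not prove this theorem: it is quoted verbatim as a classical result from Vidyasagar's 1981 work, so there is no in-paper proof to compare against. Your argument is the standard one from the literature and is correct: summing the agent-wise dissipation inequalities into a block-diagonal quadratic form, substituting the static interconnection $\be=\bu+\bH\by$ to surface exactly $\barbQ(\bX)$ from \cref{eq:barbQ}, and then using the spectral margin of $\barbQ(\bX)\prec 0$ together with Young's inequality to absorb the cross term $2\by^T(\bS+\bH^T\bR)\bu$ and obtain the finite-gain bound of \cref{def:L2 Stable}. Your two flagged caveats are exactly the right ones: the strictness of the inequality is what provides the margin $\epsilon$, and well-posedness of the loop (that $\bu\in\cL_{2e}$ yields $\by,\be\in\cL_{2e}$, so the truncated norms are finite and each agent's dissipation inequality applies to the actual interconnection signal $\be_i$) is an implicit hypothesis of the theorem statement $\mathscr{G}:\bu\mapsto\by$, just as in the original reference.
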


\subsection{Network Matrix Calculations}

We will later break $\barbQ$ from \cref{thm:VNDT} into blocks related to the network interconnections in $\bH$. The following two theorems establish useful relationships for this.

\begin{lemma} \label{thm:barbQ}
    In \cref{thm:VNDT}, $\barbQ(\bX)$ is composed of blocks associated with the $N$ agents given by
    \begin{align} 
        (\barbQ)_{i,i}&=\bQ_i{+}\hspace{-8
        pt}\sum_{k\in\outneigh{\cH}{i}}\hspace{-8
        pt}(\bH)_{k,i}^T\bR_k(\bH)_{k,i}\text{, and} \label{eq:barbQ_ii} \\
        (\barbQ)_{i,j}&=\bS_i(\bH)_{i,j}
            {+}(\bH)_{j,i}^T\bS_j^T{+}\hspace{-10pt}\sum_{k{\in}\cN_\cH^+(i,j)}\hspace{-10
        pt}(\bH)_{k,i}^T\bR_k(\bH)_{k,j}, \label{eq:barbQ_ij}
    \end{align}
    for all $i,j{\in}\bbN_N$ with $i{\neq}j$, where $\cN_\cH^+(i,j){=}\outneigh{\cH}{i}{\cap}\outneigh{\cH}{j}$.
\end{lemma}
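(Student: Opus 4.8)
The plan is to obtain \eqref{eq:barbQ_ii} and \eqref{eq:barbQ_ij} by expanding $\barbQ(\bX)=\bQ+\bS\bH+\bH^T\bS^T+\bH^T\bR\bH$ from \eqref{eq:barbQ} block by block, exploiting that $\bQ$, $\bS$, and $\bR$ are block diagonal while the diagonal blocks of $\bH$ are zero by the hypothesis of \cref{thm:VNDT}.

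First I would record the four elementary block identities. Since $\bQ$, $\bS$, $\bR$ are block diagonal, $(\bQ)_{i,j}=\bQ_i$ if $i=j$ and $\bzero$ otherwise; $(\bS\bH)_{i,j}=\bS_i(\bH)_{i,j}$; $(\bH^T\bS^T)_{i,j}=\big((\bS\bH)_{j,i}\big)^T=(\bH)_{j,i}^T\bS_j^T$; and $(\bH^T\bR\bH)_{i,j}=\sum_{k=1}^N(\bH^T)_{i,k}\bR_k(\bH)_{k,j}=\sum_{k=1}^N(\bH)_{k,i}^T\bR_k(\bH)_{k,j}$. The only thing to keep straight here is the transpose inside $\bH^T$, namely $(\bH^T)_{i,k}=(\bH)_{k,i}^T$, and the fact that multiplying a block-diagonal matrix on either side just relabels the diagonal factor.

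Then, for the diagonal blocks I set $i=j$. The two cross terms both carry the factor $(\bH)_{i,i}=\bzero$ and vanish, leaving $(\barbQ)_{i,i}=\bQ_i+\sum_{k=1}^N(\bH)_{k,i}^T\bR_k(\bH)_{k,i}$. A summand is nonzero only when $(\bH)_{k,i}\neq\bzero$, which, under the stated correspondence between zero blocks of $\bH$ and absent arcs of $\cH$, restricts $k$ to the relevant neighbor set of $i$ in $\cH$, i.e. to $\outneigh{\cH}{i}$ as written in the statement; this yields \eqref{eq:barbQ_ii}. For the off-diagonal blocks I fix $i\neq j$, so $(\bQ)_{i,j}=\bzero$ and $(\barbQ)_{i,j}=\bS_i(\bH)_{i,j}+(\bH)_{j,i}^T\bS_j^T+\sum_{k=1}^N(\bH)_{k,i}^T\bR_k(\bH)_{k,j}$. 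Here the $k$-th summand survives only when \emph{both} $(\bH)_{k,i}\neq\bzero$ and $(\bH)_{k,j}\neq\bzero$, i.e. when $k$ is a common neighbor of $i$ and $j$ in $\cH$, which is $\cN_\cH^+(i,j)=\outneigh{\cH}{i}\cap\outneigh{\cH}{j}$; restricting the sum accordingly gives \eqref{eq:barbQ_ij}.

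There is essentially no analytic obstacle: the proof is pure block bookkeeping. The one place that needs care is keeping the transpose consistent in $\bH^T\bR\bH$ (so the retained blocks are indexed $(\bH)_{k,i}$ and $(\bH)_{k,j}$, not $(\bH)_{i,k}$, $(\bH)_{j,k}$) and matching the support condition $(\bH)_{k,\cdot}\neq\bzero$ to the neighbor-set notation of the statement. As a sanity check one can note $\barbQ^T=\barbQ$ using $\bQ=\bQ^T$ and $\bR=\bR^T$, which is consistent with $(\barbQ)_{j,i}=(\barbQ)_{i,j}^T$ in the two derived formulas.
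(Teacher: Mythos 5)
Your proposal is correct and follows essentially the same route as the paper's proof: a block-wise expansion of $\barbQ(\bX)=\bQ+\bS\bH+\bH^T\bS^T+\bH^T\bR\bH$, using the block-diagonal structure of $\bQ$, $\bS$, $\bR$, the vanishing diagonal blocks $(\bH)_{i,i}=\bzero$, and the support of $\bH$ to restrict the sums to $\outneigh{\cH}{i}$ and $\cN_\cH^+(i,j)$. No gaps; your transpose bookkeeping and symmetry sanity check match the paper's argument.
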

\begin{proof}
    To establish \cref{eq:barbQ_ii}, multiplying out the terms in \cref{eq:barbQ} yields
    \begin{align*}
        (\barbQ)_{i,i}&=\bQ_i {+} \bS_i(\bH)_{i,i}{+} (\bH)_{i,i}^T\bS_i^T {+} \sum_{k\in\bbN_N}(\bH)_{k,i}^T\bR_k(\bH)_{k,i}.
    \end{align*}
    Recalling that $(\bH)_{k,i}{=}\bzero$, if $k{\notin}\outneigh{\cH}{i}$, and in particular $i\notin \outneigh{\cH}{i}$, gives \cref{eq:barbQ_ii}.

    To establish \cref{eq:barbQ_ij}, recalling that $\bQ$ is block-diagonal, and multiplying out the terms in \cref{eq:barbQ} yields
    \begin{align*}
        (\barbQ)_{i,j}&=
        \bS_i(\bH)_{i,j}{+} (\bH)_{j,i}^T\bS_j^T {+} \sum_{k\in\bbN_N}(\bH)_{k,i}^T\bR_k(\bH)_{k,j}.
    \end{align*}
    Recalling that $(\bH)_{k,\ell}{=}\bzero$, if $k{\notin}\outneigh{\cH}{\ell}$ for $l{=}i$ or $l{=}j$, gives \cref{eq:barbQ_ij}.
\end{proof}

\begin{lemma} \label{thm:supergraph}
    Under the conditions in \cref{thm:barbQ}, 
    let $\barcQ$ denote the graph of $\barbQ(\bX)$. Then, $\barcQ$ is an undirected supergraph of $\cH$, meaning $\barcQ\supset\cH$, whose edges $\edge{\barcQ}$ are
    \begin{align} \label{eq:Edge_barQ}
    \begin{split}
        \edge{\barcQ}&{=}\edge{G(\cH)}{+}\set{(i,i)|i{\in}\vertex{\cH}} \\
        &\quad{+}\set{(i,j)|i,j{\in}\vertex{\cH},\cN_\cH^+(i,j){\neq}{\emptyset}}.
    \end{split}
    \end{align}
    where $G(\cH)$ is the underlying undirected graph of $\cH$.
\end{lemma}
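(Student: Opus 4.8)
The plan is to read the block structure of $\barbQ(\bX)$ off \cref{thm:barbQ} and collect the index pairs at which a block can be nonzero. First I would record that $\barcQ$ is undirected: since $\bQ$ and $\bR$ are block-diagonal with symmetric blocks, \cref{eq:barbQ} gives $\barbQ^T=\bQ+\bH^T\bS^T+\bS\bH+\bH^T\bR\bH=\barbQ$, so $(\barbQ)_{j,i}=(\barbQ)_{i,j}^T$ and the sparsity pattern of $\barbQ$ is symmetric (equivalently, swapping $i$ and $j$ in \cref{eq:barbQ_ij} returns the transpose, using $\cN_\cH^+(i,j)=\cN_\cH^+(j,i)$). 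Hence it suffices to determine $\edge{\barcQ}$.

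Next I would split into the diagonal and off-diagonal cases. By \cref{eq:barbQ_ii}, $(\barbQ)_{i,i}$ carries the summand $\bQ_i$, so in general $(\barbQ)_{i,i}\neq\bzero$ for every $i\in\vertex{\cH}$, which contributes $\set{(i,i)\mid i\in\vertex{\cH}}$. For $i\neq j$, \cref{eq:barbQ_ij} writes $(\barbQ)_{i,j}$ as $\bS_i(\bH)_{i,j}+(\bH)_{j,i}^T\bS_j^T+\sum_{k\in\cN_\cH^+(i,j)}(\bH)_{k,i}^T\bR_k(\bH)_{k,j}$: the first two terms are zero unless $(\bH)_{i,j}\neq\bzero$ or $(\bH)_{j,i}\neq\bzero$, i.e. unless $(i,j)$ or $(j,i)$ is an arc of $\cH$, which is exactly the condition $(i,j)\in\edge{G(\cH)}$; and the sum is empty --- hence zero --- unless $\cN_\cH^+(i,j)\neq\emptyset$. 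So $(\barbQ)_{i,j}$ can be nonzero only for pairs $(i,j)$ lying in the set on the right-hand side of \cref{eq:Edge_barQ}, and for non-degenerate agent parameters it is nonzero for exactly those pairs; taking the union with the diagonal contribution yields \cref{eq:Edge_barQ}.

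The supergraph claim $\barcQ\supset\cH$ then follows at once: $\vertex{\barcQ}=\vertex{\cH}=\bbN_N$, and any arc $(i,j)\in\arc{\cH}$ forces $(i,j)\in\edge{G(\cH)}\subseteq\edge{\barcQ}$, hence $(i,j)\in\arc{\barcQ}$, each undirected edge of $\barcQ$ carrying both orientations; the containment is strict since $(\bH)_{i,i}=\bzero$, so the self-loops of $\barcQ$ do not occur in $\cH$. The only point that really needs care --- more a bookkeeping remark than a genuine obstacle --- is cancellation: \cref{eq:barbQ_ij} adds matrices of a common size, so strictly speaking only the inclusion ``the support of $\barbQ$ is contained in the edge set of \cref{eq:Edge_barQ}'' is guaranteed, with equality for generic $(\bQ_i,\bS_i,\bR_i)$ (for instance, a purely passive agent has $\bQ_i=\bR_i=\bzero$, which can annihilate a diagonal block or the third term of \cref{eq:barbQ_ij}). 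That inclusion is exactly what is used afterwards --- it licenses writing $\barbQ(\bX)\in\bbS_-^N(\edge{\barcQ},0)$ and invoking \cref{thm:Chordal Decomposition} --- so I would phrase the lemma in terms of this aggregate sparsity pattern and flag the genericity of equality in a single sentence.
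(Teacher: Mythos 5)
Your proposal is correct and follows essentially the same route as the paper's proof: read the block sparsity of $\barbQ(\bX)$ off \cref{eq:barbQ_ii,eq:barbQ_ij}, get undirectedness from the symmetry of $\barbQ(\bX)$, attribute the self-loops to $\bQ_i$, the $\edge{G(\cH)}$ edges to the $\bS_i(\bH)_{i,j}$ and $(\bH)_{j,i}^T\bS_j^T$ terms, and the remaining edges to $\cN_\cH^+(i,j)\neq\emptyset$, with $\barcQ\supset\cH$ following from $\vertex{\barcQ}=\vertex{\cH}$ and $\edge{G(\cH)}\subseteq\edge{\barcQ}$. Your caveat that only the inclusion of the support of $\barbQ(\bX)$ in the right-hand side of \cref{eq:Edge_barQ} is strictly guaranteed (equality holding only for non-degenerate $\bQ_i,\bS_i,\bR_i$) is a point the paper's proof silently glosses over as well, and you correctly note that this inclusion is all that is needed for the later use of \cref{thm:Chordal Decomposition}.
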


\begin{proof}
  %
    All statements in this theorem follow from the structure of \cref{eq:barbQ_ii,eq:barbQ_ij}. Most straightforwardly, $\barcQ$ is undirected due to the symmetricity of $\barbQ(\bX)$. To see that $\barcQ{\supset}\cH$, first note that \cref{eq:barbQ_ii,eq:barbQ_ij} do not introduce any new agents, so $\vertex{\cH}{=}\vertex{\barcQ}$. Due to $\edge\cH{\subseteq}\edge{G(\cH)}$, $\cH{\subset}\barcQ$ will follow from \cref{eq:Edge_barQ}, which is the only thing that remains to be established.
    


    \cref{eq:Edge_barQ} can be constructed by finding what elements in $\edge{\barcQ}$ correspond to the variables in \cref{eq:barbQ_ii,eq:barbQ_ij}.
    The term $\bQ_i$ in \cref{eq:barbQ_ii} corresponds to the self-loop $(i,i){\in}\edge{\barcQ}$.
    Terms $\bS_i(\bH)_{i,j}$ or $(\bH)_{j,i}^T\bS_j^T$ in \cref{eq:barbQ_ij} imply that $(i,j){\in}\edge{G(\cH)}$ contributes to $\edge{\barcQ}$.
    The remaining terms in \cref{eq:barbQ_ii,eq:barbQ_ij} are nonzero only when $\cN_\cH^+(i,j){\neq}\emptyset$.
    Combining all contributions yields \cref{eq:Edge_barQ}.
    Therefore, the last statement is also true, completing the proof.
    \cref{fig:Augmented Graph} shows the relationship between the edge sets of $\cH$ and $\barcQ$.
\end{proof}

\begin{figure}
    \centering
    \includegraphics[width = 0.35\textwidth]{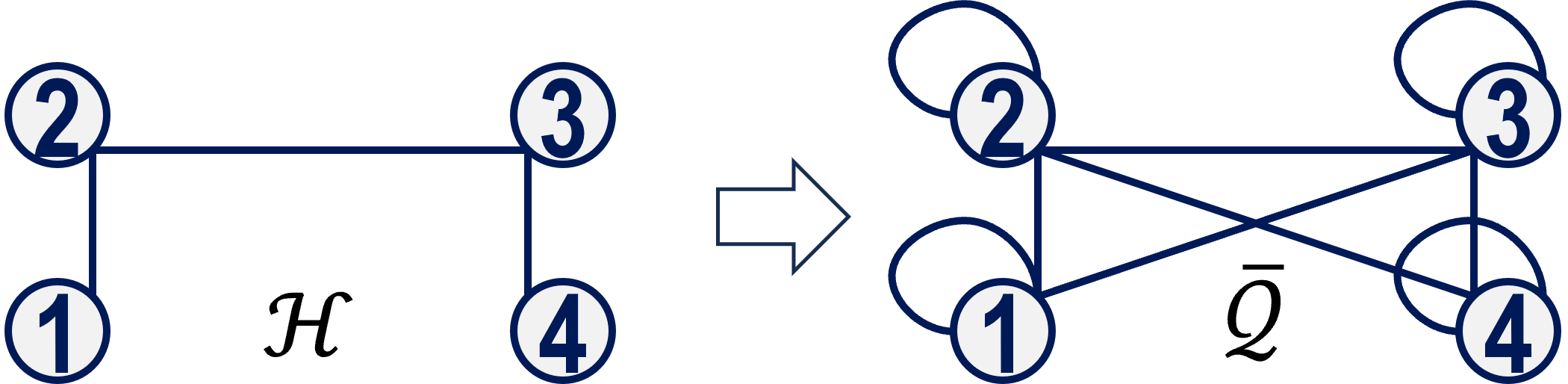}
    \caption{Example of graph $\cH$ and $\barcQ$; $\cH\subset\barcQ$.}
    \label{fig:Augmented Graph}
    \vspace*{-1.25\baselineskip} 
\end{figure}

\subsection{Complexity of Interior Point Methods}

Since all optimization problems in this work are formulated as a quadratic programs with \gls{lmi} constraints, it is helpful to compare their estimated computational complexity. Interior point methods are widely employed to solve \glspl{sdp}, and have complexity  
\begin{align} \label{eq:interor_point_method}
    \cO\brackets{\overline{n}^2\sum_{i=1}^{\overline{N}}\overline{m}_i^2+\overline{n}\sum_{i=1}^{\overline{N}}\overline{m}_i^3+\overline{n}^3},
\end{align}
where $\overline{n}$ is the size of the variables, $\overline{N}$ is the number of uncoupled \gls{lmi} constraints, and $\overline{m}_i$ is the row sizes of $i^\thh$ \gls{lmi} constraint \cite[Sec. 6.4.1]{nesterov1994interior}. 

\section{PROBLEM STATEMENT} \label{Chap:Problem Statement}
Consider $N$ agents, $\mathscr{G}_i{:}\cL^{m_i}_{2e}{\rightarrow}\cL^{l_i}_{2e}$, interconnected as
\begin{align}\label{eq:Interconnected Subsystem}
    \by_i=\mathscr{G}_i\be_i, \qquad \be_i=\bu_i+\sum_{j=1}^N(\bH)_{i,j}\by_j,
\end{align} 
where $\bu_i\in\cL_{2e}^{m_i}$, $(\bH)_{i,i}=\bzero$, and $(\bH)_{i,j}\in\bbR^{m_i\times l_j}$. Let $\bu = \col(\bu_i)_{i\in\bbN_N}$, $\be = \col(\be_i)_{i\in\bbN_N}$, and $\by = \col(\by_i)_{i\in\bbN_N}$. 
Then, the multiagent system $\mathscr{G}{:}\bu{\mapsto}\by$ is expressed as
\begin{align}\label{eq:Interconnected system}
    \begin{split}
        \by=\mathscr{G}\bu, \qquad \be=\bu+\bH\by, 
    \end{split}
\end{align}
where $\be$ is the interconnection signal, $\by$ is the output, and $\bu$ is the exogenous input. The ``interconnection matrix'' $\bH$ encodes the network structure. If each $\mathscr{G}_i$ is $\bQ_i\bS_i\bR_i$-dissipative, then by \Cref{thm:VNDT} the system $\mathscr{G}$ is $\cL_2$ stable if $\barbQ(\bX){\prec}0$, where $\barbQ(\bX)$ is defined in \cref{eq:barbQ}. 
The $\cL_2$ stability can be verified by solving the feasibility problem
\begin{subequations} \label{opt:Main Problem}
    \begin{align}
        \text{Find}\quad&\bX_i \quad i\in\bbN_N,\label{opt:Main Problem: Objective} \\
        \text{s.t.}\quad&
        \bX_i\in\bbP_i\text{, and} \label{opt:Main Problem: KYP}  \\
            &\bX\in\bbQ, \label{opt:Main Problem: Stability} 
    \end{align}
\end{subequations}
where $\bX=\diag(\bX_i)_{i\in\bbN_N}$, $\bX_i = \diag(\bQ_i,\bS_i,\bR_i)$, $\bbP_i=\{\bX_i\;|\;\text{\cref{eq:QSR Dissipativity} holds with }\bQ_i,\bS_i,\bR_i\}$, and  $\bbQ{=}\{\bX\;|\;\barbQ(\bX)\prec0\}$. The set $\bbP_i$ in \cref{opt:Main Problem: KYP} depends on the dynamics of the $i^\thh$ agent. For instance, if $\mathscr{G}_i$ is an \gls{lti} system, then $\bbP_i=\{\bX_i\;|\;\exists\bP_i\succ0$ such that \cref{lem:KYP Lemma} holds $\}$.

A simple, but computationally prohibitive, approach is to construct a state-space realization for the entire network, and then apply the Dissipativity Lemmas in \cite{gupta1996robust}. For $N$ agents with $n$ inputs and outputs each, $\bQ,\bS,\bR$ contain $\mathcal{O}((Nn)^2)$ elements, creating $\barn{\in}\cO((Nn)^2)$, $\barN{=}N{+}1$, $\barm_{i{\in}\bbN_N}{\in}\cO(n)$, and $\barm_{N+1}{\in}\cO(Nn)$.
This results in a computational complexity of $\cO((Nn)^6)$ according to \cref{eq:interor_point_method}.
An alternative is to apply a scattering transformation to recast the system into a specific dissipativity structure, such as passivity, which facilitates the analysis \cite{usova2018scattering}. 
In contrast, solving \cref{opt:Main Problem} offers key advantages. The number of variables in each local matrix $\bQ_i,$ $\bR_i$, and $\bS_i$ is $\mathcal{O}(n^2)$. 
As a result, it achieves a reduced complexity of $\cO(N^4n^6)$ with $\barn{\in} N\cO(n^2)$, $\barN{=}N{+}1$, $\barm_{i{\in}\bbN_N}{\in}\cO(n)$, and $\barm_{N+1}{\in}\cO(Nn)$, and does not require complex transformations of the original system.

The simplest way to solve \cref{opt:Main Problem} is to find $\bX_i$ satisfying \cref{opt:Main Problem: KYP} for each agent, then check if \cref{opt:Main Problem: Stability} holds. However, this is sub-optimal because a dissipative system satisfies \cref{eq:QSR Dissipativity} with various $(\bQ,\bS,\bR)$ triplets. For example, consider two agents in negative feedback, $\mathscr{G}_1$ and $\mathscr{G}_2$, that are $(-1,\frac{1}{2},-2)$-dissipative and $(-1,\frac{13}{32},-\frac{21}{128})$-dissipative, respectively. Under these dissipativity characterizations, the closed-loop does not satsify \Cref{thm:VNDT}. However, $\mathscr{G}_1$ is also $(\frac{1}{10},-\frac{1}{4},\frac{3}{5})$-dissipative\footnote{The first characterization is in the interior conic sector with bounds $(-1,2)$, while the second is in the exterior conic sector with bounds $(2,3)$. Any operator satisfying the former must satisfy the latter \cite{7286765}.}. With this new description, the closed-loop satisfies \Cref{thm:VNDT}. Finding compatible $(\bQ,\bS,\bR)$ becomes more complicated with more agents. Hence, co-optimizing agents' dissipativity by \cref{opt:Main Problem} is less conservative than checking $\barbQ(\bX){\prec}0$ afterward.


Nonetheless, solving \cref{opt:Main Problem} directly has potential drawbacks. First, it requires agents' dynamics information, such as $(\bA_i,\bB_i,\bC_i,\bD_i)$ for \gls{lti} systems, which may be unacceptable due to intellectual property or cybersecurity concerns. Second, it is still computationally expensive when there is a large number of agents, $N$, due to $\cO(N^4n^6)$ \cite{nesterov1994interior}.

The next two sections develop \cref{alg:01,alg:02} to solve \cref{opt:Main Problem} in a distributed manner. In \cref{Chap:ISNSA}, \cref{alg:01} achieves information security by allowing each agent to calculate its dissipativity parameters independently without revealing its dynamics. These parameters are shared and iteratively adjusted to satisfy \gls{vndt}. However, \cref{alg:01} requires solving a problem with a constraint whose dimension scales with the number of agents. In addition, all agents must share their dissipativity parameters with each other or a centralized computer, posing bandwidth concerns.

To remedy the limitations of \cref{alg:01}, in \cref{Chap:Chordal_Decomposition} \cref{alg:02} applies chordal decomposition to \gls{vndt}, splitting a large \gls{lmi} into smaller \gls{lmi} and equality constraints. \cref{alg:02} allows fully distributed stability analysis of multi-agent systems, enabling each agent to share its dissipativity parameters only with designated connected agents. Furthermore, it reduces computational complexity by decreasing the largest constraint size and providing an exact solution to the optimization problem with an equality constraint.

\section{Information-Secure Network Stability Analysis} \label{Chap:ISNSA}
As in \cref{eq:Constrained Optimization Problem,eq:General ADMM}, \cref{opt:Main Problem} can be solved using \gls{admm} in three steps with the iterations
\begin{subequations} \label{eq:Distributed ADMM}
    \begin{align}
        \bX_i^{k+1}&{=}\Pi_{\bbP_i}(\bZ_i^k{-}\bT_i^k), \quad
            i{\in}\bbN_N, \label{eq:X proj}\\
        \bZ^{k+1}&{=}\Pi_{\bbQ}(\bX^{k+1}{+}\bT^k), \label{eq:Z proj} \\ 
        \bT_i^{k+1}&{=}\bT_i^k{+}(\bX_i^{k+1}{-}\bZ_i^{k+1}){=}\bT_i^k{+}\bR_i^{k+1}{,} \;
            i{\in}\bbN_N, \label{eq:T min}
    \end{align}
\end{subequations}
where $\bX$, $\bZ=\diag(\bZ_i)_{i\in\bbN_N}$, and $\bT=\diag(\bT_i)_{i\in\bbN_N}$ act as the primal, clone, and dual variable, respectively, and $\bR_i^{k+1}$ is the residual of $i^\thh$ agent at iteration $k+1$.

\cref{alg:01} describes the iterative optimization process using \cref{eq:Distributed ADMM}. The sequence starts with $\bX^0{=}\bZ^0$ and $\bT^0{=}\bzero$. The initial point $\bX_i^0$ need not be feasible, but the choice strongly influences the number of iterations required for convergence. A natural choice is $\bX^0_i{=}a_i\bI$ for all $i{\in}\bbN_N$, where $a_i{\in}\bbR$ is a weighting constant. 
Since the purpose of this algorithm is to find a feasible point rather than an optimal one, $\barbQ(\bX){\prec}0$ acts as the stopping criterion. 

\subsection{Convergence Criteria}
The following theorem demonstrates that \cref{eq:Distributed ADMM} iteratively converges to the feasible solution of \cref{opt:Main Problem} where the primal and clone variables equal one-another.
\begin{theorem} \label{thm:isnsa_admm}
    Suppose that \cref{opt:Main Problem} has a feasible solution. If $\bbP_i$ and $\bbQ$ are closed and convex, then the \gls{admm} iterates defined in \cref{eq:X proj,eq:Z proj,eq:T min} guarantee that
    \begin{align}\label{eq:ADMM_conv}
        \lim_{k\to\infty}\bR^k=\bzero,\quad\lim_{k\to\infty}\bT_i^k=\bT_i^\star,\quad\forall i\in\bbN_N
    \end{align}
    where $\bT_i^\star$ is the dual optimal point of $i^\thh$ agent.
\end{theorem}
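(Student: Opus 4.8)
The plan is to recognize \cref{thm:isnsa_admm} as the specialization of the generic \gls{admm} convergence result \cref{thm:ADMM} to \cref{opt:Main Problem}, so that the proof reduces to three routine steps: cast \cref{opt:Main Problem} in the form \cref{eq:Constrained Optimization Problem}/\cref{eq:Equivalent Constrained Optimization Problem}; verify that \cref{eq:Distributed ADMM} is exactly the instance of \cref{eq:General ADMM} for that casting; and check \cref{ass:As 01,ass:As 02} together with feasibility so that \cref{thm:ADMM} applies. For the first step, let $f$ be the indicator function of the product set $\bbP = \{\diag(\bX_i)_{i\in\bbN_N} \mid \bX_i\in\bbP_i \text{ for all } i\in\bbN_N\}$, i.e.\ $f(\bX) = \sum_{i\in\bbN_N} I_{\bbP_i}(\bX_i)$, and take the clone-variable constraint set to be $\Omega = \bbQ$. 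Then \cref{eq:Equivalent Constrained Optimization Problem} reads $\min_{\bX,\bZ} f(\bX) + I_\bbQ(\bZ)$ subject to $\bX - \bZ = \bzero$; it has optimal value $0$, and it is feasible exactly when \cref{opt:Main Problem} is.

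For the second step, because $f$ and the Frobenius penalty are both separable over the diagonal blocks $\bX_i$, the $\bX$-subproblem \cref{eq:General ADMM x Update} decouples into $N$ independent problems whose minimizers are $\bX_i^{k+1} = \arg\min_{\bX_i\in\bbP_i}\frobnormsq{\bX_i - \bZ_i^k + \bT_i^k} = \Pi_{\bbP_i}(\bZ_i^k - \bT_i^k)$, i.e.\ \cref{eq:X proj}; \cref{eq:General ADMM z Update} with $I_\Omega = I_\bbQ$ is \cref{eq:Z proj}; and \cref{eq:General ADMM u Update} restricted to the block-diagonal dual variable is \cref{eq:T min}. Hence \cref{eq:Distributed ADMM} is verbatim \gls{admm} for this $(f,\Omega)$.

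For the third step, \cref{ass:As 01} holds because $\bbQ = \{\bX \mid \barbQ(\bX) \prec 0\}$ is convex --- $\barbQ$ in \cref{eq:barbQ} is affine in $\bX$, $\bH$ being fixed --- and convex by hypothesis, while $f$ is the indicator of $\bbP$, which is convex and closed since each $\bbP_i$ is, and nonempty because \cref{opt:Main Problem} is feasible; so \cref{lem:closed} makes $f$ closed, proper, and convex. The only step with genuine content is \cref{ass:As 02}: pick any feasible $\bX^\star$ of \cref{opt:Main Problem}, so $\bX^\star \in \bbP\cap\bbQ$ and $f(\bX^\star) = I_\bbQ(\bX^\star) = 0$, and consider the triple $(\bX^\star, \bX^\star, \bzero)$. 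Then $\cL(\bX^\star, \bX^\star, \bT) = f(\bX^\star) + I_\bbQ(\bX^\star) + \innerprod{\bT}{\bzero} = 0$ for every $\bT$, so the left inequality in \cref{eq:saddle_point} holds with equality, and $\cL(\bX, \bZ, \bzero) = f(\bX) + I_\bbQ(\bZ) \geq 0 = \cL(\bX^\star, \bX^\star, \bzero)$ gives the right one; hence a saddle point exists. With \cref{ass:As 01,ass:As 02} and feasibility, \cref{thm:ADMM} gives $\bR^k \to \bzero$ and $\bT^k \to \bT^\star$ for a dual-optimal $\bT^\star$; writing $\bT^\star = \diag(\bT_i^\star)_{i\in\bbN_N}$ and using that every $\bT^k$ is block-diagonal yields $\bT_i^k \to \bT_i^\star$ for each $i\in\bbN_N$, which is \cref{eq:ADMM_conv}.

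So the proof is mostly bookkeeping, and the saddle-point construction is the load-bearing step --- short precisely because both $f$ and the clone penalty are indicators of convex sets with nonempty intersection. The one genuine wrinkle I would flag is that $\bbQ = \{\bX \mid \barbQ(\bX)\prec0\}$ as written is open, so the projection $\Pi_\bbQ$ in \cref{eq:Z proj} and the closedness demanded by \cref{ass:As 01} are not literally satisfied. The standard, substance-free remedy is to run \gls{admm} on the closed convex set $\{\bX \mid \barbQ(\bX)\preceq-\epsilon\bI\}$ for a small $\epsilon>0$, which stays feasible whenever \cref{opt:Main Problem} is (for $\epsilon$ small enough) and whose iterates meet the stopping criterion $\barbQ(\bX)\prec0$; everything above then goes through unchanged.
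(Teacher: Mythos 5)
Your proposal is correct and follows essentially the same route as the paper: recast \cref{opt:Main Problem} with indicator objectives $I_{\bbP_i}(\bX_i)$ and clone constraint set $\bbQ$, verify \cref{ass:As 01} via \cref{lem:closed}, verify \cref{ass:As 02}, and invoke \cref{thm:ADMM}, with your $\epsilon$-shift remedy $\barbQ(\bX)\preceq-\epsilon\bI$ for the open set $\bbQ$ matching the paper's remark immediately after the theorem. The only difference is in \cref{ass:As 02}: you exhibit the saddle point explicitly as $(\bX^\star,\bX^\star,\bzero)$ using nonnegativity of the indicator objective, whereas the paper appeals to Slater's condition for a convex problem with linear equality constraints --- your construction is, if anything, more self-contained.
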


\begin{proof}
    To begin, we must get \cref{opt:Main Problem} in the same format as \cref{eq:Equivalent Constrained Optimization Problem}, which is
    \begin{subequations} \label{opt:lemma_reformulation}
        \begin{align} 
            \arg\min_{\bX_i,\bZ_i}\quad&\sum_{i\in\bbN_N}I_{\bbP_i}(\bX_i)+I_{\bbQ}(\bZ), \label{eq:lemma_reformulation_obj} \\
            \sthat\quad&\bX_i-\bZ_i=\bzero\quad i\in\bbN_N. \label{eq:lemma_reformulation_const}
        \end{align}
    \end{subequations}
    The equivalence holds because the objective of \cref{opt:lemma_reformulation} is smallest when all indicator functions equal zero. This means its solutions, $\bX_i{=}\bZ_i$, satisfy \cref{opt:Main Problem}. Likewise, all solutions to \cref{opt:Main Problem} render the indicator functions in \cref{eq:lemma_reformulation_obj} zero when setting $\bZ_i{=}\bX_i$. Now we must verify the assumptions of \cref{thm:ADMM} with $f=I_{\bbP_i}$ and $\Omega=\bbQ$.

    For any $I_\bbA(\bx)$, if $\bbA$ is closed, nonempty, and convex, then $\epi I_\bbA(\bx)$ is a closed convex set. By the assumptions of \cref{thm:isnsa_admm}, $\bbQ$ is convex and $\epi f{=}\epi I_{\bbP_i}$ is closed, convex, and nonempty, so \cref{lem:closed} implies that \cref{ass:As 01} holds. 

    Since \cref{eq:lemma_reformulation_obj} is a closed, proper, and convex function and \cref{eq:lemma_reformulation_const} is linear, \cref{opt:lemma_reformulation} is a convex optimization problem with linear equality constraints. By Slater's condition, the Lagrangian of \cref{opt:lemma_reformulation} has a saddle point \cite{boyd2004convex}, so \cref{ass:As 02} is satisfied.

    It has now been established that \cref{thm:ADMM} holds with $f{=}I_{\bbP_i}$ and $\Omega{=}\bbQ$, which implies \cref{eq:ADMM_conv}.
\end{proof}

From \cref{thm:isnsa_admm}, the algorithm requires closed and convex constraint sets to converge. If both $\bbP_i$ for all $i{\in}\bbN_N$ and $\bbQ$ consist of \glspl{lmi}, they are convex. While only non-strict \glspl{lmi} are closed, in practice, any strict inequality in \gls{lmi} can be replaced by a non-strict inequality by introducing any positive constant. For instance, imposing $\barbQ(\bX){+}\epsilon\bI{\preceq}0$ for some $\epsilon{>}0$ instead of 
$\barbQ(\bX){\prec}0$. 
Under these conditions, \cref{alg:01} converges to a feasible point of \cref{opt:Main Problem} if one exists, confirming network stability.

\begin{algorithm}[tbp]
    \caption{Information-secure network stability analysis}\label{alg:01}
    \begin{algorithmic}[1]
        \Require Max Iterstions$,\bX_i^0$ for $i\in\bbN_N$
        \Ensure $\bX^k$
        \State Initialize $k=0$, $\bX^0=\diag(\bX_i^0)_{i\in\bbN_N}$, $\bZ^0=\bX^0$, and $\bT^0=\textbf{0}$
        \While {$\barbQ(\bX)\nprec0,k<$Max Iterations}
            \State $k\gets k+1$
            \State Find $\bX_i^k$ by \cref{eq:X proj} in parallel
            \State Find $\bZ^k$ by \cref{eq:Z proj} at a centralized node
            \State Find $\bT_i^k$ by \cref{eq:T min} in parallel
        \EndWhile
        \If {$\barbQ(\bX)\prec0$}
            \State Multi-agent system is stable with $\bX^k=\diag(\bX_i^0)_{i\in\bbN_N}$
        \EndIf
    \end{algorithmic}
\end{algorithm}

\subsection{Computation Time}
\cref{alg:01}'s key advantage over directly solving \cref{opt:Main Problem} is that each agent can verify its own dissipativity without sharing its dynamics information, even with neighboring nodes.
Although \gls{admm} is generally known to converge slowly\cite{boyd2011distributed}, \cref{alg:01} can converge faster than standard \gls{admm}, since \cref{alg:01} focuses on finding a feasible $\bX$ satisfying \cref{opt:Main Problem} in a distributed manner, rather than strictly ensuring \cref{eq:lemma_reformulation_const}.
Thus, it can terminate as soon as \cref{opt:Main Problem} is solved, hastening convergence.


The computation time of $k^\thh$ iteration is $t_{k}{=}t_k^{\bX}{+}t_k^{\bZ}{+}t_k^{\bT}$, where $t_k^{\bX}{=}\max_{i{\in}\bbN_N}(t_k^{\bX_i})$ and $t_k^{\bT}{=}\max_{i\in\bbN_N}(t_k^{\bT_i})$ are the maximum time required for the projections in \cref{eq:X proj} and the $\bT_i$ updates in \cref{eq:T min}, respectively, and $t_k^{\bZ}$ is the time required for the projection in \cref{eq:Z proj}. In multi-agent systems, $t_k^Z$ dominates the overall computational cost, as the size of \cref{opt:Main Problem: Stability} scales with the network size, $N$, while the size of \cref{opt:Main Problem: KYP} scales with the each subsystem size, $n_i$, which is typically much smaller than $N$.
The complexity of solving \cref{eq:Z proj} via interior point methods scales as $\cO(N^4n_m^6)$, where $n_m$ is the maximum row size of matrix variables among agents. Therefore, $t_k{\approx} t_k^Z {\sim} \cO(N^4n_m^6)$.
This makes solving \cref{eq:Z proj} the main bottleneck, as it has high complexity and requires all agents to send their dissipativity parameters to a centralized node.
This bottleneck can be eliminated by decomposing the structure of $\barbQ(\bX)$, as demonstrated next.

\section{Chordal Decomposition of \texorpdfstring{\gls{vndt}}{VNDT}} \label{Chap:Chordal_Decomposition}

This section decomposes the large-scale \gls{lmi} in \cref{opt:Main Problem: Stability} into smaller expressions by dividing associated graphs into cliques. First, the problem is reformulated for distributed stability analysis via chordal decomposition and \gls{admm}. Then, convergence conditions are given, followed by the algorithm for distributed analysis. Finally, conditions are found where the algorithm can be used without computationally costly chordal extensions, and where certain constraints are separable, allowing efficient computations.

\cref{tab:Chordal Decomposition} summarizes the notation arising from dividing graphs into cliques.


\subsection{Problem Reformulation} \label{subChap:Problem_Reformulation}
To begin, \cref{subs:chord1} divides \gls{vndt} based on the network's clique structure. Then \cref{subsubChap:Parallel_Computing_of_LMI} shows that the resulting subdivided \glspl{lmi} can be solved in parallel. Finally, \cref{subs:equiv} reformulates the problem for distributed computations exploiting structure.

\subsubsection{Chordal Structure and LMIs}\label{subs:chord1}

The next theorem shows that if the graph $\barcQ$ is chordal, network-wide stability can be established by applying \cref{thm:Chordal Decomposition}, which results in smaller \glspl{lmi} and a matrix equation.

\begin{table}
\caption{Terms for decomposition of \cref{eq:barbQ}}
\label{tab:Chordal Decomposition}
\begin{center}
\renewcommand{\arraystretch}{1.3}
\begin{tabular}{p{0.6cm}|p{7cm}}
\hline
Terms & Definition \\
\hline\hline
$\bG$, $\cG$ & Script letters denote the graph corresponding to the (block-wise) structure of a matrix named with the same letter in bold.\\
\hline
$\barcQ$ & The graph corresponding to the matrix $\barbQ(\bX)$\\
\hline
$\cH$ & The graph corresponding to the interconnection matrix $\barbH$\\
\hline
$M$ & The number of maximal cliques of $\barcQ$ \\
\hline
$\barcQ_o$ & 
\cref{eq:barcQ_o}; $\edge{\barcQ_o}$ is the overlapped edge from \cref{thm:Chordal Decomposition}. \\
\hline
$L$ & $|\edge{\barcQ_o}|$; the number of overlapped edges from \cref{thm:Chordal Decomposition}. \\
\hline
$\barbX_p$ & \cref{eq:barbX_p}; the block diagonal matrix defined from block diagonal components in $\textbf{X}$ which are used to calculate $(\barbQ)_{i,j}$ in \cref{eq:barbQ_ii,eq:barbQ_ij}, where $(i,j)\in\edge{\cC_p}-\edge{\barcQ_o}$. \\
\hline
$\vertex{\cR_p}$ & 
\cref{eq:cV(cR_p)}; the set of vertices (agents) whose $\bR$ matrix is used to calculate $(\barbQ)_{i,j}$ for $(i,j)\in\edge{\cC_p-\barcQ_o}$
\\ \hline
$\bY_{i,j}^{p}$ & The matrix variables defined from the overlapped position in \cref{eq:Chordal Decomposition}, where $p{\in}\bbN_M$ is the index of maximal cliques and $(i,j){\in}\edge{\barcQ_o}$. Thus, leading to \cref{Qhatdef,eq:bbL_ij} \\
\hline
$\barbY_p$ & \cref{eq:barbY_p}; the block diagonal matrix with block diagonal components in $\bY$, using vertices in $C_p(\barcQ)$. \\
\hline
$\hatbX$ & \cref{eq:hatbX}; the block diagonal matrix defined from dissipativity matrices used to calculate $(\barbQ)_{i,j}$ for $(i,j)\in\edge{\barcQ_o}$. \\
\hline
$\vertex{\cR}$ & 
\cref{eq:cV(cR)}; the set of vertices (agents) whose $\bR$ matrix is used to calculate $(\barbQ)_{i,j}$ for $(i,j)\in\edge{\barcQ_o}$
\\ \hline
$\bY$ & \cref{eq:bY}; the block diagonal matrix defined from all overlapped variables $\bY_{i,j}^p$. \\
\hline
$\bbL_{i,j}$ & \cref{eq:bbL_ij}; the set of maximal clique index whose maximal clique share a common edge $(i,j)$ with $\barcQ_o$\\
\hline
\end{tabular}
\end{center}
\vspace*{-2\baselineskip} 
\end{table}

\begin{theorem} \label{thm:VNDT_Chordal_Decomposition}
    Consider $N$ agents, $\mathscr{G}_i:\cL^{m_i}_{2e}\rightarrow\cL^{l_i}_{2e}$, satisfying the assumptions of \cref{thm:VNDT}, and let $\bu{=}\col(\bu_i)_{i\in\bbN_N}$ and $\by{=}\col(\by)_{i\in\bbN_N}$.
    Furthermore, assume that $\barcQ$ is a chordal graph with maximal cliques $\set{\cC_p}_{p=1}^M$, where $M$ is the number of maximal cliques of $\barcQ$, and define a graph of clique overlaps,
    \begin{align} \label{eq:barcQ_o}
        \barcQ_o=\bigcup_{p,q\in\bbN_M}\cC_p\cap\cC_q.
    \end{align}
    Let $\cC_p(k)$ denote the $k^\thh$ vertex of $\cC_p$, sorted in natural ordering as defined in \cref{thm:Chordal Decomposition}. Then the multi-agent system $\mathscr{G}:\bu\mapsto\by$ is $\cL_2$ stable if there exists $\epsilon>0$ and $\bY_{i,j}^p\in\bbR^{l_i\times l_j}\ \forall p\in\bbN_M$ $\forall(i,j)\in\edge{\barcQ_o}$ such that
    \begin{align}
        \barbQ_p&\in\bbS_-^{\abs{\vertex{\cC_p}}},\; \forall p\in\bbN_M,  \label{eq:barbQ_p}\\
        \hatbQ&=\bzero\in\bbR^{N\times N},\text{(block-wise)}  \label{eq:hatbQ}
    \end{align}
    where
    \begin{align}
        (&\barbQ_p)_{k,l}{=}\nonumber\\
        &\left\{\begin{array}{ll}
            \hspace{-7pt}(\barbQ(\bX) {+} \epsilon\bI)_{\cC_p({k}){,}\cC_p({l})}, & \hspace{-5pt}\text{if }(\cC_p(k){,}\cC_p(l)){\in}\edge{\cC_p{\setminus}\barcQ_o}, \\
            \hspace{-3pt}\bY_{\cC_p(k){,}\cC_p(l)}^p, & \hspace{-5pt}\text{if }(\cC_p(k){,}\cC_p(l)){\in}\edge{\cC_p{\cap}\barcQ_o},\hspace{-5pt}
        \end{array}\right.\hspace{-10pt} \label{QpDef}\\
        %
        (&\hatbQ)_{i,j}{=}
        \left\{\begin{array}{ll}
            \hspace{-7pt}(\barbQ(\bX){+}\epsilon\bI)_{i,j}{-}\sum_{p\in\bbL_{i,j}}\hspace{-5pt}\bY_{i,j}^p  &\text{if }(i,j){\in}\cE(\barcQ_o), \\
            \hspace{-3pt}\bzero  &\text{if }(i,j){\notin}\cE(\barcQ_o),
        \end{array}\right.\hspace{-10pt} \label{Qhatdef} \\
        &\bbL_{i,j}=\set{p\in\bbN_M\,|\,(i,j)\in\edge{\barcQ_o\cap\cC_p}}. \label{eq:bbL_ij}
    \end{align}
\end{theorem}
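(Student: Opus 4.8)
The plan is to invoke the sufficiency (``$\Leftarrow$'') direction of \cref{thm:Chordal Decomposition} to certify $\barbQ(\bX){+}\epsilon\bI\preceq0$, and then conclude $\cL_2$ stability from \cref{thm:VNDT}. Since the agents satisfy the assumptions of \cref{thm:VNDT}, the block-diagonal matrix $\bX$ of dissipativity parameters is fixed, so it suffices to show $\barbQ(\bX)\prec0$: if $\barbQ(\bX){+}\epsilon\bI\preceq0$ then, as $\epsilon{>}0$, $\barbQ(\bX)\preceq{-}\epsilon\bI\prec0$, and \cref{thm:VNDT} gives that $\mathscr{G}{:}\bu{\mapsto}\by$ is $\cL_2$ stable. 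Moreover, by \cref{thm:supergraph}---in particular the fact that every self-loop $(i,i)$ lies in $\edge{\barcQ}$---the matrix $\barbQ(\bX){+}\epsilon\bI$ has its block-sparsity pattern contained in the chordal graph $\barcQ$, so \cref{thm:Chordal Decomposition} applies with $\cZ{=}\barcQ$ and $\bZ{=}\barbQ(\bX){+}\epsilon\bI$.

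First I would take the clique matrices to be $\bZ_p{:=}\barbQ_p$ for $p{\in}\bbN_M$, with $\barbQ_p$ given by \cref{QpDef}; hypothesis \cref{eq:barbQ_p} then reads exactly $\bZ_p{\in}\bbS_-^{\abs{\vertex{\cC_p}}}$. What remains is to verify the decomposition \cref{eq:Chordal Decomposition} for these choices, i.e.\ the linear identity $\barbQ(\bX){+}\epsilon\bI{=}\sum_{p=1}^M\bE_{\cC_p}^T\bZ_p\bE_{\cC_p}$, which I would check block by block. The only combinatorial ingredient is that any $(i,j)$ with $(i,j){\notin}\edge{\barcQ_o}$ lies in exactly one maximal clique of $\barcQ$: if $i$ and $j$ were each in two distinct maximal cliques $\cC_p,\cC_q$, then $(i,j){\in}\edge{\cC_p{\cap}\cC_q}{\subseteq}\edge{\barcQ_o}$ by \cref{eq:barcQ_o}, a contradiction.

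The block comparison then splits into three cases. If $(i,j){\notin}\edge{\barcQ}$ (hence $i{\neq}j$), no maximal clique contains both $i$ and $j$, so the $(i,j)$ block of $\sum_p\bE_{\cC_p}^T\bZ_p\bE_{\cC_p}$ is zero, matching $(\barbQ(\bX){+}\epsilon\bI)_{i,j}{=}\bzero$. If $(i,j){\in}\edge{\cC_p{\setminus}\barcQ_o}$ with $\cC_p$ its unique clique, the $(i,j)$ block of the sum is $(\barbQ_p)_{k,l}$ where $\cC_p(k){=}i$ and $\cC_p(l){=}j$, which by the first branch of \cref{QpDef} equals $(\barbQ(\bX){+}\epsilon\bI)_{i,j}$. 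If $(i,j){\in}\edge{\barcQ_o}$, the maximal cliques containing $(i,j)$ are exactly those indexed by $\bbL_{i,j}$ of \cref{eq:bbL_ij}, so the $(i,j)$ block of the sum is $\sum_{p\in\bbL_{i,j}}\bY_{i,j}^p$ by the second branch of \cref{QpDef}; this equals $(\barbQ(\bX){+}\epsilon\bI)_{i,j}$ precisely when $(\barbQ(\bX){+}\epsilon\bI)_{i,j}{-}\sum_{p\in\bbL_{i,j}}\bY_{i,j}^p{=}\bzero$, i.e.\ when $(\hatbQ)_{i,j}{=}\bzero$, which is hypothesis \cref{eq:hatbQ} read through \cref{Qhatdef}. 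The diagonal (self-loop) blocks are covered by the same three cases depending on whether the vertex is shared among cliques. Thus $\barbQ(\bX){+}\epsilon\bI{=}\sum_p\bE_{\cC_p}^T\bZ_p\bE_{\cC_p}$ with each $\bZ_p{\preceq}0$, and the sufficiency direction of \cref{thm:Chordal Decomposition} yields $\barbQ(\bX){+}\epsilon\bI{\in}\bbS_-^N(\edge{\barcQ},0)$, completing the argument as in the first paragraph.

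I expect the main obstacle to be the bookkeeping in this block-by-block verification: establishing cleanly that $\edge{\cC_p}$ partitions into its overlapped part $\edge{\cC_p{\cap}\barcQ_o}$ and non-overlapped part $\edge{\cC_p{\setminus}\barcQ_o}$, handling the diagonal blocks consistently, and confirming that $\bbL_{i,j}$ enumerates all and only the cliques contributing to the $(i,j)$ block of the embedding sum, so that the two hypotheses \cref{eq:barbQ_p,eq:hatbQ} exactly match the premises of \cref{thm:Chordal Decomposition}. Everything beyond that is a direct appeal to \cref{thm:Chordal Decomposition,thm:VNDT}; note that only the sufficiency direction of \cref{thm:Chordal Decomposition} is invoked here, which holds for any graph whose maximal cliques $\set{\cC_p}$ cover its edges---chordality of $\barcQ$ is what would make the reformulation lossless (via the converse direction) and keeps the clique structure underlying \cref{QpDef,Qhatdef} manageable.
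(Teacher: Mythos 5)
Your proposal is correct and follows essentially the same route as the paper's proof: invoke \cref{thm:VNDT}, replace $\barbQ(\bX)\prec0$ by $\barbQ(\bX)+\epsilon\bI\preceq0$, and verify the clique-sum identity of \cref{thm:Chordal Decomposition} block by block, where the non-overlapped blocks match via \cref{QpDef} and the overlapped blocks match precisely because \cref{eq:hatbQ} with \cref{Qhatdef,eq:bbL_ij} forces $(\barbQ(\bX)+\epsilon\bI)_{i,j}=\sum_{p\in\bbL_{i,j}}\bY_{i,j}^p$. Your extra bookkeeping (sparsity containment in $\barcQ$ via \cref{thm:supergraph}, and the observation that a non-overlapped edge lies in exactly one maximal clique) only makes explicit steps the paper leaves implicit.
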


\begin{proof}
    The proof proceeds by applying \gls{vndt} and \cref{thm:VNDT}, guaranteeing the system is $\cL_2$ stable if $\barbQ(\bX){\prec}0$. This \gls{lmi} holds if and only if there exists $\epsilon{>}0$ such that 
    \begin{align} \label{eq:VNDT_nonstrict}
        \barbQ(\bX)+\epsilon\bI\preceq0.
    \end{align}
    Since $\barcQ$ is chordal and $\barbQ(\bX){+}\epsilon\bI$ is negative semi-definite, \cref{thm:Chordal Decomposition} can be applied to decompose \cref{eq:VNDT_nonstrict} into \cref{eq:barbQ_p,eq:hatbQ}. 
    The remainder of the proof provides a detailed construction of this decomposition.

    \cref{thm:Chordal Decomposition} shows that \cref{eq:VNDT_nonstrict} is equivalent to the existence of $\barbQ_p{\in}\bbS_-^{\abs{\vertex{\cC_p}}}$ for all $p{\in}\bbN_M$ satisfying
    \begin{align} \label{eq:VNDT_Chordal_Decomposition_02}
        \barbQ(\bX)+\epsilon\bI=\sum_{p=1}^M\bE_{\cC_p}^T\barbQ_p\bE_{\cC_p},
    \end{align}
    where $\bE_{\cC_p}$ is defined as in \cref{thm:Chordal Decomposition}. 
    Therefore, we need only demonstrate that \cref{eq:hatbQ,QpDef,Qhatdef,eq:bbL_ij} imply \cref{eq:VNDT_Chordal_Decomposition_02}.

    From the definition of $\bE_{\cC_p}$ and $\cC_p(k)$, $(\bE_{\cC_p}^T\barbQ_p\bE_{\cC_p})_{i,j}=(\barbQ_p)_{k,l}$ with $i{=}\cC_p(k)$ and $j{=}\cC_p(l)$, and $\cC_p{\cap}\cC_q{\neq}\emptyset$ if there exists an edge $(i,j){\in}\edge{\barcQ}$ such that 
    \begin{align*}
        (\bE_{\cC_p}^T\barbQ_p\bE_{\cC_p})_{i,j}\neq\bzero,\quad(\bE_{\cC_q}^T\barbQ_q\bE_{\cC_q})_{i,j}\neq\bzero,\text{ and } p\neq q.
    \end{align*}
    The sum in \cref{eq:VNDT_Chordal_Decomposition_02} therefore reduces to
    \begin{align}
        (\overline{\mathbf{Q}}(\bX){+}\epsilon\bI)_{i,j}{=}
        \begin{cases}
            (\barbQ_p)_{k,l} & (i,j){\in}\edge{\cC_p{\setminus}\barcQ_o},\\ 
            \sum_{p\in\bbL_{i,j}}\hspace{-2pt}(\barbQ_p)_{i,j} & (i,j){\in}\edge{\barcQ_o}.  
        \end{cases} \label{eq:barbQ+e_block in}
    \end{align}
    Therefore, when $(i,j){\notin}\edge{\barcQ_o}$, this follows directly from \cref{QpDef}. 
    Now, it only remains to demonstrate \cref{eq:barbQ+e_block in} for $(i,j){\in}\edge{\barcQ_o}$.

    Using the definition of $\bY_{i,j}^p$, \cref{eq:barbQ+e_block in} becomes
    \begin{align} \label{eq:hatbQ_block}
        (\barbQ(\bX)+\epsilon\bI)_{i,j}-\hspace{-5pt}\sum_{p\in\bbL_{i,j}}\hspace{-5pt}\bY_{i,j}^p=\bzero,\quad(i,j)\in\edge{\barcQ_o}.
    \end{align}
    Due to \cref{eq:hatbQ}, $\hatbQ=\bzero$, so the above equation is implied by \cref{Qhatdef}, completing the proof.
\end{proof}



The above theorem is useful because \cref{eq:barbQ_p,eq:hatbQ} break $\barbQ(\bX)\prec0$ into several smaller \glspl{lmi} that each involve fewer design variables, which is shown explicitly next.

\begin{theorem} 
Under the assumptions of \cref{thm:VNDT_Chordal_Decomposition}, 
$\barbQ_p$ depends only on 
\begin{align}
    \barbX_p=&\diag\left(\begin{array}{l}
             \diag(\bQ_i)_{(i,i)\in\edge{\cC_p\setminus\barcQ_o}}, \\
             \diag(\bS_i,\bS_j)_{(i,j)\in\edge{\cC_p\setminus\barcQ_o}}, \\
             \diag(\bR_k)_{k{\in}\cV(\cR_p)}
        \end{array}\right),  \label{eq:barbX_p}\\
        \cV(\cR_p)=&\set{k{\in}\outneigh{\cH}{i,j}|(i,j)\in\edge{\cC_p\setminus\barcQ_o}}, \text{ and}\label{eq:cV(cR_p)}\\
        \barbY_p=&\diag(\bY_{i,j}^p)_{(i,j)\in\edge{\cC_p\cap\barcQ_o}}, \label{eq:barbY_p}
\end{align}
rather than all of $\bX$ and all $\bY_{i,j}^p$, $p\in\bbN_M$. Likewise, $\hatbQ$ depends only on  
\begin{align}
    \hatbX=&\diag\left(\begin{array}{l}
             \diag(\bQ_i)_{(i,i)\in\edge{\barcQ_o}}, \\
             \diag(\bS_i,\bS_j)_{(i,j)\in\edge{\barcQ_o}}, \\
             \diag(\bR_k)_{k{\in}\vertex{\cR}}
        \end{array}\right), \label{eq:hatbX}\\
        \vertex{\cR}=&\set{k{\in}\outneigh{\cH}{i,j}|(i,j)\in\edge{\barcQ_o}},\text{ and} \label{eq:cV(cR)}\\
        \bY=&\diag(\diag(\bY_{i,j}^p)_{p\in\bbL_{i,j}})_{(i,j)\in\barcQ_o}, \label{eq:bY}
\end{align}
rather than all of $\bX$ and all $\bY_{i,j}^p$, $p\in\bbN_M$.

\end{theorem}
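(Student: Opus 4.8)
The plan is to prove this purely by bookkeeping: trace, block by block, exactly which of the matrices $\bQ_i,\bS_i,\bR_i$ and which of the overlap variables $\bY_{i,j}^p$ enter each entry of $\barbQ_p$ and of $\hatbQ$, and then check that the collection of those variables is precisely what \cref{eq:barbX_p,eq:cV(cR_p),eq:barbY_p} and \cref{eq:hatbX,eq:cV(cR),eq:bY} package together. The essential input is \cref{thm:barbQ}: by \cref{eq:barbQ_ii}, the block $(\barbQ(\bX))_{i,i}$ is a function only of $\bQ_i$ and $\set{\bR_k\,|\,k\in\outneigh{\cH}{i}}$; by \cref{eq:barbQ_ij}, for $i\neq j$ the block $(\barbQ(\bX))_{i,j}$ is a function only of $\bS_i$, $\bS_j$ and $\set{\bR_k\,|\,k\in\cN_\cH^+(i,j)}$. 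Since $\cN_\cH^+(i,i)=\outneigh{\cH}{i}$, both cases are subsumed in the statement: $(\barbQ(\bX){+}\epsilon\bI)_{i,j}$ depends only on $\bQ_i$ (when $i{=}j$), on $\bS_i,\bS_j$, and on $\set{\bR_k\,|\,k\in\outneigh{\cH}{i,j}}$, the $\epsilon\bI$ summand being a constant.

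For $\barbQ_p$, I would use that $\cC_p$ is a clique, so $\edge{\cC_p}$ is the complete set of vertex pairs on $\vertex{\cC_p}$ together with all self-loops (the latter present because every vertex carries a $\bQ_i$ term), and it splits as the disjoint union $\edge{\cC_p\setminus\barcQ_o}\sqcup\edge{\cC_p\cap\barcQ_o}$. Hence every block of $\barbQ_p$ is assigned by exactly one case of \cref{QpDef}. For $(\cC_p(k),\cC_p(l))=(i,j)\in\edge{\cC_p\cap\barcQ_o}$ the block equals $\bY_{i,j}^p$ literally, and gathering these over all such edges is exactly $\barbY_p$ in \cref{eq:barbY_p}. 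For $(i,j)\in\edge{\cC_p\setminus\barcQ_o}$ the block is $(\barbQ(\bX){+}\epsilon\bI)_{i,j}$, which by the previous paragraph is a function only of: $\bQ_i$ with $(i,i)\in\edge{\cC_p\setminus\barcQ_o}$ (line one of \cref{eq:barbX_p}); $\bS_i,\bS_j$ with $(i,j)\in\edge{\cC_p\setminus\barcQ_o}$ (line two); and $\bR_k$ with $k\in\outneigh{\cH}{i,j}$ for some $(i,j)\in\edge{\cC_p\setminus\barcQ_o}$, which is precisely the set $\cV(\cR_p)$ of \cref{eq:cV(cR_p)} (line three). Taking the union of the variables appearing over all blocks of $\barbQ_p$ therefore shows $\barbQ_p$ is a function of $\barbX_p$ and $\barbY_p$ only, and in particular involves no block of $\bX$ outside $\barbX_p$ and no $\bY_{i,j}^q$ with $q\neq p$.

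For $\hatbQ$ the argument is the same accounting with $\barcQ_o$ in place of a single clique. By \cref{Qhatdef}, every block $(\hatbQ)_{i,j}$ with $(i,j)\notin\cE(\barcQ_o)$ is identically $\bzero$ and hence depends on nothing, while for $(i,j)\in\cE(\barcQ_o)$ it equals $(\barbQ(\bX){+}\epsilon\bI)_{i,j}-\sum_{p\in\bbL_{i,j}}\bY_{i,j}^p$, which depends only on $\bQ_i$ (if $i{=}j$), on $\bS_i,\bS_j$, on $\set{\bR_k\,|\,k\in\outneigh{\cH}{i,j}}$, and on $\set{\bY_{i,j}^p\,|\,p\in\bbL_{i,j}}$. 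Ranging over all $(i,j)\in\edge{\barcQ_o}$ and collecting, the $\bQ,\bS,\bR$ dependence assembles exactly into $\hatbX$ with index set $\cV(\cR)$ as in \cref{eq:cV(cR),eq:hatbX}, and the $\bY$ dependence assembles exactly into $\bY$ as in \cref{eq:bY}, using that $p\in\bbL_{i,j}\iff(i,j)\in\edge{\barcQ_o\cap\cC_p}$ so these are precisely the $\bY_{i,j}^p$ that occur. This gives $\hatbQ$ as a function of $\hatbX$ and $\bY$ only.

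I expect the only delicate points to be notational rather than substantive: confirming $\edge{\cC_p}=\edge{\cC_p\setminus\barcQ_o}\sqcup\edge{\cC_p\cap\barcQ_o}$ and that this genuinely covers \emph{every} block of $\barbQ_p$ (which rests on $\cC_p$ being complete, self-loops included), and handling the diagonal blocks uniformly through the convention $\outneigh{\cH}{i,i}=\outneigh{\cH}{i}$ so that the first and third lines of \cref{eq:barbX_p} and \cref{eq:hatbX} correctly absorb the $\bQ_i$ and $\bR_k$ contributions. The block-diagonal matrices $\barbX_p$ and $\hatbX$ may list a given $\bS_i$ more than once, once per incident edge in the relevant edge set, but this redundancy is harmless, since the claim concerns only functional dependence and not a minimal parametrization.
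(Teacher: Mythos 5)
Your proposal is correct and follows essentially the same route as the paper's proof: both arguments reduce to the block-level bookkeeping of \cref{thm:barbQ} (via \cref{eq:barbQ_ii,eq:barbQ_ij}) combined with the case split in \cref{QpDef,Qhatdef} over $\edge{\cC_p\setminus\barcQ_o}$ versus $\edge{\cC_p\cap\barcQ_o}$ (resp.\ $\edge{\barcQ_o}$), collecting the resulting $\bQ_i$, $\bS_i$, $\bR_k$, and $\bY_{i,j}^p$ into $\barbX_p$, $\barbY_p$, $\hatbX$, and $\bY$. Your write-up simply spells out the tracing in more detail than the paper does, and the points you flag (disjointness of the edge split and the diagonal convention) are indeed only notational.
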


\begin{proof}
    This proof follows directly from \cref{thm:barbQ,thm:VNDT_Chordal_Decomposition}.
    From \cref{QpDef,thm:barbQ}, if $(i,j)\in\edge{\cC_p\setminus\barcQ_o}$, \cref{eq:barbX_p} is sufficient to represent $(\barbQ)_{k,l}$, while if $(i,j)\in\edge{\cC_p\cap\barcQ_o}$, \cref{eq:barbY_p} is sufficient.
    Similarly, from \cref{Qhatdef,thm:barbQ}, if $(i,j)\in\edge{\cC_p\setminus\barcQ_o}$, \cref{eq:hatbX,eq:bY} are sufficient to represent $(\hatbQ)_{i,j}$.
\end{proof}

\begin{figure}
    \centering
    \includegraphics[width = 0.45\textwidth]{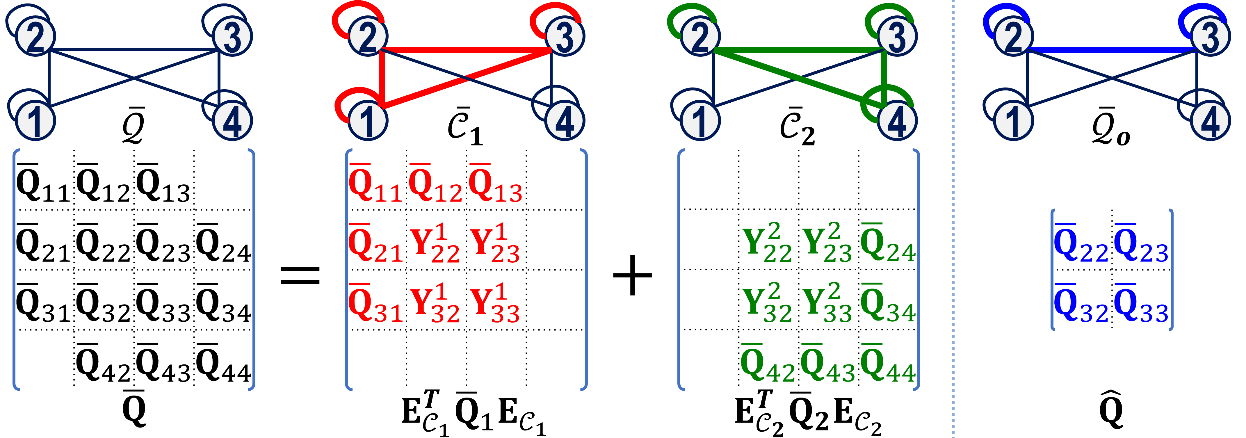}
    \caption{Chordal decomposition of $\barbQ$, corresponding to $\barcQ$ in \cref{fig:Augmented Graph}, so that $\barbQ{+}\epsilon\bI{\preceq} 0$ if and only if $\barbQ_i{\preceq} 0$ for $i{=}1,2$.}
    \label{fig:Chordal Decomposition}
    \vspace*{-1.25\baselineskip} 
\end{figure}

\cref{fig:Chordal Decomposition} illustrates an example of applying \cref{thm:VNDT_Chordal_Decomposition} to a graph and its associated negative semi-definite matrix.
In \cref{fig:Chordal Decomposition}, the matrix corresponding to the leftmost graph can be expressed as the sum of smaller negative semi-definite matrices, derived from the maximal cliques of the graph structure.
Consequently, the \gls{lmi} $\barbQ(\bX){\preceq}0$ is equivalent to $\barbQ_1(\barbX_1{,}\barbY_1){\preceq}{0}$, $\barbQ_2(\barbX_2{,}\barbY_2){\preceq}0$, $(\hatbQ(\hatbX{,}\bY))_{2,2}{=}(\barbQ(\bX))_{2,2}-\bY_{2{,}2}^1{-}\bY_{2{,}2}^2{=}\bzero$, $(\hatbQ(\hatbX{,}\bY))_{3,3}=(\barbQ(\bX))_{3{,}3}-\bY_{3{,}3}^1-\bY_{3{,}3}^2=\bzero$, and $(\hatbQ(\hatbX{,}\bY))_{2{,}3}=(\barbQ(\bX))_{2,3}-\bY_{2,3}^1-\bY_{2,3}^2=\bzero$.
In this case, $\cC_1(1){=}1$, $\cC_1(2){=}2$, $\cC_1(3){=}3$, $\cC_2(1){=}2$, $\cC_2(2){=}3$, and $\cC_2(3){=}4$.

\subsubsection{Parallel Computability of \texorpdfstring{\cref{eq:barbQ_p}}{Equation 20}} \label{subsubChap:Parallel_Computing_of_LMI}

Parallel computability is one of the main advantages of \cref{thm:VNDT_Chordal_Decomposition}. This is demonstrated by \cref{cor:No Overlap}, which shows that \glspl{lmi} in \cref{eq:barbQ_p} are independent from each other for different $p\in\bbN_M$, meaning that $\barbX_p$ and $\barbX_q$ for $p\neq q$ consist of different $(\bQ_i,\bS_i,\bR_i)$-dissipativity matrices. 

\begin{corollary} \label{cor:No Overlap}
    Under the conditions and definitions in \cref{thm:barbQ,thm:supergraph,thm:VNDT_Chordal_Decomposition}, for all $i,j\in \cC_p$ and $k,l\in \cC_q$ with $p\neq q$, if $(i,j),(k,l)\notin\edge{\barcQ_o}$, then 
    \begin{align} \label{eq:No Overlap}
        \cN_\cH^+(i,j)\cap\cN_\cH^+(k,l)=\emptyset.
    \end{align}
\end{corollary}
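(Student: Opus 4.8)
The plan is to argue by contradiction. Suppose some vertex $r$ lies in $\cN_\cH^+(i,j)\cap\cN_\cH^+(k,l)$. Since $\cN_\cH^+(i,j)=\outneigh{\cH}{i}\cap\outneigh{\cH}{j}$ and likewise for $(k,l)$, this forces $r$ to be a common out-neighbor in $\cH$ of all four vertices $i,j,k,l$. The governing idea is that a single common out-neighbor makes $\{i,j,k,l\}$ too tightly connected in $\barcQ$ to be split between two distinct maximal cliques, unless the relevant edges already live in the overlap graph $\barcQ_o$ --- which is precisely what the hypothesis $(i,j),(k,l)\notin\edge{\barcQ_o}$ forbids.

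First I would observe that for every pair $v,w\in\{i,j,k,l\}$ we have $r\in\outneigh{\cH}{v}\cap\outneigh{\cH}{w}=\cN_\cH^+(v,w)$, so $\cN_\cH^+(v,w)\neq\emptyset$; by the edge characterization \cref{eq:Edge_barQ} from \cref{thm:supergraph}, this places $(v,w)$ in $\edge{\barcQ}$. Hence $\{i,j,k,l\}$ induces a complete subgraph of $\barcQ$. Next I would invoke the elementary fact that in a finite graph any complete subgraph is contained in a maximal clique, so $\{i,j,k,l\}\subseteq\vertex{\cC_s}$ for some $s\in\bbN_M$, the list $\cC_1,\dots,\cC_M$ being exactly the maximal cliques of $\barcQ$ under the standing assumptions of \cref{thm:VNDT_Chordal_Decomposition}. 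Then $i,j\in\vertex{\cC_p}$ gives $(i,j)\in\edge{\cC_p}$ and $i,j\in\vertex{\cC_s}$ gives $(i,j)\in\edge{\cC_s}$; if $s\neq p$, then $(i,j)\in\edge{\cC_p\cap\cC_s}\subseteq\edge{\barcQ_o}$ by \cref{eq:barcQ_o}, contradicting $(i,j)\notin\edge{\barcQ_o}$, so $s=p$. Running the identical step with the pair $(k,l)$ and clique $\cC_q$ yields $s=q$, hence $p=s=q$, contradicting $p\neq q$. This contradiction shows no such $r$ exists, i.e.\ \cref{eq:No Overlap} holds.

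I do not anticipate a genuine obstacle; the only care needed is bookkeeping. One must use that $\barcQ_o$ in \cref{eq:barcQ_o} is assembled from intersections of \emph{distinct} maximal cliques, so that an edge lying in two distinct clique edge-sets is exactly what lands it in $\edge{\barcQ_o}$. One should also address the degenerate case in which two of $i,j,k,l$ coincide (say $i=j$): then the hypothesis concerns a self-loop $(i,i)\notin\edge{\barcQ_o}$, which says vertex $i$ belongs to a unique maximal clique, and the same chain of implications runs with ``edge'' replaced by ``vertex.'' The substantive content is the single line that a common out-neighbor of $i,j,k,l$ promotes them to a clique of $\barcQ$ via \cref{thm:supergraph}, after which uniqueness of the containing maximal clique off $\barcQ_o$ closes the argument.
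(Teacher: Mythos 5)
Your proof is correct and follows essentially the same route as the paper's: both arguments note that a common element of $\cN_\cH^+(i,j)$ and $\cN_\cH^+(k,l)$ makes every pair among $\set{i,j,k,l}$ adjacent in $\barcQ$ via \cref{thm:supergraph}, and then conclude that $(i,j)$ and $(k,l)$ must lie in $\edge{\barcQ_o}$, i.e., the contrapositive/contradiction closes the argument. The one difference is the middle step: the paper asserts directly that the cross-adjacencies force $i,j\in\vertex{\cC_q}$ and $k,l\in\vertex{\cC_p}$, whereas you embed the complete subgraph on $\set{i,j,k,l}$ in some maximal clique $\cC_s$ and then force $s=p$ and $s=q$; your version is in fact the more careful one, since a maximal clique such as $\cC_q$ may contain vertices not adjacent to $i$, so $i\in\vertex{\cC_q}$ does not follow from those adjacencies alone, while your detour through $\cC_s$ reaches the same conclusion via $\edge{\cC_p\cap\cC_s}\subseteq\edge{\barcQ_o}$ (using, as you note, that $\barcQ_o$ collects intersections of distinct maximal cliques) without that leap.
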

\begin{proof} 
    \cref{cor:No Overlap} can be proved by its contrapositive. Assume that $\cN_\cH^+(i,j){\cap}\cN_\cH^+(k,l){\neq}\emptyset$; that is, the intersections $\cN_\cH^+(i,k)$, $\cN_\cH^+(i,l)$, $\cN_\cH^+(j,k)$, and $\cN_\cH^+(j,l)$ are non-empty. This implies that $(\barbQ)_{ik}{\neq}\bzero$, $(\barbQ)_{il}{\neq}\bzero$, $(\barbQ)_{jk}{\neq}\bzero$, and $(\barbQ)_{jl}{\neq}\bzero$, meaning that $i,j{\in} \vertex{\cC_q}$ and $k,l{\in} \vertex{\cC_p}$. Since $i,j{\in} \vertex{\cC_p}$ and $k,l{\in} \vertex{\cC_q}$, it follows that $(i,j),(k,l){\in}\edge{\barcQ_o}$.
\end{proof}

\cref{cor:No Overlap} indicates that $\cV(\cR_p){\cap}\cV(\cR_q){=}\emptyset$. 
Since $\barbX_p$ is only defined by using $\bQ_i$ and $\bS_i$ in $\vertex{\cC_p{\setminus}\barcQ_o}$ and $\bR_i$ in $\cV(\cR_p)$, $\barbX_p$ and $\barbX_q$ share no common dissipativity parameters. 
Consequently, the constraints in \cref{eq:barbQ_p} are mutually independent, so they can be solved in parallel, dramatically improving the computational efficiency of \cref{alg:02} over \cref{alg:01}.
In addition, agents in $\vertex{\cC_p}$ need not share their dissipativity information in $\barbX_p$ with agents outside $\vertex{\cC_p}$, thereby offering greater privacy than \cref{alg:01}.



\subsubsection{Equivalent Problem Statement}\label{subs:equiv}

By \cref{thm:VNDT_Chordal_Decomposition}, the $\cL_2$-stability of a multi-agent system can be verified by solving the feasibility problem
\begin{subequations} \label{opt:Main Chordal Problem}
    \begin{align}
        \text{Find}\quad&\bX_i \quad\forall i\in\bbN_N,\\
        \text{s.t.}\quad&\bX_i\in\bbP_i, \label{opt:Main Chordal Problem: KYP} \\ 
            &\diag(\barbX_p,\barbY_p)\in\barbbQ_p,\quad\forall p\in\bbN_M, \label{opt:Main Chordal Problem: Stability}  \\
            &\diag(\hatbX,\bY)\in\hatbbQ, \label{opt:Main Chordal Problem: Equality}
    \end{align}
\end{subequations}
where $\barbbQ_p=\{\diag(\barbX_p,\barbY_p)|\barbQ_p(\barbX_p,\barbY_p)\preceq0\}$ and $\hatbbQ=\{\diag(\hatbX,\bY)|\hatbQ(\hatbX,\bY)=\bzero\}$, as defined in \cref{thm:VNDT_Chordal_Decomposition}.


\subsection{Distributed Stability Analysis} \label{subChap:DNSA}
\begin{table}
\caption{Variables for \gls{admm} of \cref{opt:Main Chordal Problem}}
\label{tab:ADMM Variables}
\begin{center}
\renewcommand{\arraystretch}{1.3}
\begin{tabular}{p{1cm}|p{6.5cm}}
\hline
 Variables & Definition \\
\hline\hline
$\bZ_p$ & 
The clone variable of $\diag(\barbX_p,\barbY_p)$\\
\hline
$\bW$ & 
The clone variable of $\diag(\hatbX,\bY)$\\
\hline
$\bJ$ & 
The global clone variable of $\diag(\bX,\bY)$\\
\hline
$\bT_i$ & The dual variable of $\bX_i$ \\
\hline
$\bU_p$  & The dual variable of $\bZ_p$ \\
\hline
$\bV$    & The dual variable of $\bW$ \\
\hline
$\tilbJ_i$  & 
The block diagonal matrix defined from block diagonal components in $\bJ$, which serve as a clone variable of $\bX_i$ \\
\hline
$\barbJ_p$   & 
The block diagonal matrix defined from block diagonal components in $\bJ$, which serve as a clone variable of $\bZ_p$ \\
\hline
$\hatbJ$      & 
The block diagonal matrix defined from block diagonal components in $\bJ$, which serve as a clone variable of $\bW$ \\
\hline
\end{tabular}
\end{center}
\vspace*{-1.75\baselineskip} 
\end{table}

\gls{admm} can be applied to iteratively solve \cref{opt:Main Chordal Problem}, as outlined in \cref{Chap:ISNSA}. This will be crucial to solving the problem in a distributed manner. 
To apply \gls{admm}, we need to define clone and dual variables of \cref{opt:Main Chordal Problem}, which are summarized in \cref{tab:ADMM Variables}.
With these, \cref{opt:Main Chordal Problem} can be solved iteratively using \gls{admm} in three steps.

\begin{enumerate}
    \item Primal \textbf{X}, Clones \textbf{Z}, \textbf{W} Update
    \begin{subequations} \label{eq:Chordal proj}
        \begin{align}
            \bX_i^{k+1}
                &=\Pi_{\bbP_i}(\tilbJ_i^k-\bT_i^k), 
                \quad i\in\bbN_N, \label{eq:X Chordal proj}\\
            \bZ_p^{k+1}
                &=\Pi_{\barbbQ_p}(\barbJ_p^{k}-\bU_p^k) 
                \quad p\in\bbN_M, \label{eq:Z Chordal proj} \\ 
            \bW^{k+1}
                &=\Pi_{\hatbbQ}(\hatbJ^k-\bV^k). \label{eq:W Chordal min} 
        \end{align}
    \end{subequations}
    \item Clone \textbf{J} Update
    \begin{align} \label{eq:J Chordal bulk}
        \hspace{-7pt}\bJ^{k+1}{=}\arg\min_\bJ
        \left(\begin{array}{l}
             \hspace{-5pt}\sum_{i\in\bbN_N}
                            \|\bX_i^{k+1}-\tilbJ_i+\bT_i^k\|_F^2 \\
                \;+\sum_{p\in\bbN_M}
                            \|\bZ_p^{k+1}-\barbJ_p+\bU_p^k\|_F^2 \hspace{-7pt} \\
                \;\;+\|\bW^{k+1}-\hatbJ+\bV_r^k\|_F^2
        \end{array}\right).
    \end{align}
    \item Dual \textbf{T}, \textbf{U}, \textbf{V} Update
    \begin{subequations}\label{eq:Dual Chordal update}
        \begin{align}
            \bT_i^{k+1}&=\bT_i^k+(\bX_i^{k+1}-\tilbJ_i^{k+1})=\bT_i+\tilbR_i^{k+1} \label{eq:T Chordal update} \\
            \bU_p^{k+1}&=\bU_p^k+(\bZ_p^{k+1}-\barbJ_p^{k+1})=\bU_p^k+\barbR_p^{k+1} \label{eq:U Chordal update} \\
            \bV^{k+1}&=\bV^k+(\bW^{k+1}-\hatbJ^{k+1})=\bV^k+\hatbR^{k+1}, \label{eq:V Chordal update}
        \end{align}
    \end{subequations}
    where $\tilbR_i^{k+1}=\bX_i^{k+1}-\tilbJ_i^{k+1}$, $\barbR_p^{k+1}=\bZ_p^{k+1}-\barbJ_p^{k+1}$, and $\hatbR^{k+1}=\bW^{k+1}-\hatbJ^{k+1}$.
\end{enumerate}

\subsubsection{Convergence}
The convergence of the iterative processes described in \cref{eq:Chordal proj,eq:J Chordal bulk,eq:Dual Chordal update} is established in the following theorem.
\begin{theorem} \label{thm:chordal_admm}
    Suppose that the assumptions of \cref{thm:VNDT_Chordal_Decomposition} hold and that \cref{opt:Main Problem} has a feasible solution. If $\bbP_i$ for all $i{\in}\bbN_N$, $\barbbQ_p$ for all $p{\in}\bbN_M$, and $\hatbbQ$ are closed convex sets, then the \gls{admm} iterates defined in \cref{eq:Chordal proj,eq:J Chordal bulk,eq:Dual Chordal update} satisfy
    \begin{align} \label{eq:chordal converge}
    \begin{array}{lll}
         \displaystyle\lim_{k\to\infty}\tilbR_i^k=\bzero, &
         \displaystyle\lim_{k\to\infty}\barbR_p^k=\bzero, &
         \displaystyle\lim_{k\to\infty}\hatbR^k=\bzero, \\
         \displaystyle\lim_{k\to\infty}\bT_i^k=\bT_i^\star, &
         \displaystyle\lim_{k\to\infty}\bU_p^k=\bU_p^\star, &
         \displaystyle\lim_{k\to\infty}\bV^k=\bV^\star,
    \end{array}
    \end{align}
    for all $i{\in}\bbN_N$ and $p{\in}\bbN_M$, where $\bT_i^\star$, $\bU_p^\star$, and $\bV^\star$ are dual optimal points of \cref{opt:Main Chordal Problem: KYP,opt:Main Chordal Problem: Stability,opt:Main Chordal Problem: Equality}, respectively.
\end{theorem}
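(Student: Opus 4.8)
The plan is to mimic the proof of \cref{thm:isnsa_admm}: recast \cref{opt:Main Chordal Problem} in the canonical two-block form \cref{eq:Equivalent Constrained Optimization Problem}, identify the iterations \cref{eq:Chordal proj,eq:J Chordal bulk,eq:Dual Chordal update} with the generic \gls{admm} recursion \cref{eq:General ADMM} for that reformulation, verify \cref{ass:As 01,ass:As 02}, and invoke \cref{thm:ADMM}. The one feature absent from \cref{thm:isnsa_admm} is that the reformulation is a \emph{consensus}-form \gls{admm}: the shared matrix $\bJ$ plays the role of the clone variable, while the local copies $\bX_i$, $\bZ_p$, and $\bW$ are bundled into the primal variable.

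First I would bundle the local variables into a single block-diagonal primal matrix $\Xi = \diag\left((\bX_i)_{i\in\bbN_N}, (\bZ_p)_{p\in\bbN_M}, \bW\right)$ carrying the separable objective $f(\Xi) = \sum_{i\in\bbN_N} I_{\bbP_i}(\bX_i) + \sum_{p\in\bbN_M} I_{\barbbQ_p}(\bZ_p) + I_{\hatbbQ}(\bW)$, and introduce the linear selection operator $\Phi$ that maps $\bJ$ to $\diag\left((\tilbJ_i)_{i\in\bbN_N}, (\barbJ_p)_{p\in\bbN_M}, \hatbJ\right)$, i.e.\ duplicates the blocks of $\bJ$ that each constraint reads. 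With $\Psi = \Phi(\bJ)$ and $\Omega = \mathrm{range}(\Phi)$ --- a linear subspace, hence closed and convex --- \cref{opt:Main Chordal Problem} becomes $\min_{\Xi,\Psi} f(\Xi) + I_\Omega(\Psi)$ subject to $\Xi - \Psi = \bzero$, which is exactly \cref{eq:Equivalent Constrained Optimization Problem}. Because both $f$ and the augmented-Lagrangian quadratic decouple over $\bX_i$, $\bZ_p$, $\bW$, the $\Xi$-step \cref{eq:General ADMM x Update} is precisely \cref{eq:X Chordal proj,eq:Z Chordal proj,eq:W Chordal min}; since the projection $\Pi_\Omega(\cdot)$ onto the consensus subspace is realized by the least-squares problem \cref{eq:J Chordal bulk} followed by re-selection through $\Phi$, the $\Psi$-step \cref{eq:General ADMM z Update} is the $\bJ$-update; and the dual step \cref{eq:General ADMM u Update}, read blockwise, is \cref{eq:Dual Chordal update} with $\bT = \diag\left((\bT_i)_i, (\bU_p)_p, \bV\right)$ and residual $\bR = \diag\left((\tilbR_i)_i, (\barbR_p)_p, \hatbR\right)$.

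It then remains to check the hypotheses of \cref{thm:ADMM}. For \cref{ass:As 01}: each of $\bbP_i$, $\barbbQ_p$, $\hatbbQ$ is closed and convex by assumption, so their product is closed and convex, hence $f$ --- the indicator of that product --- is closed, proper, and convex by \cref{lem:closed}; nonemptiness holds because \cref{opt:Main Problem} is feasible, and the equivalences established in the proof of \cref{thm:VNDT_Chordal_Decomposition} turn any feasible $\bX$ for \cref{opt:Main Problem}, together with suitable $\epsilon>0$ and $\bY_{i,j}^p$ obtained from \cref{thm:Chordal Decomposition}, into a feasible point of \cref{opt:Main Chordal Problem} and hence of the reformulation; and $\Omega$, being a subspace, is convex. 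For \cref{ass:As 02}: the reformulated problem minimizes a closed proper convex function subject only to the linear equality $\Xi - \Psi = \bzero$ and is feasible, so by Slater's condition its Lagrangian has a saddle point \cite{boyd2004convex}.

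With \cref{ass:As 01,ass:As 02} in hand and the problem feasible, \cref{thm:ADMM} gives $\bR^k \to \bzero$ and $\bT^k \to \bT^\star$; reading off block components yields $\tilbR_i^k \to \bzero$, $\barbR_p^k \to \bzero$, $\hatbR^k \to \bzero$, $\bT_i^k \to \bT_i^\star$, $\bU_p^k \to \bU_p^\star$, and $\bV^k \to \bV^\star$, which is \cref{eq:chordal converge}. I expect the main obstacle to be bookkeeping rather than analysis: carefully checking that the stated iterations really are generic \gls{admm} for the consensus reformulation --- in particular that projecting onto the consensus subspace $\Omega$ amounts to the averaging/least-squares update \cref{eq:J Chordal bulk}, which is where it matters that several of $\tilbJ_i$, $\barbJ_p$, $\hatbJ$ overlap inside $\bJ$ --- since the abstract \cref{thm:ADMM} is stated only for the constraint $\bX - \bZ = \bzero$ and not for a general consensus coupling. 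Once that identification is secured, the verification of \cref{ass:As 01,ass:As 02} is routine.
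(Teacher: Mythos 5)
Your proposal is correct and follows essentially the same route as the paper, which likewise proves \cref{thm:chordal_admm} by reformulating \cref{opt:Main Chordal Problem} in the form of \cref{eq:Equivalent Constrained Optimization Problem}, noting via \cref{thm:VNDT_Chordal_Decomposition} that feasibility of \cref{opt:Main Problem} yields feasibility of the reformulation, and verifying \cref{ass:As 01,ass:As 02} before invoking \cref{thm:ADMM}, exactly as in \cref{thm:isnsa_admm}. Your explicit consensus-form bookkeeping (the selection operator and the identification of the $\bJ$-update with projection onto its range) simply supplies the details the paper leaves to the phrase ``analogously to \cref{thm:isnsa_admm}.''
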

\begin{proof}
    Due to \cref{thm:VNDT_Chordal_Decomposition}, \cref{opt:Main Chordal Problem} has a solution whenever \cref{opt:Main Problem} does. The proof proceeds analogously to that of \cref{thm:isnsa_admm}, by reformulating \cref{opt:Main Chordal Problem} and verifying the assumptions of \cref{thm:ADMM}.
\end{proof}

\subsubsection{Vectorization of \texorpdfstring{\cref{eq:hatbQ}}{Equation 21}}\label{subsubChap:Vectorization_of_Matrix_Equation}

All equality constraints in \cref{eq:hatbQ} can be combined into a single equality constraint by vectorization, as described in \cref{cor:Equality_constraints}.
This accelerates \cref{alg:02} by providing an exact solution to the projection step.
Furthermore, in certain networks, the resulting equality constraint is separable; that is, it can be decomposed into multiple smaller equality constraints, as discussed in \cref{subChap:Separable_Equality}.
\begin{corollary} \label{cor:Equality_constraints}
    Under conditions in \cref{thm:VNDT_Chordal_Decomposition}, \cref{eq:hatbQ} can be reformulated into a single linear system equation, 
    \begin{align} \label{eq:Big Equality}
        \bM\begin{bmatrix}
            \bx \\ \by
        \end{bmatrix} = \bzero,
    \end{align}
    where $\bx {=} \ve_b(\hatbX)$, $\by{=}\ve_b(\bY)$, $\hatbX$ and $\bY$ follows the definition in \cref{thm:VNDT_Chordal_Decomposition}, and $\bM$ is a full row rank matrix.
\end{corollary}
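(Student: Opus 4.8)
The plan is to show that the collection of scalar equations implied by $\hatbQ = \bzero$ is equivalent to a single matrix equation of the form \cref{eq:Big Equality}, and then argue full row rank. First I would invoke \cref{Qhatdef}: $\hatbQ = \bzero$ is equivalent to the family of block equations $(\barbQ(\bX){+}\epsilon\bI)_{i,j} = \sum_{p\in\bbL_{i,j}}\bY_{i,j}^p$ for each $(i,j)\in\edge{\barcQ_o}$ (the off-$\barcQ_o$ blocks are identically zero and contribute nothing). By \cref{thm:barbQ}, each block $(\barbQ(\bX))_{i,j}$ is \emph{linear} in the dissipativity matrices $\bQ_i$, $\bS_i$, $\bS_j$, and the $\bR_k$ for $k\in\cN_\cH^+(i,j)$, with coefficients built from the fixed interconnection blocks $(\bH)_{k,i}$, $(\bH)_{k,j}$ (for the diagonal blocks, from $(\barbQ)_{i,i}$ in \cref{eq:barbQ_ii}). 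The next step is to vectorize each such block equation: using the identities $\ve(\bA\bX\bC)=(\bC^T\otimes\bA)\ve(\bX)$ and $\ve(\bX^T)=\bP\,\ve(\bX)$ from the Vectorization subsection, every term $(\bH)_{k,i}^T\bR_k(\bH)_{k,j}$, $\bS_i(\bH)_{i,j}$, $(\bH)_{j,i}^T\bS_j^T$, and each $\bY_{i,j}^p$ becomes a known constant matrix times a sub-vector of $\bx = \ve_b(\hatbX)$ or $\by = \ve_b(\bY)$. Stacking these vectorized equalities over all $(i,j)\in\edge{\barcQ_o}$ produces a single linear system $\bM[\bx^T\ \by^T]^T = \bzero$ for some explicitly assembled $\bM$, and I would note that the presence of the identity-coefficient term $\bY_{i,j}^p$ (each appearing in exactly one block equation, per \cref{eq:bbL_ij}) guarantees every row of $\bM$ is nontrivial.

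The main obstacle is establishing that $\bM$ has full row rank, i.e., that no stacked block equation is a linear consequence of the others. The key observation I would rely on is the structure of the $\bY$ variables: for each overlapping edge $(i,j)\in\edge{\barcQ_o}$ and each $p\in\bbL_{i,j}$, the variable $\bY_{i,j}^p$ appears with coefficient $\mathbf{I}$ (or $\bP$ on the transposed entry) in the single block equation indexed by $(i,j)$ and in no other block equation. Hence the sub-block of $\bM$ corresponding to the $\by$ columns, when restricted to any fixed $(i,j)$, contains an identity block that does not interact with the equations for other edges. This gives $\bM$ a block-lower-triangular-like pattern (after reordering) with identity blocks on the relevant pivot positions, from which full row rank follows immediately. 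One subtlety to handle carefully is symmetry: the blocks $(i,j)$ and $(j,i)$ of the symmetric matrix $\hatbQ$ carry redundant information, so I would fix an orientation on $\edge{\barcQ_o}$ (using only $i<j$ plus the self-loops $i=i$) when defining the rows of $\bM$, so that no equation is literally repeated; with this convention the identity-block argument goes through cleanly and $\bM$ is full row rank. This completes the reduction, since $\bx$ and $\by$ are then exactly $\ve_b(\hatbX)$ and $\ve_b(\bY)$ as claimed.
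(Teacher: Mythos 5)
Your proposal follows essentially the same route as the paper: vectorize each block equation of \cref{eq:hatbQ} via $\ve(\bA\bX\bC)=(\bC^T\otimes\bA)\ve(\bX)$ and $\ve(\bX^T)=\bP\,\ve(\bX)$, stack the resulting linear relations over $(i,j)\in\edge{\barcQ_o}$ into $\bM$, and deduce full row rank from the identity coefficients multiplying the $\bY_{i,j}^p$ variables. Your rank argument is in fact slightly more explicit than the paper's (which only notes that each per-edge block $\bM_{i,j}'$ contains an identity submatrix), since you spell out why the disjointness of the $\by_{i,j}$ columns across edges lets full row rank survive the stacking; this is a correct and welcome refinement, not a different method.
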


\begin{proof}
    To establish \cref{eq:Big Equality}, vectorization operators are applied to \cref{eq:hatbQ}, resulting in
    \begin{align} \label{eq:vec_hatbQ}
        &\ve((\barbQ(\bX)+\epsilon\bI)_{i,j})-\row(\bI_{l_i^2})_{p\in\bbL_{i,j}}\ve_b(\hatbY_{i,j})=\bzero,
    \end{align}
    where $l_i$ is the dimension of the exogenous input $\bu_i$ as defined in \cref{Chap:Problem Statement}, and $\hatbY_{i,j}=\diag(\bY_{i,j}^p)_{p\in\bbL_{i,j}}$. 
    Based on \cref{eq:barbQ_ii,eq:barbQ_ij}, \cref{eq:vec_hatbQ} has two forms,
    \begin{align*}
        &\bI_{l_i^2}\ve(\bQ_i)+\row\big((\bH)_{k,i}^T\otimes(\bH)_{k,i}^T\big)_{k\in\outneigh{\cH}{i}}\ve_b(\bR_k)_{k\in\outneigh{\cH}{i}} \nonumber\\
        &\quad+\epsilon\ve(\bI)-\row(\bI_{l_i^2})_{p\in\bbL_{i,j}}\ve_b(\hatbY_{i,j})=\bzero,\quad\qquad i{=}j, \\
        &\big((\bH)_{i,j}^T\otimes\bI_{l_i}\big)\ve(\bS_i){+}\big(\bI_{m_j}\otimes(\bH)_{j,i}^T\bP\big)\ve(\bS_j) \nonumber\\ 
        &\quad+\row\big((\bH)_{k,j}^T\otimes(\bH)_{k,i}^T\big)_{k\in\cN_\cH^+(i,j)}\ve_b(\bR_k)_{k\in\cN_\cH^+(i,j)} \nonumber\\
        &\quad-\row(\bI_{l_j\times l_i})_{p\in\bbL_{i,j}}\ve_b(\hatbY_{i,j})=\bzero,\qquad\qquad\qquad i{\neq}j.
    \end{align*}
    Each case has the form of a linear system equation $\bM_{i,j}'\begin{bmatrix}
            \bx_{i,j} \\ \by_{i,j}
        \end{bmatrix} = \bzero$,
    where, for $i=j$,
    \begin{align*}
        \bM_{i,j}'&=\begin{bmatrix}
            \bI_{l_i^2}^T \\
            \row\big((\bH)_{k,i}^T\otimes(\bH)_{k,i}^T\big)_{k\in\outneigh{\cH}{i}}^T \\
            -\row(\bI_{l_i^2})_{p\in\bbL_{i,i}}^T
        \end{bmatrix}^T, \\
        \bx_{i,j}&=\begin{bmatrix}
            \ve(\bQ_i) \\ \ve_b(\bR_k)_{k\in\outneigh{\cH}{i}} 
        \end{bmatrix},\quad
        \by_{i,j}=\ve_b(\hatbY_{i,i}),
    \end{align*}
    and for $i\neq j$,
    \begin{align*}
        \bM_{i,j}'&=\begin{bmatrix}
            \big((\bH)_{i,j}^T\otimes\bI_{l_i}\big)^T \\ 
            \big(\bI_{m_j}\otimes(\bH)_{j,i}^T\bP\big)^T \\
            \row\big((\bH)_{k,j}^T\otimes(\bH)_{k,i}^T\big)_{k\in\cN_\cH^+(i,j)}^T \\ 
            -\row(\bI_{l_j\times l_i})_{p\in\bbL_{i,j}}^T
        \end{bmatrix}^T \\
        \bx_{i,j}&=\begin{bmatrix}
            \ve(\bS_i) \\ \ve(\bS_j) \\ \ve_b(\bR_k)_{k\in\cN_\cH^+(i,j)}
        \end{bmatrix}, \quad
        \by_{i,j}=\ve_b(\hatbY_{i,j}).
    \end{align*}
    Each $\bM'_{i,j}\hspace{-2pt}$ is full row rank.
    $\begin{bmatrix}
        \bx_{i,j} \\ \by_{i,j}
    \end{bmatrix}$ can be embedded into 
    $\begin{bmatrix}
        \bx \\ \by
    \end{bmatrix}$ by augmenting each $\bM_{i,j}'$ with appropriate zero columns, yielding $\bM_{i,j}$.
    Therefore, \cref{eq:Big Equality} is constructed by using $\bM=\col(\bM_{i,j})_{(i,j)\in\edge{\barcQ_o}}$.

    To complete the proof, it remains to note that $\bM$ inherits full row rank from the
    $\bM_{i,j}'$ matrices, each of which contains an identity submatrix, either $\bI_{l_i^2}$ or $\bI_{l_j\times l_i}$.
\end{proof}

\subsubsection{Closed Form Solutions} \label{subsubChap:Reformulation}
For practical implementation, \cref{eq:W Chordal min,eq:J Chordal bulk}, can be reformulated in closed form, as presented in the following corollaries. 
\begin{corollary} \label{thm:W_Reformulation}
    Consider the projection in \cref{eq:W Chordal min}. Suppose that $\hatbbQ$ is a hyperplane as defined in \cref{cor:Equality_constraints}, associated with $\bW$. 
    Then, the closed-form solution to \cref{eq:W Chordal min} is given by
    \begin{align} \label{eq:W Chordal proj}
        \bW^{k+1}=\ive_b\big((\bI-\bM^T(\bM\bM^T)^{-1}\bM)(\hatbj^k-\bv^k)\big),
    \end{align}
    where $\bw{=}\ve_b(\bW)$, $\hatbj^k{=}\ve_b(\hatbJ^k)$, and $\bv^k{=}\ve_b(\bv^k)$.
\end{corollary}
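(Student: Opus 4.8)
The plan is to recognize \cref{eq:W Chordal min} as the Euclidean projection of a fixed point onto a linear subspace and to evaluate it in closed form through its KKT (Lagrange-multiplier) conditions. First I would pass to vectorized coordinates: set $\bw{=}\ve_b(\bW)$ and $\bp{=}\ve_b(\hatbJ^k){-}\ve_b(\bV^k){=}\hatbj^k{-}\bv^k$, and recall that $\ve_b$ is an isometry from the space of (block-diagonal) matrices with the Frobenius norm onto $\bbR^d$ with the Euclidean norm. Hence computing $\bW^{k+1}{=}\Pi_{\hatbbQ}(\hatbJ^k{-}\bV^k)$ is equivalent to solving the vector problem $\min_{\bz}\tfrac12\norm{\bz{-}\bp}_2^2$ subject to $\bz{\in}\ve_b(\hatbbQ)$. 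By \cref{cor:Equality_constraints}, after arranging the components of the vectorized variable as $[\bx;\by]$, the set $\hatbbQ$ is precisely the solution set of $\bM[\bx;\by]{=}\bzero$; since the right-hand side of \cref{eq:Big Equality} is $\bzero$, this is the \emph{linear subspace} $\ker\bM$ (not a genuine affine set), so the projection exists and is unique as an orthogonal projection onto a closed subspace.

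Next I would solve the equality-constrained least-squares problem. Forming the Lagrangian $\tfrac12\norm{\bz{-}\bp}_2^2{+}\lambda^T\bM\bz$, stationarity gives $\bz{=}\bp{-}\bM^T\lambda$, and imposing feasibility $\bM\bz{=}\bzero$ forces $\bM\bp{=}\bM\bM^T\lambda$. Because \cref{cor:Equality_constraints} guarantees that $\bM$ has full row rank, $\bM\bM^T$ is invertible, so $\lambda{=}(\bM\bM^T)^{-1}\bM\bp$ and therefore $\bz{=}(\bI{-}\bM^T(\bM\bM^T)^{-1}\bM)\bp$. Equivalently, one may simply invoke the standard fact that $\bI{-}\bM^T(\bM\bM^T)^{-1}\bM$ is the orthogonal projector onto $\ker\bM$. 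Substituting $\bp{=}\hatbj^k{-}\bv^k$ and applying $\ive_b$ to return to matrix coordinates yields $\bW^{k+1}{=}\ive_b\big((\bI{-}\bM^T(\bM\bM^T)^{-1}\bM)(\hatbj^k{-}\bv^k)\big)$, which is exactly \cref{eq:W Chordal proj}.

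I expect the only delicate points — and none of them is a real obstacle — to be bookkeeping rather than mathematics: (i) checking that the block ordering implicit in $\ve_b$ matches the ordering used to assemble $\bM$ in \cref{cor:Equality_constraints}, so that $\ve_b(\hatbbQ)$ really equals $\ker\bM$ rather than a permuted copy of it (a permutation would leave the final formula unchanged since permutation matrices are orthogonal, but it is worth a sentence); and (ii) noting explicitly that no affine-shift term appears in the projector because the constraint is homogeneous. Everything else is the textbook computation above, and the full-row-rank property inherited from \cref{cor:Equality_constraints} is precisely what makes $(\bM\bM^T)^{-1}$ well defined.
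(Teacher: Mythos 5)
Your proposal is correct and follows essentially the same route as the paper: vectorize via $\ve_b$, recast \cref{eq:W Chordal min} as a least-squares projection onto the solution set of $\bM\bw=\bzero$, and use the full row rank of $\bM$ from \cref{cor:Equality_constraints} to obtain the closed-form projector $(\bI-\bM^T(\bM\bM^T)^{-1}\bM)$. The only difference is that you derive this projector explicitly through the KKT conditions, whereas the paper simply cites the standard closed-form result; this is a matter of detail, not of approach.
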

\begin{proof}
    By assumption, $\bW{\in}\hatbbQ$ is equivalent to $\bM\bw{=}\bzero$ using $\bw{=}\ve_b(\bW)$. 
    Thus, projecting $\hatbj^k{-}\bv^k$ onto the hyperplane $\bM\bw{=}\bzero$, which is constructed as
    \begin{align} \label{eq:W Chordal projection vec}
    \begin{split}
        \arg\min_{\bw}\;&\|\bw-\hatbj^k+\bv^k\|_2^2, \\
            \sthat\;&\bM\bw=\bzero,
    \end{split}
    \end{align}
    represents $\ve_b(\bW^\star)$, where $\bW^\star$ is the original solution to \cref{eq:W Chordal min}.
    Since $\bM$ has full row rank, $(\bM\bM^T)^{-1}$ exists, and \cref{eq:W Chordal projection vec} admits the closed-form solution \cite{boyd2004convex}, 
    \begin{align}
        \bw^\star&=(\bI-\bM^T(\bM\bM^T)^{-1}\bM)(\hatbj^k-\bv^k).
    \end{align} 
    Accordingly, the solution to \cref{eq:W Chordal min} is $\ve_b^{-1}(\bw^\star)$.
\end{proof}


\begin{figure}
    \centering
    \includegraphics[width = 0.5\textwidth]{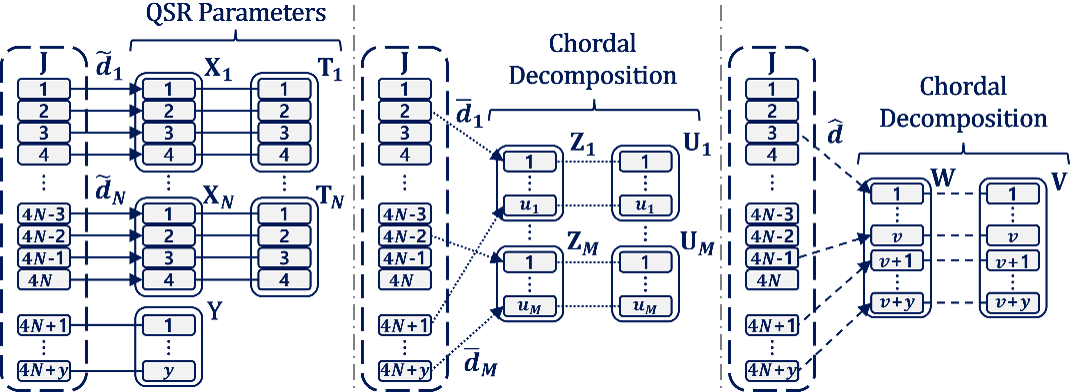}
    \caption{Mapping among all block diagonal components of variables in \cref{tab:ADMM Variables}: The gray rectangles represent the block component indices defining each variable. The QSR parameters are grouped by agent, with blocks for each dissipativity matrix, while the chordal parameters are grouped by clique with blocks for each \gls{lmi} and a matrix equation.}
    \label{fig:Matching}
    \vspace*{-1.25\baselineskip} 
\end{figure}

\begin{corollary} \label{thm:J_Reformulation}
    Consider the projection in \cref{eq:J Chordal bulk}. 
    Define $\bJ_a$ as the $a^\thh$ diagonal element of $\bJ$, and block diagonal matrices $\tilbJ_i$, $\barbJ_p$, and $\hatbJ$ as global clone variables of $\bX_i$, $\bZ_p$, and $\bW$, respectively.
    Further, define 
    \begin{align*}
        \widetilde{d}_a&=\set{ (i,j)\in \bbN_N\times\bbN_3 | \bJ_a=(\widetilde{\bJ}_i)_{j,j} },\quad i\in\bbN_N \\
        \overline{d}_a&=\set{ (p,j)\in \bbN_M\times\bbN_{u_p} | \bJ_a=(\overline{\bJ}_p)_{j,j} },\quad p\in\bbN_M \\
        \widehat{d}_a&=\set{ j\in \bbN_{\nu+y} | \bJ_a=(\hat{\bJ})_{j,j} },
    \end{align*}
    where $u_p$, $v$, and $y$ represents the number of block matrices associated with $\bZ_p$, $\hatbX$, and $\bY$, respectively.
    \cref{fig:Matching} illustrates these relationships of mappings.
    Then, the closed-form solution of \cref{eq:J Chordal bulk} for each block $\bJ_a$ of $\bJ$, is
    \begin{align} \label{eq:J Chordal update}
        \bJ_a^{k+1}&{=}
        \frac{
            \hspace{-7pt}\displaystyle\sum_{(i,j)\in\widetilde{d}_a}\hspace{-10pt}
                        {(\bX_i^{k+1}{+}\bT_i^k)_{j,j}}
            {+}\hspace{-14pt}\sum_{(p,j)\in\overline{d}_a}\hspace{-12pt}
                        {(\bZ_p^{k+1}{+}\bU_p^k)_{j,j}}
            {+}\hspace{-6pt}\sum_{j\in\hat{d}_a}\hspace{-3pt}
                    (\bW^{k+1}{+}\bV^k)_{j,j}
                    }{|\widetilde{d}_a|+|\overline{d}_a|+|\hat{d}_a|},
    \end{align}
    for all $a\in\bbN_{3N+y}$.
\end{corollary}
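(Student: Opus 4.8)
The plan is to exploit the fact that every matrix appearing in \cref{eq:J Chordal bulk} is block diagonal with a compatible partition, so that the objective separates over the individual diagonal blocks $\bJ_a$ of $\bJ$; each resulting one-block subproblem is an unconstrained least squares problem of the form ``sum of squared Frobenius distances to fixed targets,'' whose unique minimizer is the average of those targets. Carrying this out reproduces \cref{eq:J Chordal update}.

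First I would note that inside each norm term of \cref{eq:J Chordal bulk} the three matrices share a common block-diagonal partition: $\bX_i^{k+1}$, $\bT_i^k$, and $\widetilde{\bJ}_i$ all carry the structure of $\diag(\bQ_i,\bS_i,\bR_i)$; $\bZ_p^{k+1}$, $\bU_p^k$, $\overline{\bJ}_p$ share the partition used in \cref{eq:Z Chordal proj}; and $\bW^{k+1}$, $\bV^k$, $\widehat{\bJ}$ share that of $\diag(\widehat{\bX},\bY)$. Hence each Frobenius norm in \cref{eq:J Chordal bulk} equals the sum of the squared Frobenius norms of the corresponding diagonal blocks, off-diagonal blocks contributing nothing. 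Using the definitions of $\widetilde{\bJ}_i$, $\overline{\bJ}_p$, and $\widehat{\bJ}$ as selections of diagonal blocks of $\bJ$ (the mapping pictured in \cref{fig:Matching}), each such block equals exactly one $\bJ_a$, and the sets $\widetilde{d}_a$, $\overline{d}_a$, $\widehat{d}_a$ are by construction the fibers of these selection maps, i.e. they enumerate precisely which expanded terms involve $\bJ_a$. Regrouping the (finite) sum accordingly, the objective of \cref{eq:J Chordal bulk} equals $\sum_{a=1}^{3N+y}\phi_a(\bJ_a)$, where
\begin{align*}
\phi_a(\bJ_a) &= \sum_{(i,j)\in\widetilde{d}_a}\big\|\bJ_a-(\bX_i^{k+1}+\bT_i^k)_{j,j}\big\|_F^2 + \sum_{(p,j)\in\overline{d}_a}\big\|\bJ_a-(\bZ_p^{k+1}+\bU_p^k)_{j,j}\big\|_F^2 \\
&\quad + \sum_{j\in\widehat{d}_a}\big\|\bJ_a-(\bW^{k+1}+\bV^k)_{j,j}\big\|_F^2 .
\end{align*}
Since the $\bJ_a$ are the independent coordinates of $\bJ$ and each appears only in the $a$-th summand, minimizing $\sum_a\phi_a$ over $\bJ$ reduces to minimizing each $\phi_a$ over $\bJ_a$ separately.

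Next I would minimize a single $\phi_a$. It is a strictly convex quadratic in $\bJ_a$ --- strict because $|\widetilde{d}_a|+|\overline{d}_a|+|\widehat{d}_a|\ge 1$, every diagonal block of $\bJ$ serving as a clone of something --- so its unique minimizer is characterized by stationarity. For a sum $\sum_{l=1}^{L}\|\bJ_a-\bA_l\|_F^2$ this condition reads $\sum_{l=1}^{L}(\bJ_a-\bA_l)=\bzero$, i.e. $\bJ_a=\tfrac{1}{L}\sum_{l=1}^{L}\bA_l$. Taking $L=|\widetilde{d}_a|+|\overline{d}_a|+|\widehat{d}_a|$ and letting the $\bA_l$ be the blocks $(\bX_i^{k+1}+\bT_i^k)_{j,j}$, $(\bZ_p^{k+1}+\bU_p^k)_{j,j}$, $(\bW^{k+1}+\bV^k)_{j,j}$ indexed respectively by $\widetilde{d}_a$, $\overline{d}_a$, $\widehat{d}_a$ yields exactly \cref{eq:J Chordal update}, for every $a\in\bbN_{3N+y}$.

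The one step that requires genuine care, as opposed to a routine calculation, is the regrouping: one must verify that the three selection maps $i\mapsto\widetilde{\bJ}_i$, $p\mapsto\overline{\bJ}_p$, and $\widehat{\bJ}$ together reference the diagonal blocks of $\bJ$ consistently, so that $\widetilde{d}_a$, $\overline{d}_a$, $\widehat{d}_a$ really do partition the expanded objective into the blocks $\phi_a(\bJ_a)$ with no term double-counted or omitted. Once that bookkeeping is settled --- and \cref{fig:Matching} is exactly the object encoding it --- the rest is the standard fact that the Frobenius average minimizes a sum of squared Frobenius distances.
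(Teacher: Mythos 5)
Your proposal is correct and follows essentially the same route as the paper's proof: exploit the common block-diagonal partition so that \cref{eq:J Chordal bulk} separates into independent per-block subproblems indexed by $\widetilde{d}_a$, $\overline{d}_a$, $\widehat{d}_a$, and then minimize each sum of squared Frobenius distances by taking the average of the targets. Your write-up is simply more explicit than the paper's about the regrouping/bookkeeping and the strict convexity ensuring a unique minimizer.
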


\begin{proof}
    \cref{eq:J Chordal bulk} is fully separable into smaller subproblems, each computing the diagonal elements $\bJ_a$ of $\bJ$. 
    Specifically, the update can be expressed as
    \begin{align*} 
        \bJ_a^{k+1}{=}\arg\min_{\tilbJ_i,\barbJ_p,\hatbJ}
            \left(
            \begin{array}{l}
            \sum_{(i,j)\in\widetilde{d}_a}
                            \|(\bX_i^{k+1}{-}\tilbJ_i{+}\bT_i^k)_{j,j}\|_F^2 \\
            \quad{+}\sum_{(p,j)\in\overline{d}_a}
                            \|(\bZ_p^{k+1}{-}\barbJ_p{+}\bU_p^k)_{j,j}\|_F^2    \\
            \quad\quad{+}\sum_{j\in\hat{d}_a}\|(\bW^{k+1}{-}\hatbJ{+}\bV_r^k)_{j,j}\|_F^2
            \end{array}
                    \right).
    \end{align*}
    This separability results from the fact that $\bX_i^{k+1}$ for different $i{\in}\bbN_N$ are associated with different dissipativity parameters and do not overlap. Likewise, $\bZ_p^{k+1}$ for different $p{\in}\bbN_M$ are associated with different dissipativity parameters.
    Consequently, the update can be computed block-wise for each $\bJ_a$, and the resulting closed-form solution yields \cref{eq:J Chordal update}.    
\end{proof}

\subsubsection{Algorithm}


\begin{algorithm}[tbp]
    \caption{Distributed network stability analysis}\label{alg:02}
    \begin{algorithmic}[1]
        \Require MaxIterations$,\bX_i^0,\epsilon$ for $i\in\bbN_N$
        \Ensure $\bX^k$
        \State Initialize $k=0$, $\bJ^0=\diag(\diag(\bX_i^0)_{i\in\bbN_N},\bI^\bY)$, where $\bX^0=\tilbJ^0$, $\bZ^0=\barbJ^0$, and $\bW^0=\hatbJ$, 
            $\bT^0=\bU^0=\bV^0=\textbf{0}$
        \While {$\barbQ_p(\barbX_p,\barbY_p){\nprec}0,\hatbQ(\hatbX,\bY){\neq}\bzero,k{<}$MaxIterations}
            \State $k\gets k+1$
            \State Find $\bX_i^k,\bZ_p^k,\bW^k$ by \cref{eq:X Chordal proj,eq:Z Chordal proj,eq:W Chordal proj} in parallel
            \State Find $\bJ_\alpha^k$ by \cref{eq:J Chordal update} in parallel
            \State Find $\bT_i^k,\bU_p^k,\bV^k$ by \cref{eq:T Chordal update,eq:U Chordal update,eq:V Chordal update} in parallel
        \EndWhile
        \If {$\barbQ_p(\barbX_p,\barbY_p){\prec}0,\text{ and }\hatbQ(\hatbX,\bY)=\bzero$}
            \State Multi-agent system is stable with $\bX^k=\diag(\bX_i^k)_{i\in\bbN_N}$
        \EndIf
    \end{algorithmic}
\end{algorithm}

The implementation of these \gls{admm} steps is summarized in \cref{alg:02}.
As in \cref{alg:01}, any initial point such that $\bX^0{=}\tilbJ^0$, $\bZ^0{=}\barbJ$, $\bW^0{=}\hatbJ^0$, and $\bT^0{=}\bU^0{=}\bV^0{=}\bzero$ can be used. A natural choice is $\bJ^0{=}\diag(\diag(\bX_i^0)_{i{\in}\bbN_N},\bI)$ for $\bX^0_i{=}a_iI$.
\cref{alg:02} converges to a feasible point of \cref{opt:Main Chordal Problem} if a feasible point exists. 
Conversely, if the algorithm does not converge, then a feasible point does not exist, which means the \gls{vndt} does not guarantee the stability of the network.

\subsection{Computation Time and Complexity}

The computation time for each iteration is $t_k{=}t_k^{\bP}{+}t_k^{\bJ}{+}t_k^{\bD}$, where $t_k^{\bP}{=}\max_{i{\in}\bbN_N,p{\in}\bbN_M}(t_k^{\bX_i},t_k^{\bZ_p},t_k^{\bW})$, $t_k^{\bJ}{=}\max_{a{\in}3N+y}(t_k^{\bJ_a})$, $t_k^{\bD}{=}\max_{i{\in}\bbN_N,p{\in}\bbN_M}(t_k^{\bT_i},t_k^{\bU_p},t_k^{\bV})$ are similarly defined based on the time variables introduced in \cref{Chap:ISNSA}. 

From \cref{cor:No Overlap}, all projections can be executed in parallel, so the complexity of \cref{alg:02} is dominated by the most intensive step among \cref{eq:X Chordal proj,eq:Z Chordal proj,eq:W Chordal proj}.
The computations in \cref{eq:X Chordal proj} scale as $\cO(n_m^6)$. 
Let $n_p$ be the maximum size of matrix variables related to $\cC_p$ for all $p{\in}\bbN_M$, then the complexity of \cref{eq:Z Chordal proj} is $\cO(n_q^6)$.
\cref{eq:W Chordal proj} involves solving a linear system, which is generally faster than solving \gls{sdp} unless matrices are significantly larger than those in \cref{eq:X Chordal proj,eq:Z Chordal proj}.
Thus, the overall per iteration complexity of \cref{alg:02} is determined by either $\cO(n_m^6)$ or $\cO(n_q^6)$, depending on the network's clique structure.

Assuming $n_q{\approx} n_m$ and the required iterations of \cref{alg:01,alg:02} are $k_1$ and $k_2$, respectively, their total complexities are $k_1\cO(N^4n_m^6)$ and $k_2\cO(n_m^6)$.
Hence, \cref{alg:02} converges faster than \cref{alg:01} whenever $k_2{<}k_1N^4$, which typically holds for large-scale systems with $N{>}5$.
This results in the superior computational efficiency of \cref{alg:02} over \cref{alg:01}, which is seen in the numerical examples.

\subsection{Chordal Networks}
To determine when chordal decomposition is potentially useful, this section explores when it can be employed without modifying the underlying system interconnection structure. Although some $\barcQ$ are non-chordal, \cref{thm:VNDT_Chordal_Decomposition} is always applicable to any graph because any non-chordal graph can be extended to a chordal graph by adding extra edges to the original graph. However, finding the chordal extension with the fewest additional edges is an NP-hard problem \cite{yannakakis1981computing}. 
Fortunately, for certain graph structures of $\cH$, $\barcQ$ is inherently chordal without requiring chordal extension. 
The next two theorems give conditions under which $\barcQ$ is chordal and no extension is needed to apply \cref{thm:VNDT_Chordal_Decomposition}.

\begin{corollary} \label{cor:Chordal_graph_undirected_no_cycle}
    Under the conditions specified in \cref{thm:VNDT_Chordal_Decomposition}, assume that $\cH$ is a connected, undirected graph. If $\cH$ does not have a cycle, $\barcQ$ is a chordal graph.
\end{corollary}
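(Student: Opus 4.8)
The plan is to make the edge set of $\barcQ$ completely explicit via \cref{thm:supergraph}, recognize $\barcQ$ as the square of the tree $\cH$ (up to self-loops), and then prove that squares of trees are chordal.

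Since $\cH$ is undirected, its underlying undirected graph is $\cH$ itself, so $\edge{G(\cH)}=\edge{\cH}$, and its out-neighbourhoods coincide with its neighbourhoods, $\outneigh{\cH}{i}=\cN_\cH(i)$, hence $\cN_\cH^+(i,j)=\cN_\cH(i)\cap\cN_\cH(j)$. Substituting into \cref{eq:Edge_barQ} gives
\[
\edge{\barcQ}=\edge{\cH}\cup\{(i,i):i\in\vertex{\cH}\}\cup\{(i,j):i\neq j,\ \cN_\cH(i)\cap\cN_\cH(j)\neq\emptyset\}.
\]
Because $\cH$ is a tree it has no triangle, so two distinct vertices have a common neighbour if and only if they are at distance exactly $2$ in $\cH$: the ``if'' direction takes the midpoint of the unique connecting path; the ``only if'' direction notes that a common neighbour $k$ gives a path $i$--$k$--$j$ and that $i$ and $j$ cannot be adjacent (that would close a triangle). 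Therefore, disregarding the self-loops --- which are irrelevant to chordality, since the paper's notion of cycle uses pairwise distinct vertices --- $(i,j)\in\edge{\barcQ}$ iff $1\le\operatorname{dist}_\cH(i,j)\le 2$; in other words $\barcQ$ is exactly the graph square $\cH^2$ with a self-loop added at each vertex. It remains to show $\cH^2$ is chordal.

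For the crux, I would show directly that every cycle $v_1v_2\cdots v_\alpha$ of $\cH^2$ with $\alpha\ge 4$ has a chord. Each cycle edge $(v_i,v_{i+1})$ (indices mod $\alpha$) lifts to a path $P_i$ in $\cH$ from $v_i$ to $v_{i+1}$ of length $1$ or $2$, and concatenating $P_1,\dots,P_\alpha$ produces a closed walk of positive length in the tree $\cH$. First record the elementary fact that any positive-length closed walk in a tree must backtrack --- i.e.\ at some step the vertices immediately before and after a given vertex coincide --- which follows by tracking the distance to a fixed root: this distance changes by exactly $\pm1$ at each step and returns to its initial value, so it attains a strict interior maximum, and there the before- and after-vertices are both the (unique) neighbour one step closer to the root. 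Such a backtrack cannot lie strictly inside any $P_i$ (a genuine path has no repeated vertex, and $v_i\neq v_{i+1}$), so it straddles a junction $v_{i+1}$: the penultimate vertex $a$ of $P_i$ equals the second vertex $b$ of $P_{i+1}$. A short case split on the lengths of $P_i$ and $P_{i+1}$ shows that this forces $\operatorname{dist}_\cH(v_i,v_{i+2})\le 2$ in every case (when both have length $1$ it would force $v_i=v_{i+2}$, impossible for a cycle, so that case does not occur), hence $(v_i,v_{i+2})\in\edge{\barcQ}$; and since $\alpha\ge4$, $v_{i+2}\notin\{v_{i-1},v_i,v_{i+1}\}$, so $(v_i,v_{i+2})$ is a genuine chord. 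An alternative for this step is to exhibit a perfect elimination ordering of $\cH^2$ by repeatedly deleting a leaf of $\cH$: if $v$ is a leaf with parent $w$, its still-undeleted $\cH^2$-neighbours lie in $\{w\}\cup(\cN_\cH(w)\setminus\{v\})$, which are pairwise at $\cH$-distance $\le2$ and hence form a clique, and deleting a leaf leaves all other pairwise distances unchanged so the argument recurses; then invoke the classical equivalence between chordality and the existence of a perfect elimination ordering.

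The entire difficulty is concentrated in showing $\cH^2$ is chordal; the edge-set translation is routine bookkeeping with \cref{thm:supergraph}. Within that step, the fiddly parts are the closed-walk-backtracks-in-trees lemma and the case analysis confirming the backtrack always produces a chord of the correct form (or, on the perfect-elimination route, verifying the neighbourhood-clique claim and that leaf deletion commutes with squaring). One must also stay mindful that the paper's definitions make self-loops and triangles automatically harmless, so only cycles of length $\ge4$ on distinct vertices need to be addressed.
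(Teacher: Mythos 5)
Your proposal is correct, but it proves the corollary by a genuinely different route than the paper. The paper argues by contraposition: it takes a chordless cycle of length greater than three in $\barcQ$, notes via \cref{eq:Edge_barQ} that each of its edges outside $\edge{\cH}$ comes from a common neighbor in $\cH$, substitutes the corresponding two-edge paths, and declares the result a cycle $\cH'$ of $\cH$, contradicting acyclicity. You instead identify $\barcQ$ (modulo self-loops) as the square of the tree $\cH$ --- using the no-triangle observation to equate ``common neighbor'' with ``distance exactly two'' --- and then prove chordality of tree squares directly, either by lifting a length-$\geq 4$ cycle of $\barcQ$ to a closed walk in $\cH$ and extracting a chord from a backtrack, or by exhibiting a perfect elimination ordering through leaf deletion. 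What your route buys is precisely the step the paper glosses over: the lifted object in $\cH$ is in general only a closed walk, not a simple cycle (consecutive lifted paths can share their middle vertex, or a connecting vertex can coincide with a cycle vertex), so concluding ``$\cH$ has a cycle'' requires more care than the paper provides; your backtracking lemma handles exactly this degenerate situation and converts it into the chord $(v_i,v_{i+2})$, and your case split correctly rules out the length-$1$/length-$1$ case using distinctness of cycle vertices. The costs are that you rely on an external classical fact on the elimination-ordering route, and that the ``strict interior maximum'' phrasing in the backtracking lemma should be stated cyclically (the maximum of the root-distance along the closed walk, taken with indices modulo the walk length) to cover a maximum attained at the concatenation's start point --- a cosmetic fix, not a gap.
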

\begin{proof}
    The proof proceeds by contradiction, demonstrating that the negation of the statement is always false. 
    Since $\cH$ is undirected, \cref{eq:Edge_barQ} is reformulated as 
    \begin{align} \label{eq01:Chordal_graph_undirected_no_cycle} 
    \begin{split}
        \edge{\barcQ}&{=}\edge{\cH}{+}\set{(i,i)|i\in\vertex{\cH}}\\
        &\quad{+}\set{(i,j)|i,j{\in}\vertex{\cH},\cN_\cH(i,j){\neq}{\emptyset}},
    \end{split}
    \end{align}
    where $\cN_\cH(i,j)=\neigh{\cH}{i}\cap\neigh{\cH}{j}$.
    
    Assume that $\barcQ$ is chordal. 
    Then, $\barcQ$ contains a cordless cycle of length greater than 3.
    Let $\barcQ'$ denote such a chordless cycle, and $\vertex{\barcQ'}{=}\set{i_k}_{k{\in}\bbN_n}$.
    It follows that  
    \begin{align*}
        \set{(i_1,i_2),(i_2,i_3),\dots,(i_{n-1},i_n),(i_n,i_1)}=\edge{\barcQ'}.
    \end{align*}
    
    If $\cH$ does not have a cycle, there exists $(i_k,i_{k+1}){\in}\edge{\barcQ'\setminus\cH}$.
    It implies that $(i_k,i_{k+1}){\in}\set{(i,j)|i,j{\in}\vertex{\cH},\cN_\cH(i,j){\neq}\emptyset}$ from \cref{eq01:Chordal_graph_undirected_no_cycle}. 
    Equivalently, there exist $l{\in}\vertex{\cH}$ such that $l{\in}\cN_\cH(i_k,i_{k+1})$.
    It means that vertex $l{\in}\vertex{\cH}$ connects vertices $i_k$ and $i_{k+1}$.
    
    As a result, for all $(i_k,i_{k+1})\in\edge{\barcQ'\setminus\cH}$, there exists a vertex $l\in\vertex{\cH}$ such that $l$ connects $i_k$ and $i_{k+1}$. It defines a cycle $\cH'$ with an edge set as
    \begin{align*}
        &\edge{\cH'}{=}\edge{\barcQ'{\cap}\cH} \\
            &\;\;{+}\set{(i_k{,}l){,}(l{,}i_{k+1}){\in}\edge{\cH}\,|\,(i_k,i_{k+1}){\in}\edge{\barcQ'\setminus\cH}\,{\forall} k{\in}\bbN_{n-1}}.
    \end{align*}
    Since $\edge{\cH'}\subseteq\edge{\cH}$, $\cH'$ is the subgraph of $\cH$.
    It implies that $\cH$ has a cycle, which violates the negation of the statement.
    Therefore, if $\cH$ is acyclic, then $\barcQ$ must be chordal.
\end{proof}

\begin{corollary} \label{cor:Chordal_graph_directed}
    Under the conditions specified in \cref{thm:VNDT_Chordal_Decomposition}, let $\cH$ be a directed graph with no directed cycles. Assume that for all $i\in\vertex\cH$, $i$ does not have more than one in-neighbor, meaning $|\inneigh{\cH}{i}|=1$. Then, $\barcQ$ is a chordal graph.
\end{corollary}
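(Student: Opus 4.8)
The plan is to show that, under these hypotheses, $\barcQ$ coincides with the underlying undirected graph $G(\cH)$ augmented only by self-loops, and that $G(\cH)$ is a forest; since a forest contains no cycle of length greater than three, $\barcQ$ is then chordal. I would begin from the edge formula \cref{eq:Edge_barQ} of \cref{thm:supergraph}, namely $\edge{\barcQ}=\edge{G(\cH)}\cup\{(i,i)\mid i\in\vertex\cH\}\cup\{(i,j)\mid i,j\in\vertex\cH,\ \cN_\cH^+(i,j)\neq\emptyset\}$, and argue that its last term is redundant off the diagonal. Indeed, if $i\neq j$ and $k\in\cN_\cH^+(i,j)=\outneigh\cH i\cap\outneigh\cH j$, then $(i,k),(j,k)\in\arc\cH$, so $i$ and $j$ are two distinct in-neighbors of $k$, contradicting the assumption that no vertex has more than one in-neighbor. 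Hence $\cN_\cH^+(i,j)=\emptyset$ whenever $i\neq j$, so $\edge{\barcQ}=\edge{G(\cH)}\cup\{(i,i)\mid i\in\vertex\cH\}$.

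The second step is to prove that $G(\cH)$ is acyclic. Suppose for contradiction that it contains an undirected cycle through pairwise distinct vertices $v_1,\dots,v_\alpha$ with $\alpha\geq 3$. Each of its $\alpha$ edges arises from an arc of $\cH$, so orienting every cycle-edge as its underlying arc makes each cycle-edge point into exactly one of its two ends. Since each vertex has at most one in-neighbor in all of $\cH$, each $v_i$ is the head of at most one of the two cycle-edges incident to it; as there are $\alpha$ vertices and $\alpha$ edges and every edge has a head among the $v_i$, a counting argument forces each $v_i$ to be the head of exactly one cycle-edge, hence also the tail of exactly one. Following, from any vertex, the unique outgoing cycle-edge therefore traces arcs $v_1\to v_2\to\cdots\to v_\alpha\to v_1$ (after relabeling), which is a directed cycle in $\cH$, contradicting the hypothesis that $\cH$ has no directed cycle. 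Thus $G(\cH)$ has no cycle at all, i.e., it is a forest.

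Finally, a forest contains no cycle of length greater than three, so it is vacuously chordal, and since the interconnection has no self-loops ($(\bH)_{i,i}=\bzero$ in \cref{thm:VNDT}) the only edges of $\barcQ$ beyond those of $G(\cH)$ are the self-loops $(i,i)$, which lie on no cycle of length greater than three. Hence $\barcQ$ is chordal, completing the proof.

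I expect the orientation-and-counting argument of the second step to be the main obstacle, since it is there that the two hypotheses (at most one in-neighbor, no directed cycle) must be combined to exclude undirected cycles in $G(\cH)$; the first and third steps are essentially bookkeeping with \cref{thm:supergraph} and the definition of chordality. It is also worth remarking that the hypothesis written as $|\inneigh\cH i|=1$ should be read as $|\inneigh\cH i|\leq 1$ for the statement to be non-vacuous, because any finite directed acyclic graph has a source; this weaker reading is precisely what the argument above requires.
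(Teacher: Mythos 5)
Your proof is correct and takes the same overall route as the paper's: the paper also first uses \cref{thm:supergraph} to show that $\cN_\cH^+(i,j)=\emptyset$ for all $i\neq j$ (a common out-neighbor $k$ would have both $i$ and $j$ as in-neighbors, violating the in-degree bound), so that $\edge{\barcQ}$ reduces to $\edge{G(\cH)}$ (plus self-loops), and then concludes by arguing that $G(\cH)$ is acyclic and hence chordal. Where you differ is in that second step, and your version is in fact tighter than the paper's. The paper simply asserts that ``since $\cH$ does not contain any directed cycles, its associated undirected graph $G(\cH)$ is acyclic,'' which is false as a standalone implication: the acyclic directed graph with arcs $(1,2)$, $(1,3)$, and $(2,3)$ has a triangle in its underlying undirected graph. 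The implication only becomes true once the in-degree hypothesis is also invoked, which is precisely what your orientation-and-counting argument supplies --- each vertex of a putative undirected cycle can be the head of at most one of its incident cycle-edges, so by counting every vertex is the head of exactly one and the tail of exactly one, forcing a consistent orientation around the cycle and hence a directed cycle in $\cH$, a contradiction. Your two side remarks are also well taken and address points the paper passes over silently: the self-loops $(i,i)$ contributed by \cref{thm:supergraph} lie on no cycle of length greater than three and so cannot spoil chordality, and the hypothesis written as $|\inneigh{\cH}{i}|=1$ must indeed be read as $|\inneigh{\cH}{i}|\leq 1$, since any nonempty directed graph without directed cycles has a vertex of in-degree zero, so the literal equality would make the corollary vacuous.
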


\begin{proof}
    The claim is verified by establishing $\edge{\barcQ}=\edge{G(\cH)}$ using \cref{thm:supergraph}.
    Suppose that there exist $i,j,k\in\vertex{\cH}$ such that $k\in\cN_\cH^+(i,j)$.
    This implies $\set{i,j}\subseteq\inneigh{\cH}{k}$, which violates the assumption that $|\inneigh{\cH}{k}|=1$.
    Therefore, for all $i,j\in\vertex{\cH}$, $\cN_\cH^+(i,j)=\emptyset$, implying $\edge{\barcQ}=\edge{G(\cH)}$ from \cref{thm:supergraph}.

    Since $\cH$ does not contain any directed cycles, its associated undirected graph $G(\cH)$ is acycle.
    It follows that $\barcQ$ does not have any cycle either. Hence, $\barcQ$ is a chordal graph.
\end{proof}

\subsection{Networks with Separable Equality Constraints} \label{subChap:Separable_Equality}
As demonstrated in \cref{subChap:Problem_Reformulation,subChap:DNSA}, applying the stability conditions in \cref{thm:VNDT_Chordal_Decomposition} yields smaller \glspl{lmi} in \cref{opt:Main Chordal Problem: Stability}, and a single matrix equation in \cref{opt:Main Chordal Problem: Equality}.
Although the computation time for solving \cref{eq:W Chordal min} can be accelerated using \cref{eq:W Chordal proj}, making it much faster than solving \cref{eq:Z proj}, \cref{eq:W Chordal proj} can be a bottleneck in \cref{alg:02}, particularly in extremely large-scale network systems.
Additionally, it may require network-wide communication of certain dissipativity parameters.
These challenges stem from the fact that the matrix equation in \cref{opt:Main Chordal Problem: Equality} can be substantially larger than the \glspl{lmi} in \cref{opt:Main Chordal Problem: Stability}.

Fortunately, for certain networks, 
the structure of $\barcQ_0$ allows the large matrix equation to be decomposed into smaller, independent equations.
This decomposition reduces both computation time and the communication of dissipativity parameters among agents.
This section establishes the condition under which the network graph has a separable matrix equality.

In particular, \cref{thm:Separable_Equality} demonstrates that if $\barcQ_o$ is a disconnected graph, then the single equation in \cref{opt:Main Chordal Problem: Equality} can be divided into several smaller matrix equations.

\begin{theorem} \label{thm:Separable_Equality}
    Suppose that \cref{thm:VNDT_Chordal_Decomposition} holds, and let the graph $\barcQ_o$ be the union of $X$ components $\cO_r$, meaning that 
    \begin{align*}
        \bigcup_{r\in\bbN_X}\cO_r=\barcQ_o,\quad
        \cO_r\cap\cO_l=\emptyset,\quad r,l\in\bbN_X,r\neq l.
    \end{align*}
    Further, define the sets 
    \begin{align*}
        \vertex{\cR_{o,r}}=\set{k\in\cN_\cH(i,j)|(i,j)\in\edge{\cO_r}},\quad\forall r\in\bbN_X.
    \end{align*}
    Then, \cref{eq:hatbQ} is equivalent to
    \begin{align} \label{eq:hatbQ_r}
        \hatbQ_r(\hatbX_r,\hatbY_r){=}\bzero{\in}\bbR^{\abs{\vertex{\cO_r}}{\times}\abs{\vertex{\cO_r}}},\;{\forall}r{\in}\bbN_X\;\text{(block-wise)},
    \end{align}
    where
    \begin{align}
        (&\hatbQ_r(\hatbX_r{,}\hatbY_r))_{v,w}{=}(\barbQ(\bX){+}\epsilon\bI)_{i,j}{-}\hspace{-5pt}\sum_{p\in\bbL_{i,j}}\hspace{-5pt}\bY_{i,j}^p{,}\quad\forall(i,j){\in}\edge{\cO_r}, \nonumber \\
        &\hatbX_r{=}\diag\left(\begin{array}{l}
             \diag(\bQ_i)_{(i,i)\in\edge{\cO_r}}, \\
             \diag(\bS_i,\bS_j)_{(i,j)\in\edge{\cO_r}}, \\
             \diag(\bR_k)_{k{\in}\vertex{\cR_{o,r}}}
        \end{array}\right),\;\hatbY_r{=}\diag(\hatbY_{i,j}), \label{eq:hatX_r}
    \end{align}
    for all $r{\in}\bbN_X$, $(\hatbQ_r(\hatbX_r,\hatbY_r))_{v,w}$ is $(v,w)$ block of $\hatbQ_r(\hatbX_r,\hatbY_r)$, and $\hatbY_{i,j}$ and $\bbL_{i,j}$ follows the definition in \cref{thm:VNDT_Chordal_Decomposition}.
\end{theorem}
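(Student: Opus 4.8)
The plan is to prove this as a purely structural corollary of \cref{thm:VNDT_Chordal_Decomposition,thm:barbQ}, using nothing beyond the block-sparsity pattern of $\hatbQ$ and the fact that a component is a \emph{maximal} connected subgraph. The starting point is \cref{Qhatdef}: the block $(\hatbQ)_{i,j}$ is identically zero whenever $(i,j)\notin\edge{\barcQ_o}$, so the constraint $\hatbQ=\bzero$ really only acts on the blocks (and self-loops) indexed by $\edge{\barcQ_o}$. Writing $\barcQ_o=\bigcup_{r\in\bbN_X}\cO_r$ for its component decomposition, maximality forces each $\cO_r$ to be an \emph{induced} subgraph of $\barcQ_o$ (adding a missing $\barcQ_o$-edge between two of its vertices would enlarge it while keeping it connected), so $\edge{\barcQ_o}=\bigsqcup_{r}\edge{\cO_r}$ and the vertex sets $\vertex{\cO_r}$ are pairwise disjoint. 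Consequently every nonzero block of $\hatbQ$ has both indices in one uniquely determined $\vertex{\cO_r}$, and the rows/columns of $\hatbQ$ indexed outside $\bigcup_r\vertex{\cO_r}$ are all zero.

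With this in hand the argument runs as follows. For each $r$, sort $\vertex{\cO_r}$ in natural order (exactly as $\cC_p(k)$ is used in \cref{thm:Chordal Decomposition}) and let $\hatbQ_r$ be the principal block-submatrix of $\hatbQ$ on $\vertex{\cO_r}$, i.e.\ the $(v,w)$ block equals $(\hatbQ)_{i,j}$ with $i,j$ the $v^{\thh},w^{\thh}$ vertices of $\cO_r$; this matches \cref{eq:hatX_r} on $\edge{\cO_r}$ and is zero for $(i,j)\notin\edge{\cO_r}$ (such a pair cannot lie in $\edge{\barcQ_o}$ by induced-ness, so $(\hatbQ)_{i,j}$ vanishes by \cref{Qhatdef}). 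Then $\hatbQ=\bzero$ iff every block of $\hatbQ$ vanishes, iff every block of every $\hatbQ_r$ vanishes (the blocks of the $\hatbQ_r$ are precisely the blocks of $\hatbQ$ indexed by $\bigsqcup_r\edge{\cO_r}=\edge{\barcQ_o}$, every other block of $\hatbQ$ being zero a priori), iff $\hatbQ_r=\bzero$ for all $r\in\bbN_X$, which is \cref{eq:hatbQ_r}. It remains only to read off the variable dependence: by \cref{thm:barbQ}, $(\barbQ(\bX)+\epsilon\bI)_{i,j}$ involves only $\bQ_i$ (when $i=j$), $\bS_i$, $\bS_j$, and the matrices $\bR_k$ with $k$ in the common out-neighborhood of $i$ and $j$; letting $(i,j)$ range over $\edge{\cO_r}$, these are exactly the dissipativity matrices gathered into $\hatbX_r$ of \cref{eq:hatX_r} (in particular the $\bR$-blocks are indexed by $\vertex{\cR_{o,r}}$), and the overlap matrices that appear are exactly the $\bY_{i,j}^p$ with $p\in\bbL_{i,j}$, $(i,j)\in\edge{\cO_r}$, i.e.\ the blocks of $\hatbY_r$. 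Hence each $\hatbQ_r$ is a function of $\hatbX_r,\hatbY_r$ alone, completing the proof.

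The only step requiring real care is the graph bookkeeping: confirming that the component decomposition of $\barcQ_o$ induces an honest block-diagonal decomposition of $\hatbQ$. This rests entirely on the ``maximal connected subgraph'' definition of a component (which is what makes each $\cO_r$ induced, hence the $\edge{\cO_r}$ partition $\edge{\barcQ_o}$) and on matching $\vertex{\cR_{o,r}}$ and $\bbL_{i,j}$ against the explicit formulas \cref{eq:barbQ_ii,eq:barbQ_ij}; no new inequality or analytic estimate enters. Beyond the stated equivalence, the practical payoff --- a genuinely decoupled, parallelizable family of equations with reduced parameter sharing --- additionally needs the $\hatbQ_r$ to involve \emph{disjoint} variable sets: the $\bQ_i$- and $\bS_i$-blocks are automatically disjoint across $r$ because the $\vertex{\cO_r}$ are, while disjointness of the $\bR$-blocks $\set{\bR_k|k\in\vertex{\cR_{o,r}}}$ follows by the reasoning of \cref{cor:No Overlap}, since a shared $\bR_k$ would force $\barcQ_o$-edges bridging the distinct components $\cO_r$ and $\cO_s$; I would record that as a short remark following the main equivalence.
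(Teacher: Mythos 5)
Your proof is correct and follows essentially the same route as the paper's: both exploit that the components $\cO_r$ of $\barcQ_o$ induce a block-diagonal split of $\hatbQ$ (all blocks outside $\bigcup_r\edge{\cO_r}$ vanishing a priori by the definition of $\hatbQ$), so $\hatbQ=\bzero$ holds iff each $\hatbQ_r=\bzero$, and then read off from \cref{eq:barbQ_ii,eq:barbQ_ij} that each $\hatbQ_r$ depends only on $\hatbX_r$ and $\hatbY_r$. Your closing observation that genuine variable decoupling needs disjointness of the $\vertex{\cR_{o,r}}$ is also consistent with the paper, which defers exactly that point to \cref{thm:Decompose_Equality}.
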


\begin{proof}
    \cref{eq:hatbQ_r} follows directly from the disconnected property of $\barcQ_o$.
    By leveraging the structure of each $\cO_r$ for $r{\in}\bbN_X$, $\hatbQ(\hatbX{,}\bY)$ in \cref{eq:hatbQ} is written as
    \begin{align*}
        \hatbQ(\hatbX,\bY)=\diag(\hatbQ_r(\hatbX,\bY))_{r\in\bbN_X}
    \end{align*}
    where 
    \begin{align*}
        \hatbQ_r(\hatbX,\bY)&\in\bbR^{\abs{\vertex{\cO_r}}\times\abs{\vertex{\cO_r}}},\quad\text{(block-wise)},\\
        (\hatbQ_r(\hatbX,\bY))_{v,w}&=(\hatbQ(\hatbX,\bY))_{i,j}\quad\forall(i,j)\in\edge{\cO_r},
    \end{align*}
    for all $v,w{\in}\bbN_{|\vertex{\cO_r}|}$, associated to the index induced by $\vertex{\hspace{-1pt}\cO_r\hspace{-1pt}}$.
    Consequently, for all $(i,j){\in}\edge{\cO_r}$, $\hatbQ(\hatbX,\hatbY)_{i,j}$ depends only on the elements in the set
    \begin{align*}
        \set{\bQ_i\,|\,i{\in}\vertex{\cO_r}}{+}\set{\bR_k\,|\,k{\in}\vertex{\cR_{o,r}}}{+}\set{\bY_{i,j}^p\,|\,(i,j){\in}\edge{\cO_r}}
    \end{align*}
    when $i=j$, and
    \begin{align*}
        \set{\bS_i,\bS_j|(i,j){\in}\edge{\cO_r}}{+}\set{\bR_k|k{\in}\vertex{\cR_{o,r}}}{+}\set{\bY_{i,j}^p|(i,j){\in}\edge{\cO_r}}
    \end{align*}
    when $i\neq j$.
    Thus, $\hatbQ_r$ is a function only of $\hatbX_r$ and $\hatbY_r$, defined as \cref{eq:hatX_r}, rather than $\hatbX$ and $\bY$.
    This confirms that \cref{eq:hatbQ_r} is equivalent to \cref{eq:hatbQ}.
\end{proof}

\cref{thm:Separable_Equality} alone does not give us a significant advantage, as it does not guarantee the full decoupling of the equations in \cref{eq:hatbQ_r}.
In particular, $\bR_k$ used in $\hatbX_r$ may also be required in $\hatbX_l$ for $r{\neq} l$.
This means that \cref{eq:hatbQ_r} for different $r$ must still be calculated simultaneously.
Achieving full decoupling of the equations in \cref{eq:hatbQ_r} requires additional conditions on $\barcQ_o$, as outlined in \cref{thm:Decompose_Equality}.

\begin{corollary} \label{thm:Decompose_Equality}
    Assume that \cref{thm:Separable_Equality} holds and define 
    \begin{align*}
    \begin{array}{ll}
         \bbQ_r=\set{\bQ_i\,|\,(i,i)\in\edge{\cO_r}}, &
         \bbS_r=\set{\bS_i,\bS_j\,|\,(i,j)\in\edge{\cO_r}}, \\
         \bbR_r=\set{\bR_k\,|\,k\in\vertex{\cR_{o,r}}}, &
         \bbY_r=\set{\bY_{i,j}^p\,|\,(i,j)\in\edge{\cO_r}}.
    \end{array}
    \end{align*}
    If $\vertex{\cR_{o,r}}{\cap}\vertex{\cR_{o,l}}{=}\emptyset$ for $r{\neq} l$, then $\bbQ_r{\cap}\bbQ_l{=}\emptyset$, $\bbS_r{\cap}\bbS_l{=}\emptyset$, $\bbR_r{\cap}\bbR_l{=}\emptyset$ and $\bbY_r{\cap}\bbY_l{=}\emptyset$.
\end{corollary}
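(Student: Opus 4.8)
The plan is to derive all four disjointness claims from two elementary structural facts. The first is that distinct components $\cO_r$ and $\cO_l$ of $\barcQ_o$ are themselves disjoint graphs, so $\vertex{\cO_r}\cap\vertex{\cO_l}=\emptyset$ and $\edge{\cO_r}\cap\edge{\cO_l}=\emptyset$ whenever $r\neq l$. The second is that distinct agents carry distinct dissipativity blocks $\bQ_i$, $\bS_i$, $\bR_i$ (and distinct $(i,j,p)$ index distinct overlap variables $\bY_{i,j}^p$), as already used in the discussion following \cref{cor:No Overlap}; hence disjointness of the labelling index sets immediately transfers to disjointness of the corresponding collections of matrices. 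With these two facts in hand, each of the four statements reduces to identifying which index set labels each collection and checking that those index sets are disjoint.

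For $\bbQ_r$ and $\bbS_r$: by definition $\bbQ_r$ is labelled by $\set{i\mid(i,i)\in\edge{\cO_r}}$ and $\bbS_r$ by the ends of the edges in $\edge{\cO_r}$, and both of these label sets are contained in $\vertex{\cO_r}$. Since $\vertex{\cO_r}\cap\vertex{\cO_l}=\emptyset$, the labels are disjoint, so $\bbQ_r\cap\bbQ_l=\emptyset$ and $\bbS_r\cap\bbS_l=\emptyset$. For $\bbY_r$: it is labelled by the edges $(i,j)\in\edge{\cO_r}$ (together with the clique indices $p\in\bbL_{i,j}$ as in \cref{thm:VNDT_Chordal_Decomposition}), and $\edge{\cO_r}\cap\edge{\cO_l}=\emptyset$, so again the labels are disjoint and $\bbY_r\cap\bbY_l=\emptyset$. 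None of these three claims invokes the hypothesis on $\cR_{o,r}$; they follow purely from $\cO_r$ and $\cO_l$ being disjoint components of $\barcQ_o$ (which is part of the standing assumption inherited from \cref{thm:Separable_Equality}).

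The one claim that genuinely needs the extra hypothesis is $\bbR_r\cap\bbR_l=\emptyset$, and this is the point I expect to require the most care to articulate. Here $\bbR_r$ is labelled by $\vertex{\cR_{o,r}}=\set{k\in\cN_\cH(i,j)\mid(i,j)\in\edge{\cO_r}}$; a common neighbour $k$ (in the interconnection graph $\cH$, via \cref{thm:supergraph}) of an edge of $\cO_r$ need not itself be a vertex of $\cO_r$, so disjointness of the components does \emph{not} force $\vertex{\cR_{o,r}}\cap\vertex{\cR_{o,l}}=\emptyset$. That is precisely why the corollary assumes it: under that assumption the label sets are disjoint by hypothesis, and the distinct-agent argument then gives $\bbR_r\cap\bbR_l=\emptyset$. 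The writeup should therefore make explicit this asymmetry — spelling out why $\vertex{\cR_{o,r}}$ can fail to lie inside $\vertex{\cO_r}$, so the $\bbR$ case is not subsumed by component-disjointness — while the other three cases remain straightforward bookkeeping.
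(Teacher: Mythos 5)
Your proposal is correct and follows essentially the same route as the paper's proof: component-disjointness of $\barcQ_o$ yields $\bbQ_r\cap\bbQ_l=\emptyset$, $\bbS_r\cap\bbS_l=\emptyset$, and $\bbY_r\cap\bbY_l=\emptyset$, while the stated hypothesis $\vertex{\cR_{o,r}}\cap\vertex{\cR_{o,l}}=\emptyset$ handles $\bbR_r\cap\bbR_l=\emptyset$. Your added remark on why the $\bbR$ case is not implied by component-disjointness (common out-neighbours in $\cH$ need not lie in $\vertex{\cO_r}$) is a useful elaboration of what the paper leaves implicit, but it does not change the argument.
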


\begin{proof}
    Since $\barcQ_o$ is disconnected, $\bbQ_r{\cap}\bbQ_l{=}\emptyset$, $\bbS_r{\cap}\bbS_l{=}\emptyset$, and $\bbY_r{\cap}\bbY_l{=}\emptyset$.
    Therefore, only $\bbR_r{\cap}\bbR_l{=}\emptyset$ remains to be shown, which follows directly from the assumption that $\vertex{\cR_{o,r}}{\cap}\vertex{\cR_{o,l}}{=}\emptyset$ for $r{\neq} l$.
\end{proof}


\begin{figure}
    \centering
    \includegraphics[width = 0.425\textwidth]{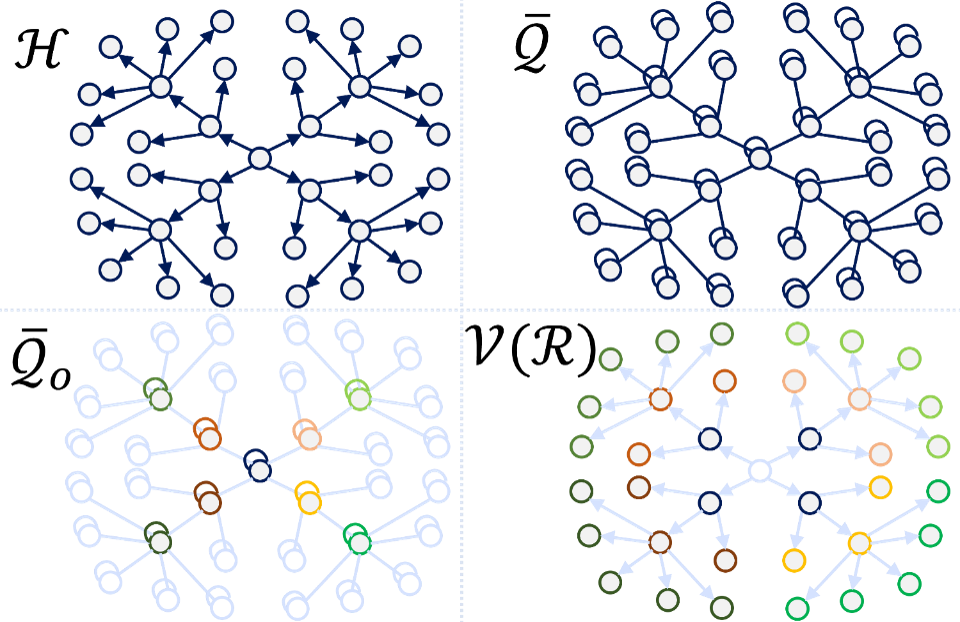}
    \caption{Hierarchical network example: The upper left and right graphs illustrate the graph of the network and $\barbQ(\bX)$. The lower left graph is the graph of overlapped elements in $\barbQ(\bX)$ resulting from \cref{thm:VNDT_Chordal_Decomposition}. The graph is disconnected and 9 components that are denoted in different colors. The lower right figure shows nodes in $\vertex{\bR}$. As in $\barcQ_o$, $\vertex{\bR}$ can be decomposed into 9 subsets.}
    \label{fig:Hierarchical Network}
    \vspace*{-1.45\baselineskip} 
\end{figure}

For multi-agent systems whose networks satisfy the conditions in \cref{thm:Decompose_Equality}, \cref{eq:hatbQ_r} can be fully decoupled using distinct dissipativity parameters, allowing parallel computation of each \cref{eq:hatbQ_r}.
This decoupling significantly reduces both the overall computation and the dissipativity parameter communications.



In conclusion, the $\cL_2$ stability of a network satisfying the conditions in \cref{thm:Separable_Equality} can be confirmed by solving 
\begin{subequations} \label{opt:Main_Separable_Chordal_Problem}
    \begin{align}
        \text{Find}\quad&\bX_i \quad i\in\bbN_N,\\
        \text{s.t.}\quad&\bX_i\in\bbP_i, \label{opt:Main_Separable_Chordal_Problem: KYP} \\ 
            &\diag(\barbX_p,\barbY_p)\in\barbbQ_p,\quad\forall p\in\bbN_M, \label{opt:Main_Separable_Chordal_Problem: Stability}  \\
            &\diag(\hatbX_r,\hatbY_r)\in\hatbbQ_r,\quad\forall r\in\bbN_X. \label{opt:Main_Separable_Chordal_Problem: Equality}
    \end{align}
\end{subequations}
The constraint sets are
\begin{align*}
    \hatbbQ_r&=\set{\diag(\hatbX_r,\hatbY_r)\;|\;\hatbQ_r(\hatbX_r,\hatbY_r)=\bzero}, \\
        &=\set{\diag(\hatbX_r,\hatbY_r)\;|\;\bM_r\ve_b(\diag(\hatbX_r,\hatbY_r))=\bzero},
\end{align*}
where $\bM_r$ is a full row rank matrix defined as in \cref{thm:Separated_Linear_System}.
\cref{opt:Main_Separable_Chordal_Problem} is equivalent to \cref{opt:Main Chordal Problem}, with the only difference being that \cref{opt:Main Chordal Problem: Equality} is replaced by \cref{opt:Main_Separable_Chordal_Problem: Equality}.

Following the same procedure as in \cref{cor:Equality_constraints}, each decoupled constraint of the form in \cref{eq:hatbQ_r} can be rewritten as a single linear system of equations.
\begin{corollary} \label{thm:Separated_Linear_System}
    Under the conditions in \cref{thm:Separable_Equality,thm:Decompose_Equality}, for all $r\in\bbN_X$, \cref{eq:hatbQ_r} can be reformulated into linear system equations
    \begin{align} \label{eq:Separated_linear_system}
        \bM_r\begin{bmatrix}
            \bx_r \\ \by_r
        \end{bmatrix}=\bzero\quad\forall r\in\bbN_X,
    \end{align}
    where $\bx_r{=}\ve_b(\hatbX_r)$, $\by_r{=}\ve_b(\hatbY_r)$, and $\bM_r$ is a full row rank matrix.
\end{corollary}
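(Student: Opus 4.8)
The plan is to replay the proof of \cref{cor:Equality_constraints} component-by-component, with the global overlap graph $\barcQ_o$ replaced throughout by a single component $\cO_r$. Fix $r\in\bbN_X$. By \cref{thm:Separable_Equality}, the block equation \cref{eq:hatbQ_r} has exactly one block per edge $(i,j)\in\edge{\cO_r}$, each of the form $(\barbQ(\bX)+\epsilon\bI)_{i,j}-\sum_{p\in\bbL_{i,j}}\bY_{i,j}^p=\bzero$, and moreover \cref{eq:hatX_r} certifies that the only data entering these blocks are the $\bQ_i,\bS_i$ with $(i,j)\in\edge{\cO_r}$, the $\bR_k$ with $k\in\vertex{\cR_{o,r}}$, and the $\bY_{i,j}^p$ with $(i,j)\in\edge{\cO_r}$ --- precisely the blocks assembled into $\hatbX_r$ and $\hatbY_r$.

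First I would apply $\ve(\cdot)$ to each of these block equations, expanding $(\barbQ(\bX))_{i,j}$ via \cref{eq:barbQ_ii,eq:barbQ_ij} from \cref{thm:barbQ}, linearizing the bilinear terms $(\bH)_{k,i}^T\bR_k(\bH)_{k,j}$ and $\bS_i(\bH)_{i,j}$ through $\ve(\bA\bB\bC)=(\bC^T\otimes\bA)\ve(\bB)$, and handling the transposed term $(\bH)_{j,i}^T\bS_j^T$ with $\ve(\bA^T)=\bP\,\ve(\bA)$. As in \cref{cor:Equality_constraints}, this yields for the diagonal and off-diagonal cases respectively a relation $\bM'_{i,j}\,[\,\bx_{i,j}^T\ \by_{i,j}^T\,]^T=\bzero$ in local variables $\bx_{i,j}$ (built from $\ve(\bQ_i)$ and $\ve_b(\bR_k)_{k\in\outneigh{\cH}{i}}$, or from $\ve(\bS_i),\ve(\bS_j),\ve_b(\bR_k)_{k\in\cN_\cH^+(i,j)}$) and $\by_{i,j}=\ve_b(\hatbY_{i,j})$ with $\hatbY_{i,j}=\diag(\bY_{i,j}^p)_{p\in\bbL_{i,j}}$, and each $\bM'_{i,j}$ is full row rank because it carries an identity submatrix ($\bI_{l_i^2}$ on the $\ve(\bQ_i)$ columns in the diagonal case; $\bI_{l_j\times l_i}$ on the $\ve_b(\hatbY_{i,j})$ columns in the off-diagonal case). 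Since \cref{thm:Separable_Equality} guarantees every $\bx_{i,j},\by_{i,j}$ is a sub-vector of $\bx_r=\ve_b(\hatbX_r)$ or $\by_r=\ve_b(\hatbY_r)$, padding each $\bM'_{i,j}$ with zero columns in the absent coordinates gives $\bM_{r,i,j}$, and setting $\bM_r=\col(\bM_{r,i,j})_{(i,j)\in\edge{\cO_r}}$ produces \cref{eq:Separated_linear_system}.

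It then remains to confirm $\bM_r$ is full row rank, which follows by the same disjoint-identity argument as in \cref{cor:Equality_constraints}: the identity submatrix contributed by a diagonal edge $(i,i)$ lives in the $\ve(\bQ_i)$-columns of $\bx_r$ (and $\bQ_i$ appears in no other block equation, since $\bQ$ is block diagonal), while the identity contributed by an off-diagonal edge $(i,j)$ lives in the $\ve_b(\hatbY_{i,j})$-columns of $\by_r$ (and $\hatbY_{i,j}$ appears in no other block equation); as distinct edges of $\cO_r$ carry distinct $\bQ_i$-blocks and distinct $\hatbY_{i,j}$-blocks, these identity submatrices occupy pairwise disjoint column ranges, so an appropriate column permutation exhibits a full-row-rank block-diagonal submatrix of $\bM_r$.

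I expect the only real care --- rather than difficulty --- to lie in the bookkeeping: matching the block indices $(v,w)$ of $\hatbQ_r$ induced by $\vertex{\cO_r}$ with the agent indices $(i,j)\in\edge{\cO_r}$, and noting that for a fixed $r$ the full-row-rank claim needs only \cref{thm:Separable_Equality}, whereas \cref{thm:Decompose_Equality} is what additionally guarantees that the variable tuples $(\bx_r,\by_r)$ and $(\bx_l,\by_l)$ are disjoint for $r\neq l$ --- i.e., that the systems in \cref{eq:Separated_linear_system} are genuinely independent and hence parallelizable. Everything else transcribes directly from the proof of \cref{cor:Equality_constraints}.
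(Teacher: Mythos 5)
Your proposal is correct and follows the same route as the paper: the paper's proof is simply ``apply \cref{cor:Equality_constraints} to each equation \cref{eq:hatbQ_r} for $r\in\bbN_X$,'' and your argument is exactly that application written out in detail (vectorize each block, pad, stack, and inherit full row rank from the disjoint identity submatrices). Your closing remark correctly identifies the roles of \cref{thm:Separable_Equality} versus \cref{thm:Decompose_Equality}, consistent with the paper.
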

\begin{proof}
    Applying \cref{cor:Equality_constraints} to each \cref{eq:hatbQ_r} for all $r\in\bbN_X$ leads to \cref{eq:Separated_linear_system}.
\end{proof}

In the same manner, the iteration procedure described in \cref{subChap:DNSA} solve \cref{opt:Main_Separable_Chordal_Problem}, by replacing \cref{eq:W Chordal min,eq:J Chordal bulk,eq:V Chordal update} with
\begin{align}
    \hspace{-7pt}\begin{split}
        \bW_r^{k+1}&=\Pi_{\hatbbQ_r}(\hatbJ_r^k-\bV_r^k),\quad \forall r\in\bbN_X,\\
            &=\ive_b\big((\bI-\bM_r^T(\bM_r\bM_r^T)^{-1}\bM_r)(\hatbj_r^k-\bv_r^k)\big),
    \end{split} \label{eq:W Seperable Chordal proj}\\
    \hspace{-7pt}\bJ_a^{k+1}&{=}
        \frac{
            \hspace{-5pt}\displaystyle\sum_{(i,j)\in\widetilde{d}_a}\hspace{-10pt}
                        {(\bX_i^{k+1}\hspace{-2pt}{+}\bT_i^k)_{j,j}}
            {+}\hspace{-14pt}\sum_{(p,j)\in\overline{d}_a}\hspace{-12pt}
                        {(\bZ_p^{k+1}\hspace{-2pt}{+}\bU_p^k)_{j,j}}
            {+}\hspace{-14pt}\sum_{(r,j)\in\hat{d}_a}\hspace{-10pt}
                    (\bW_r^{k+1}\hspace{-2pt}{+}\bV_r^k)_{j,j}
                    }{|\widetilde{d}_a|+|\overline{d}_a|+|\hat{d}_a|} \label{eq:J Separable Chordal update}\\
    \hspace{-7pt}\bV_r^{k+1}&=\bV_r^k+(\bW_r^{k+1}-\hatbJ_r^{k+1})=\bV_r^k+\hatbR_r^{k+1},\;\;\forall r\in\bbN_X, \label{eq:V Separable Chordal update}
\end{align}
where 
\begin{itemize}
    \item $\bW_r$ is a clone variable of $\diag(\hatbX_r,\hatbY_r)$,
    \item $\hatbJ_r$ is a block diagonal matrix defined from block diagonal components in $\bJ$, serving as a clone variable of $\bW_r$,
    \item $\bV_r$ is a dual variable of $\bW_r$ ,
    \item $\bw_r=\ve_b(\bW_r)$, $\hatbj_r^k=\ve_b(\hatbJ_r^k)$, and $\bv_r^k=\ve_b(\bv_r^k)$,
    \item $\widehat{d}_a=\set{ (r,j)\in \bbN_{v_r+y_r} | \bJ_a=(\hat{\bJ}_r)_{j,j} }$,
    \item $\hatbR_r^{k+1}=\bW_r^{k+1}-\hatbJ_r^{k+1}$.
\end{itemize}

\cref{alg:02} can also be applied to implement the proposed \gls{admm} framework by replacing \cref{eq:W Chordal proj}, \cref{eq:J Chordal update}, and \cref{eq:V Chordal update} with \cref{eq:W Seperable Chordal proj},\cref{eq:J Separable Chordal update}, and \cref{eq:V Separable Chordal update}, respectively.
This enables fully distributed stability analysis for certain networks while significantly reducing the communication of dissipativity parameters.
\cref{fig:Hierarchical Network} illustrates an example that satisfies the assumptions of \cref{thm:Decompose_Equality}. 
The numerical example in \cref{Chap:Numerical Example} performs distributed stability analysis for a network with decomposable equality constraint satistfying \cref{thm:Decompose_Equality}.

\section{Extension to Nonlinear Systems} \label{Chap:Extension to Nonlinear Systems}
There are many variations of \glspl{lmi} similar to \cref{lem:KYP Lemma}, that verify the dissipativity of certain nonlinear systems, such as Euler-Lagrange systems known to be passive \cite[Chapter 6]{lozano2013dissipative}, linear systems with time-delay \cite{li2002delay,bridgeman2016conic}, parametric uncertainty \cite{walsh2019interior}, and stochastic systems \cite{haddad2022dissipativity}. 
Likewise, if $\mathscr{G}_i$ is not \gls{lti} but has some structured nonlinearity, a variation of \cref{lem:KYP Lemma} may be applied with adjusted definitions of $\textbf{X}_i$ and $\mathbb{P}_i$. 
For instance, if $\mathscr{G}_i$ is \gls{lti} with polytopic uncertainty, then $\textbf{X}_i$ remains the same, and $\mathbb{P}_i {=} \{\textbf{X}_i | \hspace{-2.5pt}\text{\cite[Equation 18]{walsh2019interior} holds}\}$. Alternatively, if $\mathscr{G}_i$ is a linear system with input, state, and output delay, then $\textbf{X}_i$ remains the same, and $\mathbb{P}_i = \{\textbf{X}_i |\hspace{-2.5pt}\text{\cite[Theorem 3.1]{bridgeman2016conic} holds}\}$.
Critically, $\bbQ$ only depends on $(\bQ_i,\bS_i,\bR_i)$, so these variations in $\textbf{X}_i$ and $\mathbb{P}_i$ do not affect \cref{opt:Main Chordal Problem: Stability,opt:Main Chordal Problem: Stability,opt:Main Chordal Problem: Equality}. Therefore, the results readily extend to these and other structured nonlinear cases.

\section{NUMERICAL EXAMPLE} \label{Chap:Numerical Example}

\subsection{Large-Scale Network Stability Analysis} \label{subChap:Large Scale Network Stability Analysis}



\begin{figure}
\centering
  \subfloat[Graph $\cH$ of \glspl{uav}]{\includegraphics[width=0.24\textwidth]{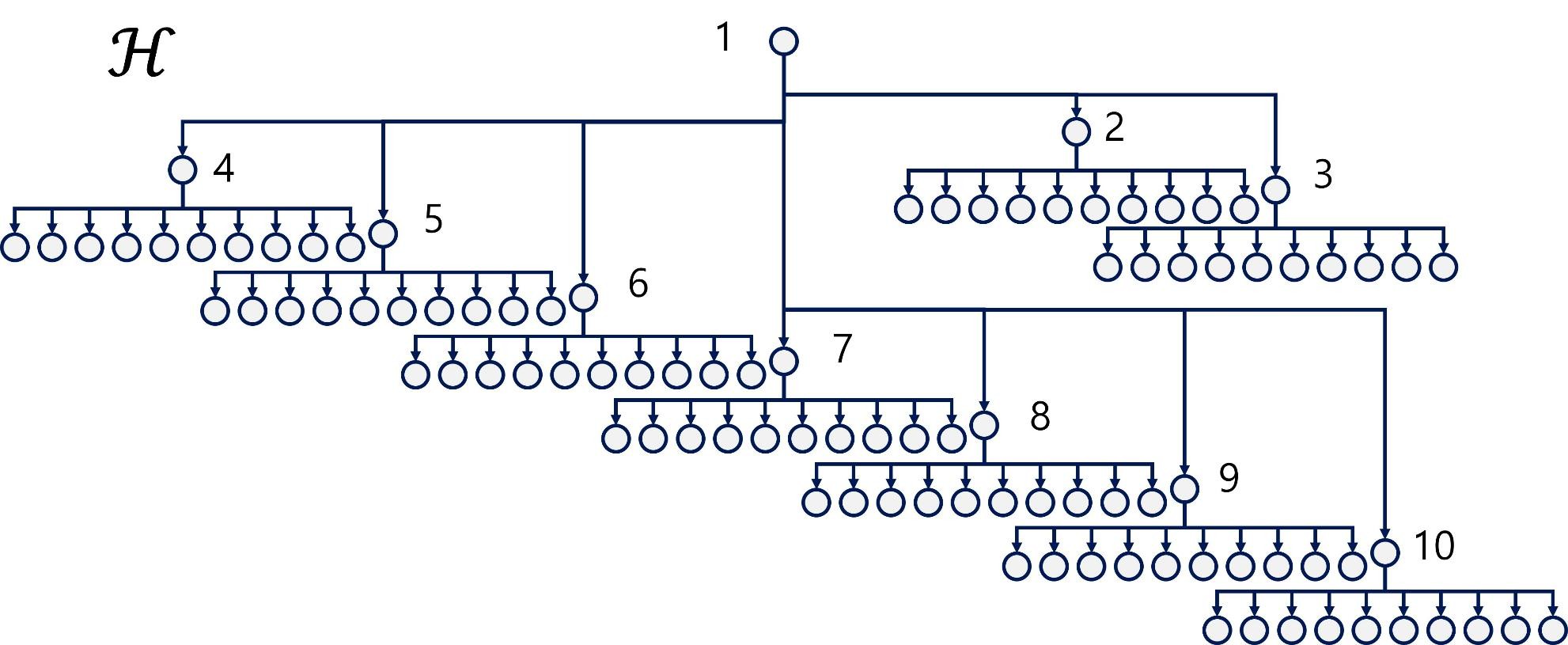}\label{fig1:Interconnection_Graph}}
  \subfloat[Graph $\barcQ$ of \glspl{uav}]{\includegraphics[width=0.24\textwidth]{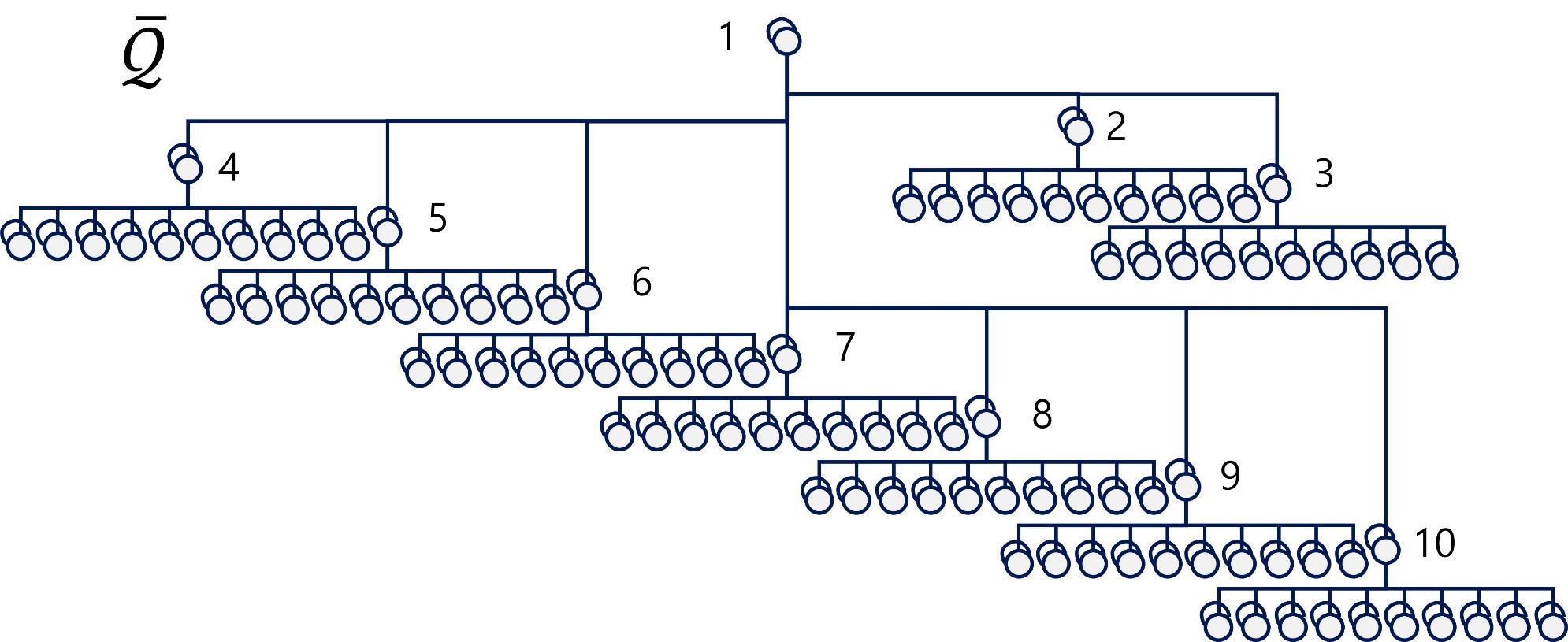}\label{fig1:barQ_Graph}}
  \vspace{0.1cm}
  \subfloat[Overlapped graph $\barcQ_o$ of $\barcQ$]{\includegraphics[width=0.24\textwidth]{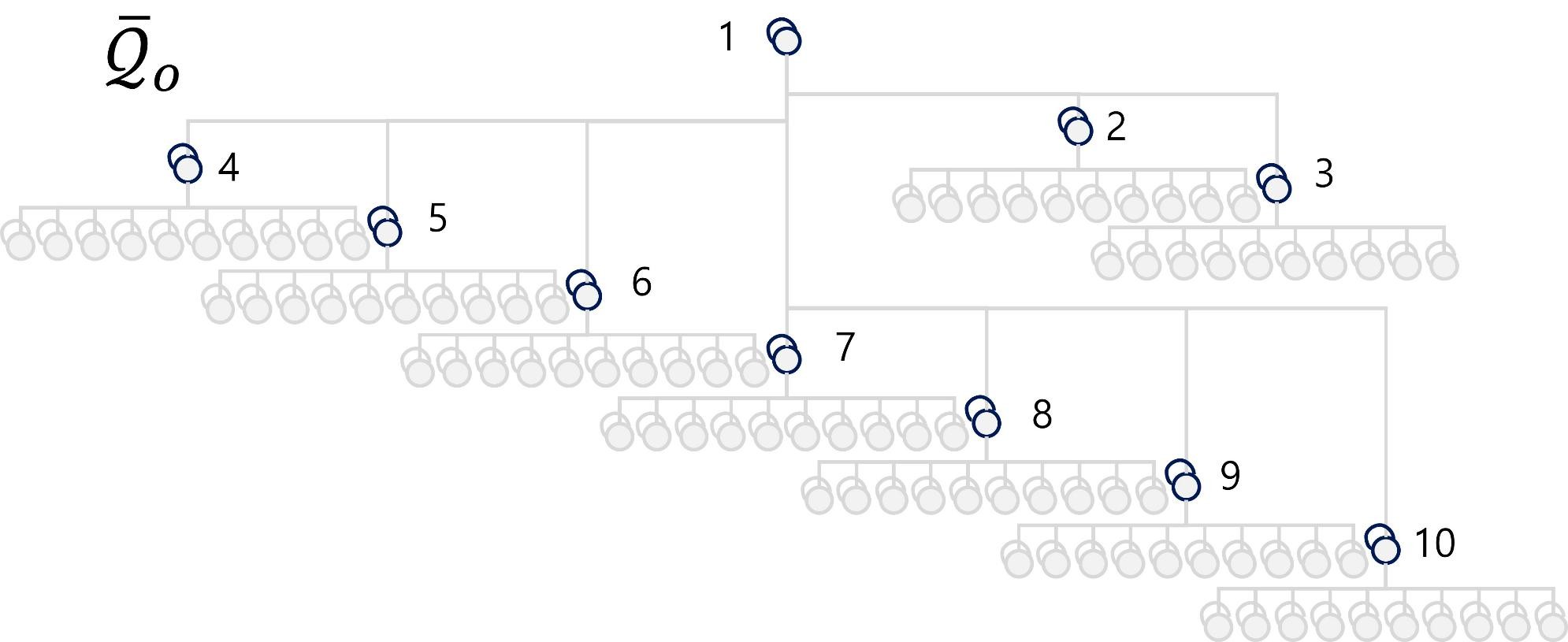}\label{fig1:Overlapped_Graph}}
  \subfloat[$\hspace{-2.5pt}\cV\hspace{-1pt}(\hspace{-1.5pt}\cR_r\hspace{-1.5pt})$ of disconnected graph $\barcQ_o\hspace{-1pt}$]{\includegraphics[width=0.24\textwidth]{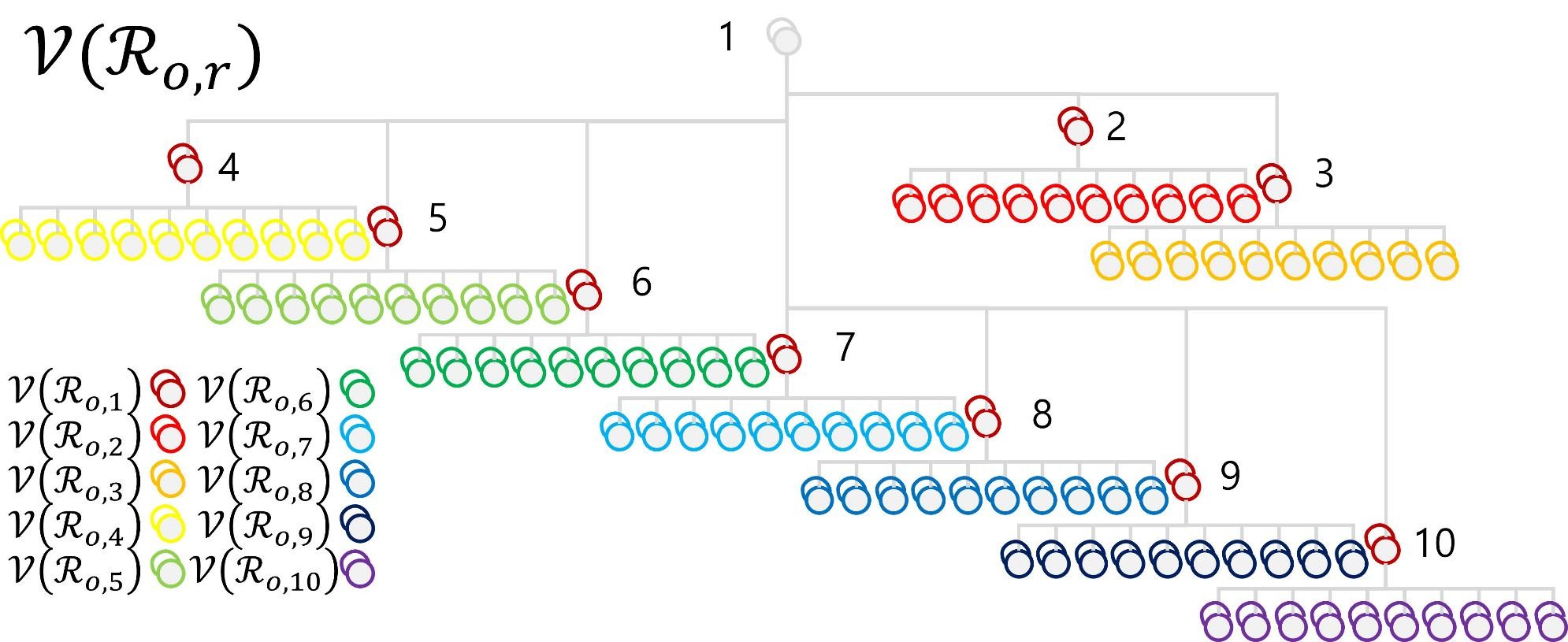}\label{fig1:VR_Graph}}
   \caption{Graph network of a \glspl{uav}}\label{fig1:Network_Structure_of_UAC}
    \vspace*{-1.25\baselineskip} 
\end{figure}

The stability of a 2D swarm of \glspl{uav} is analyzed using \cref{alg:01,alg:02} to demonstrate the proposed approach. 
The mass, moment of inertia, and wing length of each \gls{uav} are independently sampled from Gaussian distribution with mean values $m_{a}{=}3$ kg, $I_{xx,a}{=}1$ kg-m$^2$, and $l_{a}{=}0.2$ m, respectively. 
The standard deviations of all distributions are 10\% of their mean values.
The average open-loop dynamics of the system is given by $\dbx{=}\bA_{a}\bx{+}\bB_{a}\bu$ and $\by{=}\bC\bx$, where
\begin{align*}
    \bA_{a}&= \begin{bmatrix}  \bzero_{3\times 2} & \bzero_{3\times 1} & \bI_3 \\ \bzero_{1\times 2} & -g & \bzero_{1\times 3} \\ \bzero_{2\times 2} & \bzero_{2\times 1} & \bzero_{2\times 3}\end{bmatrix},
    \bB_{a}=\begin{bmatrix}
        \bzero_{4\times 1} & \bzero_{4\times 1} \\
        \frac{1}{m} & \frac{1}{m} \\
        -\frac{l}{I_{xx}} & \frac{l}{I_{xx}}
    \end{bmatrix},
    \bC=\bI_6.
\end{align*}
Each \gls{uav} is equipped with a state feedback controller $\bu{=}{-}\bK\bx$, where $\bK$ is obtained independently for each \gls{uav} using a linear-quadratic regulator. 
The average value is
\begin{align*}
    \bK_{a}&=\begin{bmatrix*}[r]
        7.07 & 7.07 & -49.00 & 8.70 & 5.12 & -15.81 \\
        -7.07 & 7.07 & 49.00 & -8.70 & 5.12 & 15.81 
    \end{bmatrix*}.
\end{align*}


\begin{figure}
    \centering
    \subfloat[Results using \cref{alg:01}]{%
    \begin{minipage}[t]{0.225\textwidth}
        \resizebox{0.85\textwidth}{!}{
%
%
\definecolor{mycolor1}{rgb}{0.00000,0.44700,0.74100}%
\definecolor{mycolor2}{rgb}{0.85000,0.32500,0.09800}%

\begin{tikzpicture}

\begin{axis}[%
width=5in,
height=2.5in,
at={(0.758in,0.509in)},
scale only axis,
bar shift auto,
xmin=0,
xmax=110,
xtick={10, 20, 30, 40, 50, 60, 70, 80, 90, 100},
xlabel style={font=\color{white!15!black}},
xlabel={Number of UAVs},
ymin=0,
ymax=3000,
ylabel style={at={(axis description cs:-0.125,.5)},font=\color{white!15!black}},
ylabel={Iterations},
axis background/.style={fill=white},
title style={font=\bfseries},
title={\textbf{Iterations Required}},
xmajorgrids,
ymajorgrids,
legend style={at={(0.03,0.97)}, anchor=north west, legend cell align=left, align=left, draw=white!15!black},
title style={font=\Huge},xlabel style={font={\Huge}},ylabel style={font=\Huge},legend style={font=\Huge},tick label style={font=\LARGE}
]
\addplot[ybar, bar width=5, fill=blue, draw=black, area legend] table[row sep=crcr] {%
10  280\\
20  1693\\
30  2178\\
40  1408\\
50  1672\\
60  1815\\
70  1964\\
80  2082\\
90  1991\\
100 2072\\
};


\end{axis}
\end{tikzpicture}
        \resizebox{0.95\textwidth}{!}{
%
%
\definecolor{mycolor1}{rgb}{0.00000,0.44700,0.74100}%
\definecolor{mycolor2}{rgb}{0.85000,0.32500,0.09800}%
\begin{tikzpicture}

\begin{axis}[%
width=5in,
height=2.5in,
at={(0.758in,0.509in)},
scale only axis,
bar shift auto,
xmin=0,
xmax=110,
xtick={10, 20, 30, 40, 50, 60, 70, 80, 90, 100},
xlabel style={font=\color{white!15!black}},
xlabel={Number of UAVs},
ymin=0,
ymax=25,
ylabel style={at={(axis description cs:-0.075,.5)},font=\color{white!15!black}},
ylabel={Time [s]},
axis background/.style={fill=white},
title style={font=\bfseries},
title={\textbf{Average Computation Time per Iteration}},
xmajorgrids,
ymajorgrids,
legend style={at={(0.03,0.97)}, anchor=north west, legend cell align=left, align=left, draw=white!15!black},
title style={font=\Huge},xlabel style={font={\Huge}},ylabel style={font=\Huge},legend style={font=\Huge},tick label style={font=\LARGE}
]
\addplot[ybar, bar width=5, fill=blue, draw=black, area legend] table[row sep=crcr] {%
10  0.12202428571428571428571428571429\\
20  0.35406810395747194329592439456586\\
30  0.8341995360269375573921028466483\\
40  2.0655918635830965909090909090909\\
50  4.0500448403110167464114832535885\\
60  5.7703209451699173553719008264463\\
70  7.809857110997963340122199592668\\
80  10.550048030739673390970220941402\\
90  18.148664855265344048216976393772\\
100 23.599040015926737451737451737452\\
};


\end{axis}
\end{tikzpicture}
        \resizebox{0.85\textwidth}{!}{
%
%
\definecolor{mycolor1}{rgb}{0.00000,0.44700,0.74100}%
\definecolor{mycolor2}{rgb}{0.85000,0.32500,0.09800}%
\begin{tikzpicture}

\begin{axis}[%
width=5in,
height=2.5in,
at={(0.758in,0.509in)},
scale only axis,
bar shift auto,
xmin=0,
xmax=110,
xtick={10, 20, 30, 40, 50, 60, 70, 80, 90, 100},
xlabel style={font=\color{white!15!black}},
xlabel={Number of UAVs},
ymin=0,
ymax=60000,
ylabel style={font=\color{white!15!black}},
ylabel={Time [s]},
axis background/.style={fill=white},
title style={font=\bfseries},
title={\textbf{Total Computation Time}},
xmajorgrids,
ymajorgrids,
legend style={at={(0.03,0.97)}, anchor=north west, legend cell align=left, align=left, draw=white!15!black},
title style={font=\Huge},xlabel style={font={\Huge}},ylabel style={font=\Huge},legend style={font=\Huge},tick label style={font=\LARGE}
]
\addplot[ybar, bar width=5, fill=blue, draw=black, area legend] table[row sep=crcr] {%
10	34.1668\\
20	599.4373\\
30	1816.88658946667\\
40	2908.353343925\\
50	6771.67497300002\\
60	10473.1325154834\\
70	15338.559366 \\
80	21965.2070960001\\
90	36133.9917268333\\
100	48897.2109130002\\
};


\end{axis}
\end{tikzpicture}
    \end{minipage} \label{fig:alg1_result}
    }
    \subfloat[Results using \cref{alg:02}]{%
    \begin{minipage}[t]{0.225\textwidth}
        \resizebox{0.85\textwidth}{!}{
%
%
\definecolor{mycolor1}{rgb}{0.00000,0.44700,0.74100}%
\definecolor{mycolor2}{rgb}{0.85000,0.32500,0.09800}%

\begin{tikzpicture}

\begin{axis}[%
width=5in,
height=2.5in,
at={(0.758in,0.509in)},
scale only axis,
bar shift auto,
xmin=0,
xmax=110,
xtick={10, 20, 30, 40, 50, 60, 70, 80, 90, 100},
xlabel style={font=\color{white!15!black}},
xlabel={Number of UAVs},
ymin=0,
ymax=11000,
ylabel style={at={(axis description cs:-0.125,.5)},font=\color{white!15!black}},
ylabel={Iterations},
axis background/.style={fill=white},
title style={font=\bfseries},
title={\textbf{Iterations Required}},
xmajorgrids,
ymajorgrids,
legend style={at={(0.03,0.97)}, anchor=north west, legend cell align=left, align=left, draw=white!15!black},
title style={font=\Huge},xlabel style={font={\Huge}},ylabel style={font=\Huge},legend style={font=\Huge},tick label style={font=\LARGE}
]
\addplot[ybar, bar width=5, fill=red, postaction={
        pattern=north east lines
    }, draw=black, area legend] table[row sep=crcr] {%
10  408\\
20  4438\\
30  4869\\
40  5526\\
50  6363\\
60  7034\\
70  7688\\
80  8290\\
90  9675\\
100 10216\\
};


\end{axis}
\end{tikzpicture}
        \resizebox{0.95\textwidth}{!}{
%
%
\definecolor{mycolor1}{rgb}{0.00000,0.44700,0.74100}%
\definecolor{mycolor2}{rgb}{0.85000,0.32500,0.09800}%
\begin{tikzpicture}

\begin{axis}[%
width=5in,
height=2.5in,
at={(0.758in,0.509in)},
scale only axis,
bar shift auto,
xmin=0,
xmax=110,
xtick={10, 20, 30, 40, 50, 60, 70, 80, 90, 100},
xlabel style={font=\color{white!15!black}},
xlabel={Number of UAVs},
ymin=0,
ymax=12,
ylabel style={at={(axis description cs:-0.075,.5)},font=\color{white!15!black}},
ylabel={Time [ms]},
axis background/.style={fill=white},
title style={font=\bfseries},
title={\textbf{Average Computation Time per Iteration}},
xmajorgrids,
ymajorgrids,
legend style={at={(0.03,0.97)}, anchor=north west, legend cell align=left, align=left, draw=white!15!black},
title style={font=\Huge},xlabel style={font={\Huge}},ylabel style={font=\Huge},legend style={font=\Huge},tick label style={font=\LARGE}
]
\addplot[ybar, bar width=5, fill=red, postaction={
        pattern=north east lines
    }, draw=black, area legend] table[row sep=crcr] {%
10  9.431863 \\
20  8.19484\\
30  8.5709591\\
40  9.698064\\
50  9.0434543\\
60  8.4613165\\
70  9.1975416\\
80  8.131568\\
90  8.826367\\
100 9.3510082\\
};


\end{axis}
\end{tikzpicture}
        \resizebox{0.85\textwidth}{!}{
%
%
\definecolor{mycolor1}{rgb}{0.00000,0.44700,0.74100}%
\definecolor{mycolor2}{rgb}{0.85000,0.32500,0.09800}%
\begin{tikzpicture}

\begin{axis}[%
width=5in,
height=2.5in,
at={(0.758in,0.509in)},
scale only axis,
bar shift auto,
xmin=0,
xmax=110,
xtick={10, 20, 30, 40, 50, 60, 70, 80, 90, 100},
xlabel style={font=\color{white!15!black}},
xlabel={Number of UAVs},
ymin=0,
ymax=120,
ylabel style={font=\color{white!15!black}},
ylabel={Time [s]},
axis background/.style={fill=white},
title style={font=\bfseries},
title={\textbf{Total Computation Time}},
xmajorgrids,
ymajorgrids,
legend style={at={(0.03,0.97)}, anchor=north west, legend cell align=left, align=left, draw=white!15!black},
title style={font=\Huge},xlabel style={font={\Huge}},ylabel style={font=\Huge},legend style={font=\Huge},tick label style={font=\LARGE}
]
\addplot[ybar, bar width=5, fill=red, postaction={
        pattern=north east lines
    }, draw=black, area legend] table[row sep=crcr] {%
10	3.8482\\
20	36.3687\\
30	41.7320\\
40	53.5915\\
50	57.5435\\
60	59.5169\\
70	70.7107\\
80	67.4107\\
90	85.3951\\
100	95.5299\\
};


\end{axis}
\end{tikzpicture}
    \end{minipage} \label{fig:alg2_result}
    }
    \caption{Analysis results: Although \cref{alg:02} requires more iteration than \cref{alg:01}, its average computation time per iteration is notably lower. Consequently, the total computation time of \cref{alg:02} is much less than \cref{alg:01}. This advantage becomes more pronounced for larger networks.} \label{fig:stability_analysis_result}
    \vspace*{-0.75\baselineskip} 
\end{figure}
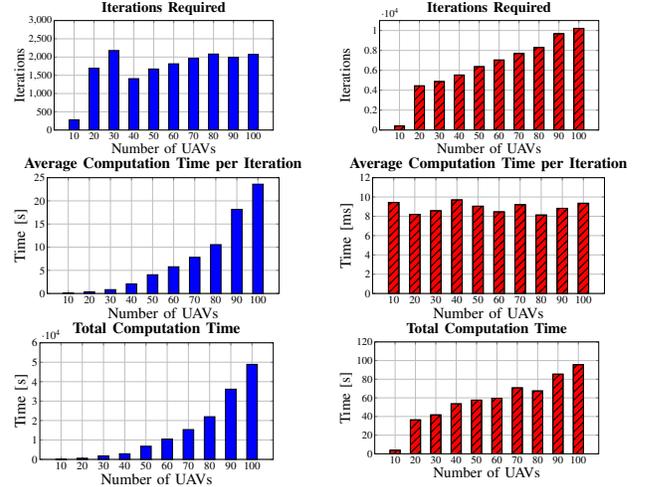

    

\begin{figure}
    \centering
    \includegraphics[width = 0.48\textwidth]{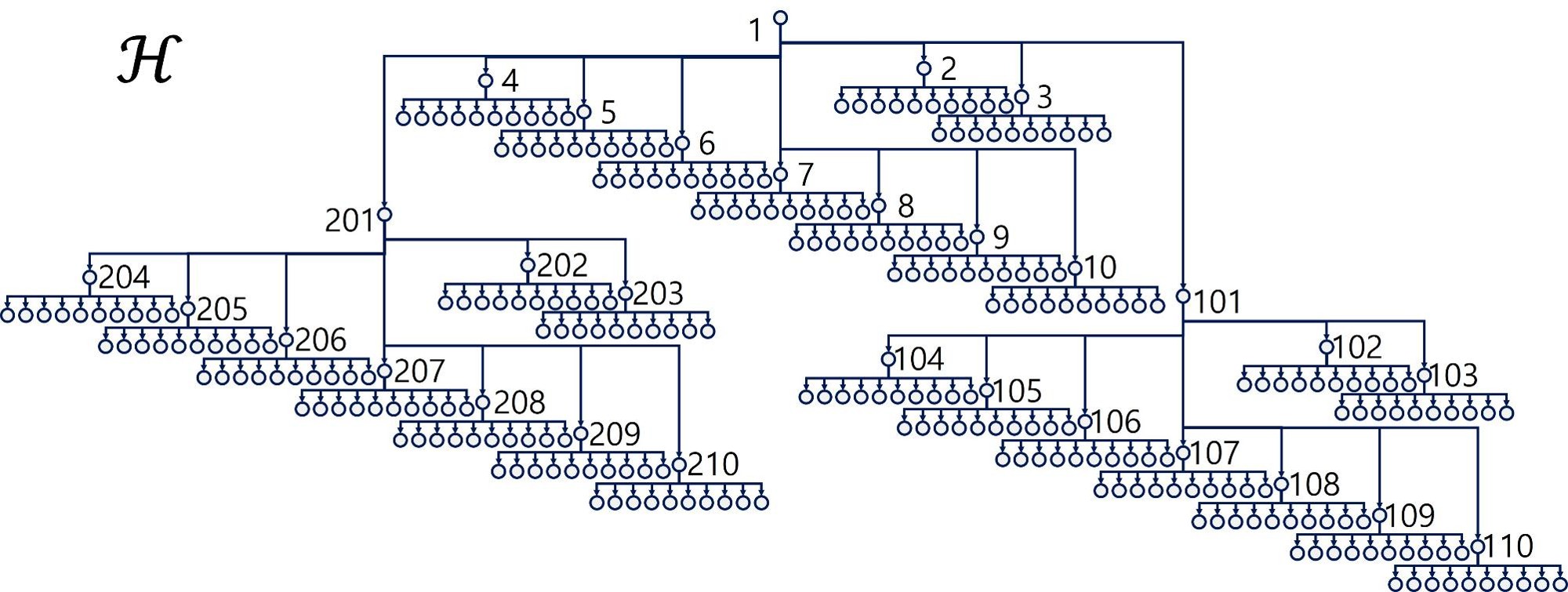}
    \caption{Larger-scale network of \glspl{uav}}
    \label{fig:Extreme Network}
    \vspace*{-1.25\baselineskip} 
\end{figure}

The heterogeneous \glspl{uav} network $\cH$, shown in \cref{fig1:Interconnection_Graph}, consists of 100 \glspl{uav}.
Each node in \cref{fig1:Interconnection_Graph} refers to each \gls{uav}.
The tail of every edge in $\edge{\cH}$ receives state information from the head vertex in the same edge.
Consequently, the closed-loop dynamics of the $i^\thh$ agent is
\begin{align} \label{eq:dotbx_i}
    \dot{\bx}_i&=(\bA_i-\bB_i\bK_i)\bx_i+\bB_i\bK_i\Big(\bx_i^d+\hspace{-8pt}\sum_{j\in\cN_\cH^+(i)}\hspace{-8pt}(\bH)_{i,j}\bx_j\Big),
\end{align}
where $\bx_i^d$ denotes the desired state of $i^\thh$ agent. 
The variables in \cref{eq:dotbx_i} correspond to those in \cref{eq:Interconnected Subsystem} through $\bu_i{=}\bx_i^d$, $\by_j{=}\bx_j$, and $\be_i{=}\bx_i^d{+}\sum_{j\in\cN_\cH^+(i)}(\bH)_{i,j}\bx_j$.

\cref{fig1:barQ_Graph} results from applying \cref{thm:supergraph} to $\cH$.
The graph $\barbQ$ has 99 maximal cliques, each consisting of two vertices. 
Since $\cH$ is a directed graph without a directed cycle, it satisfies \cref{cor:Chordal_graph_directed}, so \cref{thm:VNDT_Chordal_Decomposition} can be applied for stability analysis of the multi-agent \gls{uav} system without using chordal expansion to $\barcQ$.
This yields the overlapping graph $\barcQ_o$ described in \cref{fig1:Overlapped_Graph}.
The graph $\barcQ_o$ is a disconnected graph with 10 components, and its $\vertex{\cR_{o,r}}$ for all $r{\in}\bbN_{10}$ are shown in \cref{fig1:VR_Graph}.
For $r{\neq} l$, $\vertex{\cR_{o,r}}{\cap}\vertex{\cR_{o,l}}{=}\emptyset$, so it satisfies assumptions in \cref{thm:Separable_Equality,thm:Decompose_Equality}.
Therefore, \cref{alg:02} can solve \cref{opt:Main Problem} of this network with decomposable equality constraints.

\cref{alg:01,alg:02} were initialized with $\bX_i{=}100\bI$ for $i{\in}\bbN_{10}$, $\bX_i{=}50\bI$ for $i{\in}\bbN_{100}{\setminus}\bbN_{10}$, and $\bY_{j,j}^p{=}100\bI$ for $p{\in}\bbN_{99}$ and $j{\in}\bbN_{10}$. 
The initial values of $\bZ$, $\bW$, $\bJ$, $\bT$, $\bU$, and $\bV$ followed the initialization rules in \cref{Chap:ISNSA,subChap:DNSA}. Both algorithms successfully found feasible $\bX_i$ with $\barbQ(\bX){\prec}0$ without sharing dynamics information.
Furthermore, when \cref{alg:02} is used, even the communication of dissipativity parameters was restricted based on the graph structure of maximal cliques of $\barcQ$, $\barcQ_o$, and $\vertex{\cR_{o,r}}$ for $r{\in}\bbN_{10}$.

\cref{fig:stability_analysis_result} presents the results of the stability analysis of the \gls{uav} network.
Simulations were performed using \glspl{uav}, with their numbers varying from 10 to 100 in increments of 10. 
In the initial simulation with 10 \glspl{uav}, the 10 vertices numbered in \cref{fig1:Interconnection_Graph} were used.
For each subsequent test, the network was expanded by adding 10 tail vertices to each numbered vertex.
As a result, the network in the final simulation reached the network in \cref{fig1:Interconnection_Graph}.
The stability analysis was carried out using MOSEK \cite{aps2019mosek}, YALMIP \cite{lofberg2004yalmip}, and MATLAB.
In this simulation, only the computation time for solving each \gls{admm} step was measured, as projection time dominate all other operations.

As shown in the results, \cref{alg:02} requires more iterations to converge than \cref{alg:01}.
However, its total computation time is considerably lower, and this advantage does stand out in larger networks.
This efficiency arises because all maximal cliques in the network consist of 4 vertices, so the size of \cref{eq:Z Chordal proj} for all $p{\in}\bbN_{99}$ remains constant with $n_q{=}n_m{=}4$.
Therefore, the average computation time per iteration of \cref{alg:01} is approximately $N^4$ times greater than that of \cref{alg:02}, which far outweighs the difference in required iterations.

\begin{figure}
\centering
    \resizebox{0.4\textwidth}{!}{
%
%
\definecolor{mycolor1}{rgb}{1.00000,0.00000,1.00000}%
\definecolor{mycolor2}{rgb}{0.00000,1.00000,1.00000}%
\begin{tikzpicture}

\begin{axis}[%
width=3.7in,
height=2in,
at={(0.758in,0.592in)},
scale only axis,
xmin=5,
xmax=305,
xlabel style={font=\color{white!10!black}},
xlabel={Number of UAVs},
ymin=-2,
ymax=6,
ylabel style={font=\color{white!10!black}},
ylabel={log(Time) [log(s)]},
axis background/.style={fill=white},
title style={font=\bfseries},
xmajorgrids,
ymajorgrids,
legend columns=-1,
legend style={at={(0.00,1)}, anchor=north west, legend cell align=left, align=left, draw=white!2.5!black},
title style={font=\huge},xlabel style={font={\LARGE}},ylabel style={font=\LARGE},legend style={font=\large},
]
\addplot [color=blue, dotted, line width=1.0pt, mark size=4pt, mark=o, mark options={solid, blue}]
  table[row sep=crcr]{%
10	0.985219831897741\\
20	1.91722555409355\\
30  3.047002130518505\\
40  3.738867725635786\\
50  4.228220648868423\\
60  4.691452827587458\\
};
\addlegendentry{SDP}

\addplot [color=red, dashed, line width=1.0pt, mark size=2pt, mark=*, mark options={solid, red}]
  table[row sep=crcr]{%
10	-0.530914700876880\\ 
20	-0.180238971485901\\ 
30	0.147150525210071\\  
40	0.492033629731959\\  
50	0.838584657086384\\  
60	0.883757934316753\\  
70	1.069449785027674\\  
80	1.286061068353925\\  
90	1.313981049635012\\  
100	1.548605590147786\\  
110 1.603614152872576\\ 
120	1.749551585216718\\ 
130	1.984929917927822\\ 
140	1.98989323016150997\\  
150	2.122962223108981968\\  
160	2.181370463636219\\  
170	2.201684100868195\\  
180	2.464998807016851\\  
190	2.493856888926965\\  
200	2.541804195146257\\  
    210 2.477425588494946\\ 
    220	2.5671292193926903\\ 
    230	2.5671705185850373\\ 
240	2.6743621041059377\\  
250	2.666422406533684\\  
260	2.73202984555469018\\  
270	2.7895632977345997\\  
280	2.8438338318020638\\  
290	2.9414967737059859\\  
300	2.9811353808428838\\  
};
\addlegendentry{VNDT}

\addplot [color=green, dashdotted, line width=1.0pt, mark size=4pt, mark=x, mark options={solid, green}]
  table[row sep=crcr]{%
10   0.121986379050838\\
20   0.274665769081389\\
30   0.370235437283177\\
40   0.448644494226009\\
50   0.507855871695831\\
60   0.572650915252359\\
70   0.623848616671376\\
80   0.669772325387642\\
90   0.711089840140532\\
100   0.748815749085026\\
110   0.784460376298138\\
120   0.817823093451339\\
130   0.848503173886313\\
140   0.877123600647618\\
150   0.904087717349737\\
160   0.929286062374987\\
170   0.953107035221487\\
180   0.975914090049001\\
190   0.997517439414725\\
200   1.017884390216724\\
210   1.037772998391635\\
220   1.056195686916524\\
230   1.073670703612163\\
240   1.090476732698775\\
250   1.106598813212537\\
260   1.122563174958190\\
270   1.137708492852578\\
280   1.152162931391168\\
290   1.166205063173285\\
300   1.179709739929585\\
};
\addlegendentry{SSA\cite{agarwal2020distributed}}

\addplot [color=mycolor1, dash dot dot, line width=1.0pt, mark size=4.0pt, mark=+, mark options={solid, mycolor1}]
  table[row sep=crcr]{%
10	1.5336043054765379900623045936771\\
20	2.7777437634347297622003053334992\\
30	3.2593278193795845\\
40	3.4636471690352146994003517207326\\
50	3.8306961046613297924640894380712\\
60	4.0200765986526337675931294677766\\
70	4.1857845715570669683899454918595\\
80	4.3417353023077389332434990006479\\
90	4.5579159407869196229037095311218\\
100	4.6892840877608670232167666822887\\    
110 4.9923723399260995068668763307728 \\
};
\addlegendentry{alg1}

\addplot [color=orange, loosely dashed, line width=1.0pt, mark size=4pt, mark=asterisk, mark options={solid, orange}]
  table[row sep=crcr]{%
10	0.585257635257462\\
20	1.560727777466126\\
30	1.620469198710031\\
40	1.728760286174553\\
50	1.759996273690194\\
60	1.774640302444487\\
70	1.849485136559961\\
80	1.828728836923080\\
90	1.931432951438255\\
100	1.980139323129708\\
110 2.009926149878038\\
120	2.215018060521491\\ 
130	2.409923828991956\\ 
140	2.449935748850888\\ 
150	2.4431188138831775\\ 
160	2.5080617755711972\\ 
170 2.5652478892877731\\ 
180	2.5710301970200071\\ 
190	2.6105603064578749\\ 
200	2.610102886894875\\ 
210 2.589480067953295\\ 
220 2.582346471545459\\ 
230 2.60833205531362847\\
240 2.6229583190468295319183757750192\\ 
250 2.5680723554711321 \\
260 2.64050250520770980 \\
270 2.6369267833464274391556477585167 \\
280 2.6548218225221809201764066310759 \\ 
290 2.726902683949696829 \\
300 2.7162918030918044 \\
};
\addlegendentry{alg2}

\end{axis}
\end{tikzpicture}
    \caption{Computation time of stability analysis for larger-scale network: \gls{sdp} fails to analyze the stability of networks with more than 60 agents due to the memory issue.} \label{fig:Comparison_Large_Network}
    \vspace*{-1.25\baselineskip} 
\end{figure}
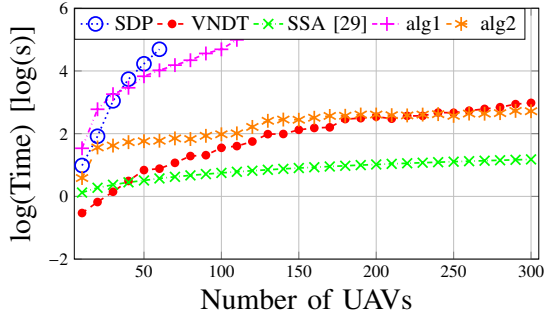

\begin{table}
\caption{Computational complexity of algorithms: $k_1$ and $k_2$ are the required iterations of \cref{alg:01,alg:02}, respectively.}
\label{tab:Complexity}
\begin{center}
\renewcommand{\arraystretch}{1.3}
\begin{tabular}{p{1.3cm}|p{1.3cm}|p{1.3cm}|p{1.3cm}|p{1.3cm}}
\hline
\gls{sdp} & \gls{vndt} & SSA & \cref{alg:01} & \cref{alg:02} \\
\hline\hline
$\cO(N^6n_m^6)$ & $\cO(N^4n_m^6)$ & $N\cO(n_m^6)$ & $k_1\cO(N^4n_m^6)$ & $k_2\cO(n_m^6)$\\
\hline
\end{tabular}
\end{center}
\vspace*{-1.75\baselineskip} 
\end{table}

The convergence results of \cref{alg:01,alg:02} are compared with those of other centralized approaches and the algorithm in \cite{agarwal2020distributed}.
The centralized \gls{sdp} determines the dissipativity of the entire network by applying \cref{lem:KYP Lemma} with a constraint, $\bQ{\prec}0$, treating the network as a single dynamical system.
The centralized \gls{vndt} solves \cref{opt:Main Problem} in a fully centralized manner.
The \gls{ssa} method proposed in \cite{agarwal2020distributed} determines the network dissipativity sequentially.
Specifically, it verifies the $\bQ\bS\bR$-dissipativity of one vertex and then extends the analysis by progressively incorporating additional agents into the network while maintaining the dissipativity parameters from the previous step.
The computational complexities of algorithms are summarized in \cref{tab:Complexity}.

The results presented in \cref{fig:Comparison_Large_Network} illustrate the total computation time of different methods for analyzing the stability of the larger-scale network in \cref{fig:Extreme Network}.
The centralized \gls{sdp} is both computationally expensive and memory-intensive, requiring more than 1TB of RAM for networks with $N{>}60$.
Although \gls{vndt} provides far faster results than the centralized \gls{sdp}, both algorithms force agents to share their dynamics information with the entire network, which reduces information privacy.
\cref{alg:01} ensures information security and outperforms centralized \gls{sdp} for $N{>}30$, but remains consistently slower than \gls{vndt} and becomes impratically slow for $N{>}100$. 
In contrast, \cref{alg:02} achieved acceptable computation times across all tested networks and outperformed \gls{vndt} for $N{>}240$, successfully combining both information security and computational efficiency.


The sequential nature of \gls{ssa} enables a distributed and computationally efficient operation, making it appear to be the most effective method. 
However, it can become conservative when applied to networks with long paths, as the dissipativity parameters fixed in one step are reused to verify the dissipativity of subsequent vertices.
A detailed investigation of this limitation is presented in the next section.

\begin{figure}
    \centering
    \includegraphics[width = 0.25\textwidth]{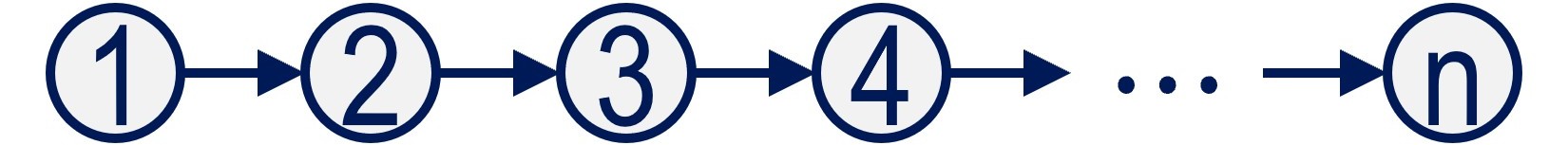}
    \caption{Series network of \glspl{uav}}
    \label{fig:Series Network}
    \vspace*{-1.25\baselineskip} 
\end{figure}

\subsection{Series Network Stability Analysis} \label{subChap:Series Network Stability Analysis}
In this session, the same stability analysis methods are applied to a series network of \glspl{uav}, following the dynamics in \cref{subChap:Large Scale Network Stability Analysis}.
The network is illustrated in \cref{fig:Series Network}.
The computation time is measured for different numbers of \glspl{uav}, denoted by $n$.
Unlike hierarchical large-scale networks, such as \cref{fig1:Network_Structure_of_UAC}, series networks have longer paths, making conservative algorithms like \gls{ssa} in \cite{agarwal2020distributed} unsuitable.

\begin{figure}
\centering
    \resizebox{0.4\textwidth}{!}{
%
%
\definecolor{mycolor1}{rgb}{1.00000,0.00000,1.00000}%
\definecolor{mycolor2}{rgb}{0.00000,1.00000,1.00000}%
\begin{tikzpicture}

\begin{axis}[%
width=3.7in,
height=2in,
at={(0.758in,0.592in)},
scale only axis,
xmin=1,
xmax=11,
xlabel style={font=\color{white!10!black}},
xlabel={Number of UAVs},
ymin=-2,
ymax=6,
ylabel style={font=\color{white!10!black}},
ylabel={log(Time) [log(s)]},
axis background/.style={fill=white},
title style={font=\bfseries},
xmajorgrids,
ymajorgrids,
legend columns=-1,
legend style={at={(0.00,1)}, anchor=north west, legend cell align=left, align=left, draw=white!2.5!black},
title style={font=\huge},xlabel style={font={\LARGE}},ylabel style={font=\LARGE},legend style={font=\large},
]
\addplot [color=blue, dotted, line width=1.0pt, mark size=4pt, mark=o, mark options={solid, blue}]
  table[row sep=crcr]{%
2	-1.16585280354752\\
3	-0.951799106349196\\
4	-0.494504081082235\\
5	-0.347724642517357\\
6	0.0150167117706854\\
7	0.373815352260072\\
8	0.53330880112994\\
9	0.866932805754846\\
10	0.971819397332452\\
};
\addlegendentry{SDP}

\addplot [color=red, dashed, line width=1.0pt, mark size=2pt, mark=*, mark options={solid, red}]
  table[row sep=crcr]{%
2	-0.874310583089405\\
3	-1.11423181962457\\
4	-0.682366996742804\\
5	-0.997890119272839\\
6	-0.84785236535403\\
7	-0.785823995728657\\
8	-0.667832301030643\\
9	-0.503049605870625\\
10	-0.41402744313359\\
};
\addlegendentry{VNDT}

\addplot [color=green, dashdotted, line width=1.0pt, mark size=4pt, mark=x, mark options={solid, green}]
  table[row sep=crcr]{%
2	-0.00747873514286651\\
3	0.145850619045911\\
4	0.205696862618955\\
5	0.398117511977036\\
6	0.423155400153547\\
7	0.442514260397309\\
8	0.460301142192809\\
9	0.477139060428406\\
};
\addlegendentry{SSA\cite{agarwal2020distributed}}

\addplot [color=mycolor1, dash dot dot, line width=1.0pt, mark size=4.0pt, mark=+, mark options={solid, mycolor1}]
  table[row sep=crcr]{%
2	0.616611505014178\\
3	0.933141688856634\\
4	1.48215518956029\\
5	2.25597628540671\\
6	2.51879114183915\\
7	2.7887368857395\\
8	3.29325530366367\\
9	3.86029023945833\\
10	4.24213752664932\\
};
\addlegendentry{alg1}

\addplot [color=orange, loosely dashed, line width=1.0pt, mark size=4pt, mark=asterisk, mark options={solid, orange}]
  table[row sep=crcr]{%
2	0.616611505014178\\
3	0.87796499892224\\
4	1.78877211515704\\
5	1.79780369906744\\
6	2.31733296671536\\
7	2.71139685187603\\
8	2.9922987350676\\
9	3.56331916766448\\
10	3.61954099495138\\
};
\addlegendentry{alg2}

\end{axis}
\end{tikzpicture}
    \caption{Computation time of stability analysis for series network: \gls{ssa} fails to analyze the stability of networks with more than 9 agents because of the conservative nature.} \label{fig:Comparison_Series_Network}
    \vspace*{-1.25\baselineskip} 
\end{figure}
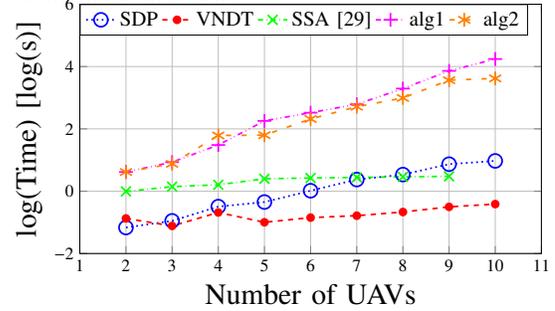

\cref{fig:Comparison_Series_Network} shows the stability analysis results for series networks. 
In this case, the centralized approaches do not exhibit numerical issues, as the network size is relatively small compared to the case in \cref{subChap:Large Scale Network Stability Analysis}.
However, \gls{ssa} shows numerical problems when applied to networks with 10 or more \glspl{uav}.
These arise because fixing the dissipativity parameters from the previous step leads to more conservative constraint sets for subsequent feasibility problems.
As this process continues, the constraints become overly restrictive, resulting in solver failures and infeasible problems.
This example highlights the limitations of \gls{ssa} for networks with long paths.
In contrast, the proposed methods successfully determine the stability of the network without numerical problems while maintaining a distributed implementation.

\subsection{Nonlinear Network Stability Analysis} \label{subChap:Nonlinear Netowrk Stability Analysis}
The proposed algorithms can also be used to analyze networks of nonlinear agents with disparate sources of uncertainty under a single framework. To demonstrate this, the following example considers a network consisting of a \gls{lti} agent, a linear time-delayed agent, and a nonlinear \gls{osp} agent.

All agents are based on the 2D robot manipulator dynamics, which are expressed as
\begin{align*}
    \bM(\bq)\ddot{\bq}+\bC(\bq,\dot{\bq})\dot{\bq}+\nabla\bU(\bq)=\bm{\tau},
\end{align*}
where $\bM(\bq){\in}\bbS^n$, $\bC(\bq,\dot{\bq}){\in}\bbR^{n\times n}$, $\bU(\bq){\in}\bbR^n$, $\bm{\tau}{\in}\bbR^n$, and $\bq\in\bbR^n$ denote the inertia matrix, Coriolis matrix, potential energy matrix, external torque input vector, and state vector, respectively.
It is well known that $\dot{\bM}(\bq){-}2\bC(\bq,\dot{\bq})$ is skew-symmetric, that is $\bx^T(\dot{\bM}(\bq){-}2\bC(\bq,\dot{\bq}))\bx{=}0$ for any $\bx{\in}\bbR^{n}$.

\begin{figure}
    \centering
    \includegraphics[width = 0.4\textwidth]{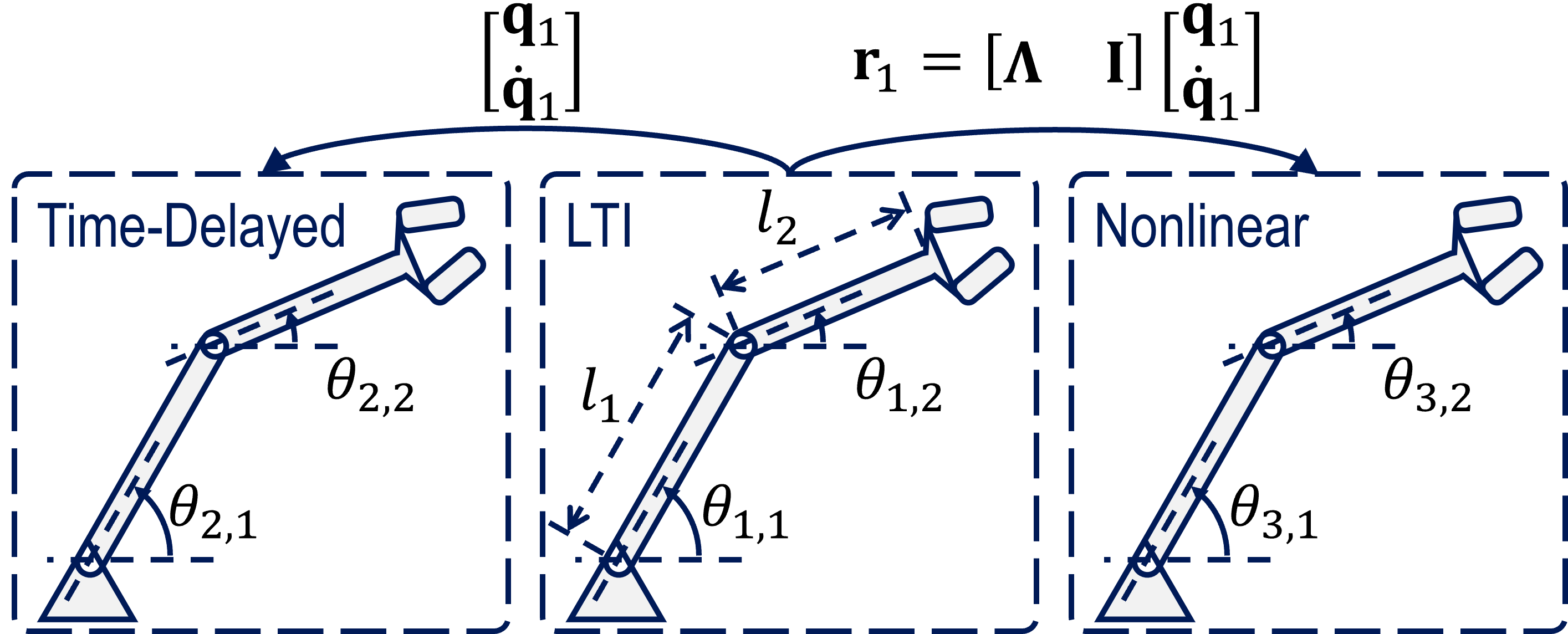}
    \caption{Interconnection of manipulators}
    \label{fig:Manipulator_Network}
    \vspace*{-1.5\baselineskip} 
\end{figure}


The network consists of three 2D robot manipulators, as depicted in \cref{fig:Manipulator_Network}.
The 2nd and 3rd agents receive the state information of the 1st agent as their input, where the state vector is $\bx_i{=}[\bq_i^T\,\dot{\bq}_i^T]^T{\in}\bbR
^4$, $\bq_i{=}[\theta_{i,1}\,\theta_{i,2}]^T{\in}\bbR^2$ for $i{\in}\bbN_3$.

The 1st agent is linearized around the equilibrium point $\theta_{1,1}{=}\theta_{1,2}{=}\frac{\pi}{2}$, and is equipped with LQR controller.
Consequently, its closed-loop dynamics are given by
\begin{align*}
    \dot{\bx}_1=(\bA_1-\bB_1\bK_1)\bx_1+\bB_1\bK_1\bx_e,\;\;
    \by_1=\begin{bmatrix}
        \bI & \bzero \\ \bzero & \bI \\ \bm\Lambda & \bI
    \end{bmatrix}\bx_1=\bC_1\bx_1.
\end{align*}
where $\bx_e$ is the exogenous input, and $\bm\Lambda\in\bbS^2$ is a constant used for the 3rd agent.
The dissipativity of the system can be determined using \cref{lem:KYP Lemma}.

The 2nd agent follows the same linearized dynamics but with input and output delays. 
Its dynamics are described by
\begin{align*}
    \dot{\bx}_2(t)&{=}(\bA_2{-}\bB_2\bK_2)\bx_2(t){+}\bB_2\bK_2\bx_1(t{-}T_i),\; 
    \by_2(t){=}\bx_2(t{-}T_o)
\end{align*}
where $T_i$ and $T_o$ are the maximum input and output delays, respectively.
Since this system is not an \gls{lti} system, distributed methods based on \cref{lem:KYP Lemma} cannot be applied.
Instead, \cite[Theorem 3.1]{bridgeman2016conic} can be used to determine the dissipativity of the time-delayed system.

The 3rd agent follows the original nonlinear manipulator dynamics with the passivation controller presented by Slotine and Li \cite{slotine1987adaptive}.
The dynamics follow
\begin{align*}
    &\bM_3(\bq_3)\dot{\br}_3+\bC_3(\bq_3,\dot{\bq}_3)\br_3+\bK_3\br_3=\bK_3\br_2, 
\end{align*}
where $\br_3=[\bm\Lambda\;\bI_2]\bx_3$, $\br_2=[\bm\Lambda\;\bI_2]\bx_2$, and $\bm\Lambda\in\bbS^2$.
This system known to be $QSR$-dissipative with respect to input $\br_2$ and output $\br_3$ \cite{hatanaka2015passivity}, with $\bQ_3{=}{-}\bK_3$, $\bS_3{=}\frac{1}{2}\bQ_3$, and $\bR_3{=}\bzero$,
where $\bK_3{\in}\bbS^2$ is a design variable, not from an LQR controller.

The mass of the first and second links and final endpoint are $3$ kg, $2$ kg, and $1$ kg, respectively.
The lengths of both links are $1$ m, and the damping coefficients at both joints are $1$ kg/s. 
The maximum input and output delays are $8$ s and $7$ s, respectively.
The matrix $\bm\Lambda$ is set to $\bI_2$. 
The network in \cref{fig:Manipulator_Network} is expressed as
\begin{align*}
    \bH=\begin{bmatrix}
        \bzero_4 & \bzero_{4\times2} & \bzero_{4\times6} \\
        \bI_4 & \bzero_{4\times2} & \bzero_{4\times6} \\
        \bzero_{2\times4} & \bI_2 & \bzero_{2\times6}
    \end{bmatrix}.
\end{align*}
With these parameters, the network is successfully proven to be stable using \cref{alg:01}.
For the linear system, the initial values are $\bP_1^0{=}\bI_4$, $\bQ_1^0{=}{-}\bI_6$, $\bS_1^0{=}\bzero_{6\times4}$, and $\bR_1^0{=}\bI_4$.
For the linear time-delayed system, the initial values are $\bP_2^0{=}\bI_4$, $\bQ_2^0{=}{-}\bI_4$, $\bS_2^0{=}0.5\bI_4$, $\bR_2^0{=}\bI_4$, $\mathscr{R}_1^0{=}\bI_4$, $\mathscr{R}_2^0{=}\bI_4$, and $\lambda{=}1$, where $\mathscr{R}_1$, $\mathscr{R}_2$, and $\lambda$ are additional design variables used to determine the dissipativity of the linear time-delyaed system \cite{bridgeman2016conic}.
For the nonlinear system, the initial values are $\bQ_3^0{=}{-}\bI_4$, $\bS_4^0{=}0.5\bI_4$, and $\bR_3^0{=}\bzero_4$.
\cref{alg:01} terminated after 6 iterations, and the resulting dissipativity parameters are summarized in \cref{tab:Manipulator Dissipativity}, thereby confirming system stability.
This example demonstrates that the proposed approaches are applicable to general dissipative systems, as discussed in \cref{Chap:Extension to Nonlinear Systems}.

\begin{table}
\caption{Dissipativity parameters for \ref{fig:Manipulator_Network}}
\label{tab:Manipulator Dissipativity}
\begin{center}
\renewcommand{\arraystretch}{1.3}
\begin{tabular}{p{.05cm}|p{7.3cm}}
\hline
  & Dissipativity parameters \\
\hline\hline
 & \setlength{\arraycolsep}{2pt}$\hspace{-3pt}\bQ_1\hspace{-1pt}{=}\hspace{-2pt}\left[\hspace{-2pt}\begin{array}{rrrrrr}
    {-}0.6148 & {-}0.0012 & {-}0.0463 & {-}0.0041 & 0.4195 & {-}0.0041\\
    {-}0.0012 & {-}0.8258 & 0.0088 & {-}0.0597 & 0.0167 & 0.2096 \\
    {-}0.0463 & 0.0088 & {-}0.9932 & 0.0011 & {-}0.0344 & 0.0175 \\
    {-}0.0041 & {-}0.0597 &	0.0011 & {-}0.9656 & 0.0066 & {-}0.0531 \\
    0.4195 & 0.0167 & {-}0.0344 & 0.0066 & {-}0.6418 & 0.0200 \\
    {-}0.0041 & 0.2096 & 0.0175 & {-}0.0531 & 0.0200 & {-}0.8180
 \end{array}\hspace{-2pt}\right]${,} \\
1 & \setlength{\arraycolsep}{2pt}$\hspace{-3pt}\bS_1\hspace{-1pt}{=}\hspace{-2pt}\left[\hspace{-2pt}\begin{array}{rrrr}
    0.0907 & {-}0.0169 & \;\;0.0136 & \;\;0.0020 \\
    {-}0.0051 & 0.1569 & \;\;0.0303 & \;\;0.0165 \\
    {-}0.0121 &	0.0328 & \;\;0.0043 & \;\;0.0030 \\
    {-}0.0116 &	0.0304 & \;\;0.0039 & \;\;0.0028 \\
    0.0786 & 0.0159 & \;\;0.0179 & \;\;0.0050 \\
    {-}0.0167 & 0.1873 & \;\;0.0342 & \;\;0.0192
 \end{array}\hspace{-2pt}\right]$, \\
 & \setlength{\arraycolsep}{2pt}$\hspace{-3pt}\bR_1\hspace{-1pt}{=}\hspace{-2pt}\left[\hspace{-2pt}\begin{array}{rrrr}
    1.0049 &	{-}0.0231 &	{-}0.0037 &	{-}0.0023 \\
    {-}0.0231 &	1.1317 &	0.0219 &	0.0131 \\
    {-}0.0037 &	0.0219 &	1.0037 &	0.0022 \\
    {-}0.0023 &	0.0131 &	0.0022 &	1.0013
 \end{array}\hspace{-2pt}\right]$.  \\
\hline
2 & $\bQ_2{=}{-}2.9721{\times}10^{-4}\bI_4$, $\bS_2{=}{-}1.7712{\times}10^{-4}\bI_4$, $\bR_2{=}0.0713\bI_4$; \\
& The system is in cone$(-16.0991,14.9072)$\\
\hline
3 & $
    \setlength{\arraycolsep}{2pt}
    \bQ_3{=}\left[\hspace{-2pt}\begin{array}{rr}
        {-}0.4179 & 0.0354 \\
        0.0354 & {-}0.9938
 \end{array}\hspace{-2pt}\right]$, $\bS_3{=}{-}\frac{1}{2}\bQ_3$, $\bR_3{=}\bzero_2$ \\
\hline
\end{tabular}
\end{center}
\vspace*{-1.75\baselineskip} 
\end{table}

\section{CONCLUSIONS} \label{Chap:Conclusion}

This paper presents a novel method for analyzing the stability of multi-agent systems without requiring agents to share their internal dynamics. The proposed approach allows each agent to independently assess its dissipativity while ensuring overall system stability. 
The first algorithm applies \gls{admm} directly to solve the feasibility problem in a distributed manner.
The second algorithm significantly reduces the computational burden of the first via chordal decomposition.
Additionally, several computational enhancements are incorporated to improve efficiency.
The effectiveness of the proposed algorithms is demonstrated on large-scale systems, involving 2D swarm \glspl{uav}, as well as nonlinear and time-delayed robot manipulator networks.
The results show that both approaches successfully verify the system-wide stability of any dynamics, including nonlinear agents, without sharing its dynamics matrices.
Moreover, combining chordal decomposition with \gls{admm} proved particularly effective for denser and larger-scale networks. Future work will extend the distributed stability analysis framework to enable distributed controller syntheses, ensuring privacy by preventing the exchange of agents' private information.

\section*{References}
\bibliographystyle{IEEEtran}
\bibliography{MyBib}{}

@STRING{IEEE_J_AC         = "{IEEE} Tran. Aut. Ctrl."}

@STRING{J_FI         = "J. Franklin Inst."}

@STRING{IEEE_P         = "Proc. {IEEE}"}

@STRING{Eu_JC         = "Eu. J. Ctrl."}

@STRING{ECCE         = "ICPE 2023-ECCE Asia"}

@STRING{FT_ML         = "F. T. Mach."}

@STRING{ARC         = "Ann. Rev. Ctrl."}

@STRING{ACC         = "Proc. Amer. Ctrl. Conf."}

@STRING{IJRNC         = "I. J. Robu. Nonl. Ctrl."}

@STRING{ARXIV         = "ArXiv"}

@inproceedings{ingyu2025consensus,
  title={Consensus-Based Stabiligy Analysis of Multi-Agent Networks},
  author={Jang, Ingyu and LoCicero, Ethan J and Bridgeman, Leila},
  booktitle={2025 IEEE 64th Conference on Decision and Control (CDC)},
  year={2025},
  organization={IEEE},
  note={Under Review}
}

@article{usova2018scattering,
  title={Scattering-based stabilization of non-planar conic systems},
  author={Usova, Anastasiia A and Polushin, Ilia G and Patel, Rajni V},
  journal={Automatica},
  volume={93},
  pages={1--11},
  year={2018},
  publisher={Elsevier}
}

@article{hirche2008stabilizing,
  title={Stabilizing interconnection characterization for multi-agent systems with dissipative properties},
  author={Hirche, Sandra and Hara, Shinji},
  journal={IFAC Proceedings Volumes},
  volume={41},
  number={2},
  pages={1571--1577},
  year={2008},
  publisher={Elsevier}
}

@article{stan2007analysis,
  title={Analysis of interconnected oscillators by dissipativity theory},
  author={Stan, Guy-Bart and Sepulchre, Rodolphe},
  journal={IEEE Transactions on Automatic Control},
  volume={52},
  number={2},
  pages={256--270},
  year={2007},
  publisher={IEEE}
}

@article{meissen2015compositional,
  title={Compositional performance certification of interconnected systems using ADMM},
  author={Meissen, Chris and Lessard, Laurent and Arcak, Murat and Packard, Andrew K},
  journal={Automatica},
  volume={61},
  pages={55--63},
  year={2015},
  publisher={Elsevier}
}

@article{chopra2006passivity,
  title={Passivity-based control of multi-agent systems},
  author={Chopra, Nikhil and Spong, Mark W},
  journal={Advances in robot control: from everyday physics to human-like movements},
  pages={107--134},
  year={2006},
  publisher={Springer}
}

@article{chopra2012output,
  title={Output synchronization on strongly connected graphs},
  author={Chopra, Nikhil},
  journal={IEEE Transactions on Automatic Control},
  volume={57},
  number={11},
  pages={2896--2901},
  year={2012},
  publisher={IEEE}
}

@article{yao2009passive,
  title={Passive stability and synchronization of complex spatio-temporal switching networks with time delays},
  author={Yao, Jing and Wang, Hua O and Guan, Zhi-Hong and Xu, Weisheng},
  journal={Automatica},
  volume={45},
  number={7},
  pages={1721--1728},
  year={2009},
  publisher={Elsevier}
}

@article{miyano2020continuous,
  title={Continuous-time optimization dynamics mirroring ADMM architecture and passivity-based robustification against delays},
  author={Miyano, Tatsuya and Yamashita, Shunya and Hatanaka, Takeshi and Shibata, Kazuki and Jimbo, Tomohiko and Fujita, Masayuki},
  journal={IEEE Transactions on Control of Network Systems},
  volume={7},
  number={3},
  pages={1296--1307},
  year={2020},
  publisher={IEEE}
}

@article{rojas2009dynamic,
  title={Dynamic operability analysis of nonlinear process networks based on dissipativity},
  author={Rojas, Osvaldo J and Setiawan, Ridwan and Bao, Jie and Lee, Peter L},
  journal={AIChE Journal},
  volume={55},
  number={4},
  pages={963--982},
  year={2009},
  publisher={Wiley Online Library}
}

@article{hioe2014decentralized,
  title={Decentralized nonlinear control of process networks based on dissipativity—A Hamilton--Jacobi equation approach},
  author={Hioe, Denny and Hudon, Nicolas and Bao, Jie},
  journal={Journal of Process Control},
  volume={24},
  number={3},
  pages={172--187},
  year={2014},
  publisher={Elsevier}
}

@article{ghanbari2016large,
  title={Large-scale dissipative and passive control systems and the role of star and cyclic symmetries},
  author={Ghanbari, Vahideh and Wu, Po and Antsaklis, Panos J},
  journal={IEEE Transactions on Automatic Control},
  volume={61},
  number={11},
  pages={3676--3680},
  year={2016},
  publisher={IEEE}
}

@article{liu2023incremental,
  title={Incremental-dissipativity-based synchronization of interconnected systems with MIMO nonlinear operators},
  author={Liu, Tao and Hill, David J and Zhao, Jun},
  journal={IFAC-PapersOnLine},
  volume={56},
  number={2},
  pages={1853--1858},
  year={2023},
  publisher={Elsevier}
}

@inproceedings{anderson2011dynamical,
  title={Dynamical system decomposition using dissipation inequalities},
  author={Anderson, James and Teixeira, Andr{\'e} and Sandberg, Henrik and Papachristodoulou, Antonis},
  booktitle={2011 50th IEEE Conference on Decision and Control and European Control Conference},
  pages={211--216},
  year={2011},
  organization={IEEE}
}

@book{boyd2004convex,
  title={Convex optimization},
  author={Boyd, Stephen and Vandenberghe, Lieven},
  year={2004},
  publisher={Cambridge university press}
}

@book{diestel2024graph,
  title={Graph theory},
  author={Diestel, Reinhard},
  year={2024},
  publisher={Springer (print edition); Reinhard Diestel (eBooks)}
}

@article{yannakakis1981computing,
  title={Computing the minimum fill-in is NP-complete},
  author={Yannakakis, Mihalis},
  journal={SIAM Journal on Algebraic Discrete Methods},
  volume={2},
  number={1},
  pages={77--79},
  year={1981},
  publisher={SIAM}
}

@article{sztipanovits2011toward,
  title={Toward a science of cyber--physical system integration},
  author={Sztipanovits, Janos and Koutsoukos, Xenofon and Karsai, Gabor and Kottenstette, Nicholas and Antsaklis, Panos and Gupta, Vijay and Goodwine, Bill and Baras, John and Wang, Shige},
  journal=IEEE_P,
  volume={100},
  number={1},
  pages={29--44},
  year={2011},
  publisher={IEEE}
}

@book{vidyasagar1981input,
  title={Input-output analysis of large-scale interconnected systems: decomposition, well-posedness and stability},
  author={Vidyasagar, Mathukumalli},
  year={1981},
  publisher={Springer}
}

@article{zheng2021chordal,
  title={Chordal and factor-width decompositions for scalable semidefinite and polynomial optimization},
  author={Zheng, Yang and Fantuzzi, Giovanni and Papachristodoulou, Antonis},
  journal=ARC,
  volume={52},
  pages={243--279},
  year={2021},
  publisher={Elsevier}
}

@article{boyd2011distributed,
  title={Distributed optimization and statistical learning via the alternating direction method of multipliers},
  author={Boyd, Stephen and Parikh, Neal and Chu, Eric and Peleato, Borja and Eckstein, Jonathan and others},
  journal=FT_ML,
  volume={3},
  number={1},
  pages={1--122},
  year={2011},
  publisher={Now Publishers, Inc.}
}

@article{agarwal2020distributed,
  title={Distributed synthesis of local controllers for networked systems with arbitrary interconnection topologies},
  author={Agarwal, Etika and Sivaranjani, S and Gupta, Vijay and Antsaklis, Panos J},
  journal=IEEE_J_AC,
  volume={66},
  number={2},
  pages={683--698},
  year={2020},
  publisher={IEEE}
}

@book{nesterov1994interior,
  title={Interior-point polynomial algorithms in convex programming},
  author={Nesterov, Yurii and Nemirovskii, Arkadii},
  year={1994},
  publisher={SIAM}
}

@article{romer2019one,
  title={One-shot verification of dissipativity properties from input--output data},
  author={Romer, Anne and Berberich, Julian and K{\"o}hler, Johannes and Allg{\"o}wer, Frank},
  journal={IEEE Control Systems Letters},
  volume={3},
  number={3},
  pages={709--714},
  year={2019},
  publisher={IEEE}
}

@inproceedings{isik2023impact,
  title={Impact of Centralized and Distributed Control Structures on the Harmonic Stability of Modular Multilevel Converter Based on DQ Reference Frame Impedance Assessment},
  author={Isik, Semih and Bhattacharya, Subhashish},
  booktitle=ECCE,
  pages={811--816},
  year={2023},
  organization={IEEE}
}

@article{walsh2019interior,
  title={Interior-conic polytopic systems analysis and control},
  author={Walsh, Alex and Forbes, James Richard},
  journal=ARXIV,
  year={2019}
}

@book{lozano2013dissipative,
  title={Dissipative systems analysis and control: theory and applications},
  author={Lozano, Rogelio and Brogliato, Bernard and Egeland, Olav and Maschke, Bernhard},
  year={2013},
  publisher={Springer Science \& Business Media}
}

@techreport{gupta1996robust,
  title={Robust stabilization of uncertain systems based on energy dissipation concepts},
  author={Gupta, Sandeep},
  year={1996},
    institution={NASA}
}

@article{li2002delay,
  title={Delay-dependent dissipative control for linear time-delay systems},
  author={Li, Zhihu and Wang, Jingcheng and Shao, Huihe},
  journal=J_FI,
  volume={339},
  number={6-7},
  pages={529--542},
  year={2002},
  publisher={Elsevier}
}

@article{mahmoud2009dissipativity,
  title={Dissipativity analysis and synthesis of a class of nonlinear systems with time-varying delays},
  author={Mahmoud, Magdi S and Shi, Yan and Al-Sunni, Fouad M},
  journal=J_FI,
  volume={346},
  number={6},
  pages={570--592},
  year={2009},
  publisher={Elsevier}
}

@article{haddad2022dissipativity,
  title={Dissipativity theory for discrete-time nonlinear stochastic dynamical systems},
  author={Haddad, Wassim M and Lanchares, Manuel},
  journal=IJRNC,
  volume={32},
  number={11},
  pages={6293--6314},
  year={2022},
  publisher={Wiley Online Library}
}

@article{bridgeman2016conic,
  title={Conic bounds for systems subject to delays},
  author={Bridgeman, Leila Jasmine and Forbes, James Richard},
  journal={IEEE Transactions on Automatic Control},
  volume={62},
  number={4},
  pages={2006--2013},
  year={2016},
  publisher={IEEE}
}

@article{zames1966input1,
  title={On the input-output stability of time-varying nonlinear feedback systems part one: Conditions derived using concepts of loop gain, conicity, and positivity},
  author={Zames, George},
  journal=IEEE_J_AC,
  volume={11},
  number={2},
  pages={228--238},
  year={1966},
  publisher={IEEE}
}

@book{hatanaka2015passivity,
  title={Passivity-based control and estimation in networked robotics},
  author={Hatanaka, Takeshi and Chopra, Nikhil and Fujita, Masayuki and Spong, Mark W},
  year={2015},
  publisher={Springer}
}

@article{7286765,
  author={Bridgeman, Leila Jasmine and Forbes, James Richard},
  journal=IEEE_J_AC, 
  title={The Extended Conic Sector Theorem}, 
  year={2016},
  volume={61},
  number={7},
  pages={1931-1937},
  keywords={Xenon;Negative feedback;Stability criteria;Thermal stability;Numerical stability;Computer aided software engineering;Conic sectors;input-output stability;stability of nonlinear systems;robust control;Conic sectors;input-output stability;robust control;stability of nonlinear systems},
  doi={10.1109/TAC.2015.2484331}}

@article{jovanovic2016controller,
  title={Controller architectures: Tradeoffs between performance and structure},
  author={Jovanovi{\'c}, Mihailo R and Dhingra, Neil K},
  journal=Eu_JC,
  volume={30},
  pages={76--91},
  year={2016},
  publisher={Elsevier}
}

@inproceedings{lian2018sparsity,
  title={Sparsity-Constrained Mixed {$H_2/H_\infty$} Control},
  author={Lian, Feier and Chakrabortty, Aranya and Wu, Fen and Duel-Hallen, Alexandra},
  booktitle=ACC,
  pages={6253--6258},
  year={2018},
  organization={IEEE}
}

@inproceedings{fardad2011sparsity,
  title={Sparsity-promoting optimal control for a class of distributed systems},
  author={Fardad, Makan and Lin, Fu and Jovanovi{\'c}, Mihailo R},
  booktitle=ACC,
  pages={2050--2055},
  year={2011},
  organization={IEEE}
}

@article{lin2013design,
  title={Design of optimal sparse feedback gains via the alternating direction method of multipliers},
  author={Lin, Fu and Fardad, Makan and Jovanovi{\'c}, Mihailo R},
  journal=IEEE_J_AC,
  volume={58},
  number={9},
  pages={2426--2431},
  year={2013},
  publisher={IEEE}
}

@article{slotine1987adaptive,
  title={On the adaptive control of robot manipulators},
  author={Slotine, Jean-Jacques E and Li, Weiping},
  journal={The international journal of robotics research},
  volume={6},
  number={3},
  pages={49--59},
  year={1987},
  publisher={Sage Publications Sage CA: Thousand Oaks, CA}
}

@article{aps2019mosek,
  title={Mosek optimization toolbox for matlab},
  author={ApS, Mosek},
  journal={User’s Guide and Reference Manual, Version},
  volume={4},
  number={1},
  pages={116},
  year={2019},
  publisher={MOSEK}
}

@inproceedings{lofberg2004yalmip,
  title={YALMIP: A toolbox for modeling and optimization in MATLAB},
  author={Lofberg, Johan},
  booktitle={2004 IEEE international conference on robotics and automation (IEEE Cat. No. 04CH37508)},
  pages={284--289},
  year={2004},
  organization={IEEE}
}

\end{document}